\newcommand{\tao}[1]{{\color{black} #1}}
\newcommand{\update}[1]{{\color{black} #1}}
\DeclareMathOperator{\fix}{Fix}
\begin{document}
\title{Symmetries and Wigner representations of operational theories}

\author{Ties-A. Ohst \footnote{Author to whom any correspondence should be addressed}}
\email{ties-albrecht.ohst@uni-siegen.de}
\affiliation{Naturwissenschaftlich-Technische Fakult\"{a}t, Universit\"{a}t Siegen, Walter-Flex-Stra\ss e 3, 57068 Siegen, Germany}

\author{Martin Pl\'{a}vala}
\affiliation{Naturwissenschaftlich-Technische Fakult\"{a}t, Universit\"{a}t Siegen, Walter-Flex-Stra\ss e 3, 57068 Siegen, Germany}
\affiliation{Institut f\"{u}r Theoretische Physik, Leibniz Universit\"{a}t Hannover, Hannover, Germany}

\begin{abstract}
We develop the theory of Wigner representations \update{for general probabilistic theories (GPTs),} a large class of operational theories that include both classical and quantum theory. The Wigner representations that we introduce are a natural way to describe the theory in terms of some fixed observables; these observables are often picked to be position and momentum or spin observables. This allows us to introduce symmetries which transform the outcomes of the observables used to construct the Wigner representation; we obtain several results for when these symmetries are well defined or when they uniquely specify the Wigner representation.
\end{abstract}

\maketitle

\section{Introduction}
Wigner functions as introduced by Wigner \cite{Wigner-WignerFunctions} are used as a description of quantum systems equivalent to the one of wave functions and density matrices. The main idea behind Wigner functions is that we want to describe the state of the particle in terms of position and momentum. In classical theory this is possible and we either describe a localized particle by its position and momentum or, in more general settings, we describe a particle using a probability density in phase space. Due to complementarity of position and momentum, quantum particles cannot in general be described by a probability density in phase space, but they can be described by a pseudo-probability density, that is, a quantum particle can be described by a density function in phase space with the caveat that said density function is not positive everywhere. Despite that, these density functions, usually called Wigner functions, have many useful properties that offer simple treatment of several quantum systems, see \cite{CurtrightFarlieZachos-wignerFunctions} for a review. Moreover negativity of Wigner functions is in fact what enables quantum advantage as negativity of Wigner function was identified as resource for several tasks \cite{RaussendorfBrowneDelfosseOkayBermejovega-wignerNegative, Rundle2021, Ferrie2011}. Recently, generalizations of Wigner functions were used in foundations of quantum theory to investigate physical systems describing non-classical and non-quantum harmonic oscillator and hydrogen atom \cite{PlavalaKleinmann-oscillator, PlavalaKleinmann-hydrogen}. Wigner functions were also developed for discrete observables such as spin and angular momentum \cite{Wootters-discreteWigner,GibbonsHoffmanWootters-discreteWigner,Gross-discreteWigner,DeBrotaStacey-discreteWigner,SchwonnekWerner-discreteWigner}; this has the benefit that all of the considered vector spaces are finite-dimensional \update{and the Wigner functions are defined over a finite phase space labelled by discrete variables \cite{Wootters-discreteWigner, Gross-discreteWigner,Ferrie2011, Muoz2017}} which significantly simplifies the underlying mathematics. \update{In concrete physical problems beyond the mathematical formalism, Wigner functions in discrete variable systems have been applied in spin systems \cite{Davis2021} and their connection to quantum contextuality has been discussed \cite{Delfosse2017, Raussendorf2020, Booth2022, haferkamp2021equivalence}. Additionally, recently, a different class of descriptions of quantum theory via approximations by the so-called $\Lambda$-polytopes was introduced in order to simulate quantum computation \cite{zurel2020hidden,okay2021extremal,zurel2023simulating,zurel2024hidden,ipek2023degenerate}. While the $\Lambda$-polytopes are reminiscent of Wigner functions for discrete systems, they are nevertheless different objects. Moreover, the $\Lambda$-polytopes do not simulate the quantum theory in it's entirety, but only a subset of quantum theory which is sufficient for universal computation. Hence, in order to theoretically grasp $\Lambda$-polytopes and similar objects, we need to widen our understanding of the fundamental aspects of Wigner functions of theories beyond quantum theory.}

Operational theories are a large class of theories that aim to include both classical and quantum theory. Most notable of these are the general probabilistic theories \cite{JanottaHinrichsen-review,Lami-thesis,Mueller-review,Plavala-review,Leppajarvi-thesis} which were previously used as a framework to derive quantum theory \cite{Hardy-derivationQT,MasanesMuller-derivationQT,ChiribellaDArianoPerinotti-derivationQT,Wilce-derivationQT,Buffenoir-derivationQT}, but also to investigate plethora of other topics, see \cite{Plavala-review} for an in-depth review. General probabilistic theories are based on the simple assumption that the state space, that is the set of all states of the system, should be convex since convex combinations represent probabilistic mixtures of preparation procedures. Based on this assumption one can derive rich structure of measurements, channels, tensor products, and investigate information-theoretic properties or requirements of various tasks.

Here we show how to construct Wigner representations of finite-dimensional general probabilistic theories that mimic the properties of Wigner functions used in quantum theory. \update{In comparison to the famous Stratonovic-Weyl criteria \cite{stratonovich1957distributions,Brif1998} that are an axiomatic foundation of what a Wigner function in quantum theory should be, our construction is based on minimal assumptions that especially do not include the presence of Hilbert spaces or covariance properties.} We investigate only the finite-dimensional case due to its simplicity, it is quite easy to construct counter-examples and build intuition for Wigner representations in the finite-dimensional case. We also investigate transformations and symmetries of the Wigner representations since these also play a crucial role in the theory of Wigner functions. Our main results on the structure of Wigner representations are that we identify necessary and sufficient conditions for the Wigner representation to be positive and when the Wigner representation can be chosen to be equivalent to the description using the state space. We then introduce symmetries of Wigner representations and we identify when they induce symmetries on the state space. Our final result is that we identify conditions under which the Wigner representation is unique; some of these conditions, such as complementarity, are natural generalizations of the properties of position and momentum observables.

The paper is organized as follows: in Section~\ref{sec:observables} we introduce general probabilistic theories, in Section~\ref{sec:Wigner} we define Wigner representations for operational theories, in Section~\ref{sec:symmetry} we introduce symmetries of Wigner representations, and in Section~\ref{sec:groups} we look at groups of symmetries. In Section~\ref{sec:unique} we formulate and prove the conditions for uniqueness of Wigner representations. In Section~\ref{sec:QT} we investigates examples of Wigner representations of quantum theory, more specifically of position and momentum of qudits and of mutually unbiased measurements, and in Section~\ref{sec:boxworld} we investigate examples of Wigner representations of the Boxworld theory, which is an example of non-classical and non-quantum operational theory.

\section{State spaces and observables} \label{sec:observables}
One can start introducing general probabilistic theories either by starting from the state space or from the effect algebra, we opt for the first option as it is conceptually simpler. A state space $K$ is a compact convex subset of a real, finite-dimensional vector space. Every point $x \in K$ represents a possible state of the system we are describing by $K$, where a state is an equivalence class of preparation procedures, see \cite{Plavala-review} for an elaborate explanation. \tao{Extreme points $k\in K$ of a state space are defined by the property that all convex decompositions $k = \lambda k_1 + (1-\lambda) k_2$ with $\lambda \in (0,1)$ imply that $k_1 = k_2$ and the set of extreme points is denoted as $\ext(K)$.} The effect algebra is the set of affine functions mapping $K$ to $[0, 1]$, we require the functions to be affine in order to preserve the convex structure of $K$. That is, $E(K) = \{ f: K \to [0,1]: f(\lambda x + (1-\lambda)y) = \lambda f(x) + (1-\lambda) f(y), \forall x,y \in K, \forall \lambda \in [0, 1]\}$. An important example of an effect is the unit effect $1_K$ defined as $1_K(x) = 1$ for all $x \in K$, it is straightforward to check that $1_K \in E(K)$.

Let $K_1$ and $K_2$ be state spaces, a channel is an affine map $\Phi: K_1 \to K_2$, that is, for every $x, y \in K_1$ and $\lambda \in [0,1]$ we have $\Phi(x) \in K_2$ and $\Phi(\lambda x + (1-\lambda) y ) = \lambda \Phi(x) + (1-\lambda) \Phi(y)$. Note that in quantum theory, we usually request that channels are completely positive, i.e., that they map entangled states to entangled state. But since we are working only with single systems in hypothetical theories, we do not need to consider complete positivity of channels. An important example of a channel is the identity channel $\id_K: K \to K$ defined as $\id_K(x) = x$ for all $x \in K$.

In an operational theory modeled using a state space $K$, we usually describe the process of measuring an observable using an affine map $\mu: K \to \Prob(\Omega)$, where $\Omega$ is some finite set of possible measurement outcomes and $\Prob(\Omega)$ is the set of probability distributions on $\Omega$. Each element of $\Prob(\Omega)$ can be represented as vector of probabilities of the form $(p_a)_{a \in \Omega}$ where $p_a$ is the probability of outcome $a$ occurring, or, equivalently, measure of the set $\{a\}$. A special case of such a vector is the constant probability distribution $\tau = (\frac{1}{\abs{\Omega}})_{a \in \Omega}$ which assigns the same probability to all possible outcomes, here $\abs{\Omega}$ denotes the cardinality of $\Omega$. We thus have that $\mu: K \to \Prob(\Omega)$ maps a state $x \in K$ to a vector of probabilities $\mu(x) = (p_a(x))_{a \in \Omega}$, $p_a(x) \in \Rp$, and $\sum_{a \in \Omega} p_a(x) = 1$. We usually assume that $\mu$ preserves the convex structure of $K$, meaning that for $x,y \in K$ and $\lambda \in [0,1]$ we have $\mu(\lambda x + (1-\lambda) y) = \lambda \mu(x) + (1-\lambda) \mu(y)$. It follows that we must have $p_a(\lambda x + (1-\lambda) y) = \lambda p_a(x) + (1-\lambda) p_a(y)$ for all $a \in \Omega$. One can now deduce that $p_a: K \to \Rp$ is an affine function and that in fact $p_a \in E(K)$. This reconstructs the well known result that one can equivalently describe $\mu: K \to \Prob(\Omega)$ as a tuple of effects $(p_a)_{a \in \Omega} \subset  E(K)$. We thus have the following definition:
\begin{definition}
An observable $\oA$ on $K$ is a pair $(\Omega_{\oA}, \mu_{\oA})$, where $\Omega_{\oA} \subset \RR$ is a nonempty subset with finite cardinality and $\mu_{\oA}$ is an affine map $\mu_{\oA}: K \to \Prob(\Omega_{\oA})$.
\end{definition}
$\Omega_{\oA}$ is usually called the sample space of the observable $\oA$ and we can always assume that $\Omega_{\oA}$ is a measurable space equipped with the $\sigma$-algebra of all subsets of $\Omega_{\oA}$. One can in principle drop the assumption that $\Omega_{\oA}$ must have finite cardinality and simply assume that $\Omega_{\oA}$ is an arbitrary measurable space. We will for simplicity keep the assumption of finite cardinality of $\Omega_{\oA}$.

Let $\oA = (\Omega_{\oA}, \mu_{\oA})$ and $\oB = (\Omega_{\oB}, \mu_{\oB})$ be two observables on $K$ and consider a map $m: K \to \Prob(\Omega_{\oA} \times \Omega_{\oB})$, where $\Omega_{\oA} \times \Omega_{\oB}$ is the Cartesian product of $\Omega_{\oA}$ and $\Omega_{\oB}$. Let $x \in K$, then $m(x)$ can be expressed as a vector of probabilities indexed by $a \in \Omega_{\oA}$ and $b \in \Omega_{\oB}$, i.e., we have $m(x) = ( p_{ab}(x) )_{a \in \Omega_{\oA}, b \in \Omega_{\oB}}$. We can then define the marginals of $m$ by integrating over $\Omega_{\oA}$ or $\Omega_{\oB}$, that is, we can define $\int_{\Omega_{\oB}} m(x) \dd b = ( \sum_{b \in \Omega_{\oB}} p_{ab}(x) )_{a \in \Omega_{\oA}}$ and $\int_{\Omega_{\oA}} m(x) \dd a = ( \sum_{a \in \Omega_{\oA}} p_{ab}(x) )_{b \in \Omega_{\oB}}$. We then say that $\oA$ and $\oB$ are compatible if $m: K \to \Prob(\Omega_1 \times \Omega_2)$ can be chosen such that for all $x \in K$ we have
\begin{align}
\int_{\Omega_{\oB}} m(x) \dd b &= \mu_{\oA}(x), \label{eq:observables-marginal1} \\
\int_{\Omega_{\oA}} m(x) \dd a &= \mu_{\oB}(x). \label{eq:observables-marginal2}
\end{align}
Compatible observables are also called jointly measurable, and the operational interpretation is that although given a single copy of the input state $x \in K$ we can not measure both $\oA$ and $\oB$, but we can use $m$ from which we can obtain the statistics of $\oA$ and $\oB$ as marginals of $m$. Compatibility of observables was previous heavily investigated \cite{BuschHeinosaariSchultzStevens-compatibility,HeinosaariMiyaderaZiman-compatibility,FilippovHeinosaariLeppajarvi-compatibility,Jencova-compatibility,Kuramochi-simplex,BluhmJencovaNechita-compatibility} and is known to be connected to broadcasting \cite{BarnumBarrettleiferWilce-noBroadcasting,Heinosaari-compatBroadcast} and non-local phenomena such as steering \cite{UolaCostaNguyenGuhne-steering,HobanSainz-channels,CavalcantiSelbySikoraGalleySainz-witworld} and violations of Bell inequalities \cite{BrunnerCavalcantiPironioScaraniWehner-BellIneq,RossetBancalGisin-BellIneq}.

One can also investigate whether two observables provide enough information to uniquely specify which state was measured. Hence the following definition:
\begin{definition}
Let $K$ be a state space and let $\oA$, $\oB$ be observables on $K$. We say that $\oA$ and $\oB$ are jointly info-complete if for any $x,y \in K$ we have that $\mu_{\oA}(x) = \mu_{\oA}(y)$ and $\mu_{\oB}(x) = \mu_{\oB}(y)$ implies $x=y$.
\end{definition}

\section{Wigner representations of operational theories} \label{sec:Wigner}
Let $K$ be a state space and let $\oA$ and $\oB$ be observables describing physical properties of states. One can then ask how to describe states from $K$ in terms of values of the observables $\oA$ and $\oB$? For example, in classical mechanics, we describe almost exclusively all systems in terms of the position and momentum observables, while in thermodynamics we can describe the same system in terms of various thermodynamic variables. In a similar fashion, we aim to present a way to describe states of $K$ in terms of the observables $\oA$ and $\oB$. To do so, we first have to observe that in general, not every state $x \in K$ has a sharp value of $\oA$ or $\oB$, and so it follows that we can not simply describe a state $x \in K$ using a pair of points $(a,b) \in \Omega_{\oA} \times \Omega_{\oB}$. One can then propose to use the pair $(\mu_{\oA}(x), \mu_{\oB}(x)) \in \Prob(\Omega_{\oA}) \times \Prob(\Omega_{\oB})$ to describe $x \in K$, but this would imply that we have access to both $\mu_{\oA}(x)$ and $\mu_{\oB}(x)$ at the same time. This would suggest that we can measure both $\oA$ and $\oB$ jointly, which is not true in general, think for example of the position and momentum observables in quantum theory. Hence we opt for the following approach: let $\SProb(\Omega_{\oA} \times \Omega_{\oB})$ denote the set of signed probability measures on $\Omega_{\oA} \times \Omega_{\oB}$, that is, $\nu \in \SProb(\Omega_{\oA} \times \Omega_{\oB})$ is a vector of numbers $\nu = (q_{ab})_{a \in \Omega_{\oA}, b \in \Omega_{\oB}}$ such that $q_{ab} \in \RR$ for all $a \in \Omega_{\oA}$, $b \in \Omega_{\oB}$ and $\sum_{a \in \Omega_{\oA}, b \in \Omega_{\oB}} q_{ab} = 1$. For every $\nu \in \SProb(\Omega_{\oA} \times \Omega_{\oB})$ one can define the marginals by integrating either over $\Omega_{\oA}$ and $\Omega_{\oB}$ as $\int_{\Omega_{\oB}} \nu \dd b = ( \sum_{b \in \Omega_{\oB}} q_{ab} )_{a \in \Omega_{\oA}}$, $\int_{\Omega_{\oA}} \nu \dd a = ( \sum_{a \in \Omega_{\oA}} q_{ab} )_{b \in \Omega_{\oB}}$, and one can easily check that we have $\int_{\Omega_{\oA}} \nu \dd a \in \SProb(\Omega_{\oA})$ and $\int_{\Omega_{\oB}} \nu \dd b \in \SProb(\Omega_{\oB})$. We now define the Wigner representation of $K$ corresponding to observables $\oA$ and $\oB$ as follows:
\begin{definition}
Let $K$ be a state space and let $\oA$ and $\oB$ be observables. Then a Wigner representation of $K$ corresponding to $\oA$ and $\oB$ is an affine map $W: K \to \SProb(\Omega_{\oA} \times \Omega_{\oB})$ such that for every $x \in K$ we have
\begin{align}
\int_{\Omega_{\oB}} W(x) \dd b &= \mu_{\oA}(x), \label{eq:Wigner-marginalA} \\
\int_{\Omega_{\oA}} W(x) \dd a &= \mu_{\oB}(x). \label{eq:Wigner-marginalB}
\end{align}
\end{definition}
\tao{Wigner representations with respect to two observables can conveniently expressed as matrices
\begin{equation}
W(x) = 
\begin{pmatrix}
W(x)_{a_1,b_1} & \dots &  W(x)_{a_1 \abs{\Omega_{\oB}}} \\
\vdots & \ddots & \vdots \\
W(x)_{\abs{\Omega_{\oA}} b_1} & \dots & W(x)_{\abs{\Omega_{\oA}} \abs{\Omega_{\oB}}}
\end{pmatrix}
\end{equation}
where $\mu_{\oA}(x)$ corresponds to the vector of row sums of $W$ and $\mu_{\oB}(x)$ to the vector of column sums of $W$.
}

The Wigner representation is heavily inspired by Wigner functions \cite{Wigner-WignerFunctions,Groenewold-QM,Moyal-WignerFunctions,MuckenheimLudwigEtAl-WignerFunctions,HilleryConnellScullyWigner-WignerFunctions}. Let $\psi$ be a wave-function, then the corresponding Wigner function is a function on the phase space $\RR^2 = \{ (q,p): q,p \in \RR \}$ defined as
\begin{equation} \label{eq:Wigner-PQwigner}
W(q,p) = \frac{1}{h} \int_{\RR} \e^{-\frac{ipy}{\hbar}} \psi(q+ \frac{y}{2}) \psi^*(q-\frac{y}{2}) \dd y.
\end{equation}
It is well-known that the marginals of the Wigner functions are distributions of position and momentum, see \cite{Case-WignerFunctions}. Wigner functions were investigated before \cite{Baker-WignerFunctions,Cohen-WignerFunctions,Rosen-WignerFunctions}, they were used in many different contexts \cite{WeinbubFerry-WignerFunctions}, their analogs in finite-dimensional theories were investigated as well \cite{Wootters-discreteWigner,GibbonsHoffmanWootters-discreteWigner,Gross-discreteWigner,DeBrotaStacey-discreteWigner,SchwonnekWerner-discreteWigner} and they were also used in the context of operational theories \cite{Spekkens-epistricted,PlavalaKleinmann-oscillator,PlavalaKleinmann-hydrogen}. Our aim is to investigate  the properties of symmetries with respect to the Wigner representation, but before we do so, we will introduce several general properties of the Wigner representations.

First of all, note that one can easily extend the definition to $n$ observables $\oA_1, \ldots, \oA_n$ by defining the Wigner representation as an affine map $W: K \to \SProb(\Omega_{\oA_1} \times \ldots \times \Omega_{\oA_n})$ such that $\int_{\prod_{j \neq i} \Omega_{\oA_j}} W(x) = \mu_{\oA_i}(x)$ for all $i \in \{1, \ldots, n\}$ and where $\prod_{j \neq i} \Omega_{\oA_j} = \Omega_{\oA_1} \times \ldots \times \Omega_{\oA_{i-1}} \times \Omega_{\oA_{i+1}} \times \ldots \Omega_{\oA_n}$. Also note that in general $W$ is not unique.
\begin{proposition} \label{prop:Wigner-unique}
Let $K$ be a state space and let $\oA_1, \ldots, \oA_n$ be observables. Then the Wigner representation of $K$ corresponding to $\oA_1, \ldots, \oA_n$ is unique only if there is at most one observable $\oA_j$ such that $\abs{\Omega_j} > 1$, where $\abs{\Omega_j}$ is the cardinality of $\Omega_j$.
\end{proposition}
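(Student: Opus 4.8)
I would prove the contrapositive: whenever there exist two distinct indices $j_1 \neq j_2$ with $\abs{\Omega_{\oA_{j_1}}} \geq 2$ and $\abs{\Omega_{\oA_{j_2}}} \geq 2$, the Wigner representation corresponding to $\oA_1, \ldots, \oA_n$ fails to be unique. First, to confirm that uniqueness is a meaningful requirement at all, I would note that a Wigner representation always exists; for instance $W(x) = \sum_{i=1}^{n} \mu_{\oA_i}(x) \otimes \bigotimes_{k \neq i} \tau_k - (n-1)\bigotimes_{k=1}^{n} \tau_k$, where $\tau_k$ is the uniform distribution on $\Omega_{\oA_k}$ and $\otimes$ denotes the product measure, is affine in $x$, has entries summing to $1$, and satisfies all the required marginal conditions. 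This observation is a remark only; it is not needed for the implication itself.

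The core of the argument is to exhibit a single perturbation with vanishing marginals. I would fix distinct outcomes $a \neq a' \in \Omega_{\oA_{j_1}}$ and $b \neq b' \in \Omega_{\oA_{j_2}}$, fix an arbitrary outcome $c_k \in \Omega_{\oA_k}$ for each $k \notin \{j_1, j_2\}$, and define a signed measure $\delta$ on $\Omega_{\oA_1} \times \cdots \times \Omega_{\oA_n}$ as follows: $\delta$ takes value $+1$ on the tuple whose $j_1$-th entry is $a$, whose $j_2$-th entry is $b$, and whose $k$-th entry is $c_k$ for every remaining $k$; value $-1$ on the two tuples obtained from it by changing the $j_1$-th entry to $a'$ \emph{or} the $j_2$-th entry to $b'$; value $+1$ on the tuple obtained by changing \emph{both}; and value $0$ everywhere else. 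Then the entries of $\delta$ sum to $1-1-1+1 = 0$, and each of its marginals vanishes: marginalizing onto the $j_1$-th coordinate yields a vector supported on $\{a,a'\}$ with both entries $0$; likewise for the $j_2$-th coordinate; and marginalizing onto the $m$-th coordinate for any $m \notin \{j_1,j_2\}$ yields a vector supported on $\{c_m\}$ with entry $1-1-1+1 = 0$.

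Finally, given an arbitrary Wigner representation $W$ of $K$ corresponding to $\oA_1, \ldots, \oA_n$, I would set $W'(x) := W(x) + \delta$ and verify that $W'$ is again such a Wigner representation: it is affine, being an affine map plus a constant; it lands in $\SProb(\Omega_{\oA_1} \times \cdots \times \Omega_{\oA_n})$, since the entries of $W'(x)$ still sum to $1$ and $\SProb$ imposes no positivity constraint; and it satisfies the same marginal identities as $W$, because the marginal maps are linear and the marginals of $\delta$ are all zero. As $\delta \neq 0$ we obtain $W' \neq W$, so the Wigner representation is not unique, which is the desired contrapositive. The only real effort is notational — setting up the four-point perturbation unambiguously for $n$ observables and checking all $n$ marginal conditions rather than only the two "active" ones — and I anticipate no genuine obstacle, precisely because signed probability measures need not be positive, which is exactly what always leaves room for such a perturbation.
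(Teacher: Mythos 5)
Your argument is correct and is essentially the paper's own proof: both add to a given Wigner representation a fixed nonzero signed perturbation built from a $\{+1,-1,-1,+1\}$ pattern on two outcomes of each of the two ``large'' observables, observe that all marginals of the perturbation vanish, and conclude non-uniqueness. The only cosmetic differences are that you fix single outcomes $c_k$ in the remaining coordinates (the paper repeats the pattern over all of them) and that you add an unneeded but harmless existence remark.
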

\begin{proof}
Without loss of generality assume that we have $\abs{\Omega_{\oA_1}} > 1$ and $\abs{\Omega_{\oA_2}} > 1$. Let $W: K \to \SProb(\Omega_{\oA_1} \times \ldots \times \Omega_{\oA_n})$ be the respective Wigner representation and let $x \in K$, then $W(x)$ is a vector of numbers $W(x) = ( q_{a_1, \ldots, a_n}(x) )_{a_1 \in \Omega_{\oA_1}, \ldots, a_n \in \Omega_{\oA_n}}$. Let $\alpha_{11}, \alpha_{21} \in \Omega_{\oA_1}$, $\alpha_{11} \neq \alpha_{21}$, $\alpha_{12}, \alpha_{22} \in \Omega_{\oA_2}$, $\alpha_{12} \neq \alpha_{22}$ and let $r_{a_1, \ldots, a_n} \in \RR$, $a_1 \in \Omega_{\oA_1}, \ldots, a_n \in \Omega_{\oA_n}$, be numbers defined as $r_{\alpha_{11},\alpha_{12},a_3, \ldots, a_n} = r_{\alpha_{21},\alpha_{22},a_3, \ldots, a_n} = 1$ and $r_{\alpha_{11},\alpha_{22},a_3, \ldots, a_n} = r_{\alpha_{21},\alpha_{12},a_3, \ldots, a_n} = -1$ for all $a_3 \in \Omega_{\oA_3}, \ldots, a_n \in \Omega_{\oA_n}$ and for $\alpha_{11} \neq a_1 \neq \alpha_{21}$, $\alpha_{12} \neq a_2 \neq \alpha_{22}$ as $r_{a_1, \ldots, a_n} = 0$. The important point here is to observe that
\begin{equation}
\sum_{a_2 \in \Omega_{\oA_2}, \ldots, a_n \in \Omega_{\oA_n}} r_{a_1, \ldots, a_n}
= \sum_{a_1 \in \Omega_{\oA_1}, a_3 \in \Omega_{\oA_3}, \ldots, a_n \in \Omega_{\oA_n}} r_{a_1, \ldots, a_n}
= \ldots
= \sum_{a_1 \in \Omega_{\oA_1}, \ldots, a_{n-1} \in \Omega_{\oA_{n-1}}} r_{a_1, \ldots, a_n}
= 0,
\end{equation}
i.e., that all marginals are $0$. Now, define $W': K \to \SProb(\Omega_{\oA_1} \times \ldots \times \Omega_{\oA_n})$ as follows: let $x \in K$, then $W'(x) = ( q_{a_1, \ldots, a_n}(x) + r_{a_1, \ldots, a_n} )_{a_1 \in \Omega_{\oA_1}, \ldots, a_n \in \Omega_{\oA_n}}$. It follows that $W'$ has the same marginals as $W$ and so $W'$ is a Wigner representation of $K$ with respect to $\oA$ and $\oB$ different from $W$.
\end{proof}
From now on we will for simplicity restrict only to the case of two observables $\oA$ and $\oB$. We will now show that the non-uniqueness of Wigner representations is non-trivial, as some Wigner representations corresponding to fixed observables $\oA$ and $\oB$ can carry more information about the underlying state space than other Wigner representations corresponding to the same $\oA$ and $\oB$.
\begin{example}[Wigner representations of a cube]
Let $C$ be a cube in $\RR^3$ with vertices $s_{ijk}$, $i,j,k \in \{0, 1\}$; one can identify $s_{ijk}$ with a vector $(i,j,k)^\intercal \in \RR^3$. Let $f_x, f_y, f_z \in E(C)$ be given as $f_x(s_{ijk}) = i$, $f_y(s_{ijk}) = j$, $f_z(s_{ijk}) = k$ for all $i,j,k \in \{0, 1\}$, and let $\oA$ and $\oB$ be observables given as follows: $\Omega_{\oA} = \Omega_{\oB} = \{0, 1\}$ and for $s \in C$ we have $\mu_{\oA}(s) = ( f_x(s), 1-f_x(s) )$, $\mu_{\oB}(s) = ( f_y(s), 1-f_y(s) )$. One can easily see that any Wigner representation of $C$ with respect to $\oA$ and $\oB$ is of the form
\begin{equation}
W(s) =
\begin{pmatrix}
q(s) & f_x(s) - q(s) \\
f_y(s) - q(s) & 1 + q(s) - (f_x(s) + f_y(s))
\end{pmatrix}
\end{equation}
where $s \in C$ and $q: C \to \RR$ is an affine function. Note that we are using a matrix to represent $W$ only to make our notation compact and because then one reconstructs $\oA$ and $\oB$ by summing the rows and columns of $W$, respectively.

Choosing $q(s) = 0$ for all $s \in C$, we get the Wigner representation $W_0$ and we have $W_0(s_{ij0}) = W_0(s_{ij1})$ for all $i,j \in \{0,1\}$. Choosing $q = f_z$ we get the Wigner representation $W_z$ and one can see that for any $s, s' \in C$ we have that $W_z(s) = W_z(s')$ implies $f_x(s) = f_x(s')$, $f_y(s) = f_y(s')$, and $f_z(s) = f_z(s')$, which implies $s = s'$. It follows that $W_z$ carries more information about $C$ than $W_0$, despite the fact that they are both Wigner representations with respect to the same two observables $\oA$ and $\oB$.
\end{example}

We will now introduce two additional useful properties of Wigner representations.
\begin{definition}
Let $K$ be a state space, let $\oA$ and $\oB$ be observables and let $W$ be the corresponding Wigner representation. We say that $W$ is positive if $W(x) \in \Prob(\Omega_{\oA} \times \Omega_{\oB})$ for all $x \in K$ and we say that $W$ is faithful if for any $x,y \in K$ we have that $W(x) = W(y)$ implies $x = y$.
\end{definition}
The following result is immediate, a similar result was already observed by Wigner in \cite{Wigner-WignerFunctions} but also recently in \cite{GhaiSharmaGhos-wignerIncompatibility}. The intuition is that if $W$ is positive, then we can in fact treat $W$ as an observable with sample space $\Omega_{\oA} \times \Omega_{\oB}$ and the corresponding map defined for $x \in K$ as $m(x) = W(x)$. It then follows from \eqref{eq:Wigner-marginalA} and \eqref{eq:Wigner-marginalB} that $\mu_{\oA}$ and $\mu_{\oB}$ are marginals of $m$, implying that the measurements $\mu_{\oA}$ and $\mu_{\oB}$ are compatible.
\begin{theorem} \label{thm:Wigner-positive}
Let $K$ be a state space and let $\oA$ and $\oB$ be observables. Then there exists a positive Wigner representation of $K$ corresponding to $\oA$ and $\oB$ if and only if the measurements $\mu_{\oA}$ and $\mu_{\oB}$ are compatible.
\end{theorem}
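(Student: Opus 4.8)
The plan is to observe that the statement is essentially a reformulation of the definitions, the only real content being the identification of a positive Wigner representation with a joint observable for $\mu_{\oA}$ and $\mu_{\oB}$; this is exactly the intuition sketched in the paragraph preceding the theorem, and I would just make it precise.

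First I would prove the ``if'' direction. Assume $\mu_{\oA}$ and $\mu_{\oB}$ are compatible and let $m \colon K \to \Prob(\Omega_{\oA} \times \Omega_{\oB})$ be a joint observable realising \eqref{eq:observables-marginal1} and \eqref{eq:observables-marginal2}. Since $\Prob(\Omega_{\oA} \times \Omega_{\oB}) \subseteq \SProb(\Omega_{\oA} \times \Omega_{\oB})$, the map $W := m$ takes values in $\SProb(\Omega_{\oA} \times \Omega_{\oB})$, it is affine because $m$ is, and conditions \eqref{eq:observables-marginal1}--\eqref{eq:observables-marginal2} are literally \eqref{eq:Wigner-marginalA}--\eqref{eq:Wigner-marginalB}. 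Hence $W$ is a Wigner representation of $K$ corresponding to $\oA$ and $\oB$, and it is positive by construction since $W(x) = m(x) \in \Prob(\Omega_{\oA} \times \Omega_{\oB})$ for all $x \in K$.

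Conversely, for the ``only if'' direction, let $W$ be a positive Wigner representation. Then $W(x) \in \Prob(\Omega_{\oA} \times \Omega_{\oB})$ for every $x \in K$ and $W$ is affine, so setting $m := W$ gives an affine map $K \to \Prob(\Omega_{\oA} \times \Omega_{\oB})$, i.e.\ a candidate joint observable on $K$ with sample space $\Omega_{\oA} \times \Omega_{\oB}$. Conditions \eqref{eq:Wigner-marginalA} and \eqref{eq:Wigner-marginalB} say exactly that the two marginals of $m$ equal $\mu_{\oA}$ and $\mu_{\oB}$, which is precisely \eqref{eq:observables-marginal1} and \eqref{eq:observables-marginal2}. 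Therefore $\mu_{\oA}$ and $\mu_{\oB}$ are compatible.

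The argument has no real obstacle; the only point deserving a line of care is that an affine map $K \to \SProb(\Omega_{\oA} \times \Omega_{\oB})$ all of whose values are nonnegative carries the same data as an affine map $K \to \Prob(\Omega_{\oA} \times \Omega_{\oB})$, because every element of $\SProb(\Omega_{\oA} \times \Omega_{\oB})$ already has total mass $1$, so nonnegativity alone upgrades it to a genuine probability distribution, and affineness is inherited since $\Prob(\Omega_{\oA} \times \Omega_{\oB})$ is a convex subset of $\SProb(\Omega_{\oA} \times \Omega_{\oB})$. Everything else is just matching the two pairs of marginal conditions.
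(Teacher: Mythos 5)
Your proposal is correct and coincides with the paper's own proof: both directions are handled by the identification $W = m$, checking that the marginal conditions \eqref{eq:Wigner-marginalA}--\eqref{eq:Wigner-marginalB} and \eqref{eq:observables-marginal1}--\eqref{eq:observables-marginal2} are literally the same, with positivity being exactly the condition that $W$ lands in $\Prob(\Omega_{\oA}\times\Omega_{\oB})$. Your extra remark that nonnegativity plus the normalization built into $\SProb$ upgrades $W(x)$ to a genuine probability distribution is a fine clarification but does not change the argument.
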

\begin{proof}
The proof is straightforward: let $W$ be a positive Wigner representation of $K$ with respect to $\oA$ and $\oB$ and define a measurement $m: K \to \Prob(\Omega_{\oA}\times\Omega_{\oB})$ as $m(x) = W(x)$ for all $x \in K$. Then \eqref{eq:observables-marginal1} and \eqref{eq:observables-marginal2} follow from \eqref{eq:Wigner-marginalA} and \eqref{eq:Wigner-marginalB}. If $\mu_{\oA}$ and $\mu_{\oB}$ are compatible measurements, then let $m: K \to \Prob(\Omega_{\oA} \times \Omega_{\oB})$ be the respective joint measurement and define $W: K \to \SProb(\Omega_{\oA} \times \Omega_{\oB})$ as $W(x) = m(x)$. It follows that $W$ is a Wigner representation of $K$ with respect to $\oA$ and $\oB$, since \eqref{eq:Wigner-marginalA} and \eqref{eq:Wigner-marginalB} follow from \eqref{eq:observables-marginal1} and \eqref{eq:observables-marginal2}. Moreover it follows that $W$ is positive.
\end{proof}
\update{One can use the result of Theorem~\ref{thm:Wigner-positive} to connect positive Wigner representations to the existence of non-contextual hidden variable models by using known results that connect contextuality and incompatibility \cite{tavakoli2020measurement,plavala2022incompatibility}. In this way Theorem~\ref{thm:Wigner-positive} provides a direct generalization of the previous results connecting negativity of Wigner representations and contextuality that was already observed in quantum theory \cite{Delfosse2017,Raussendorf2020,Booth2022,haferkamp2021equivalence}.}

Faithful Wigner representation are of separate interest. We have already seen that in some cases for fixed observables $\oA$ and $\oB$ some Wigner representations are faithful while other are not. Here we show that this is the case unless $\oA$ and $\oB$ are together informationally-complete.
\begin{proposition} \label{prop:Wigner-allFaithful}
Let $K$ be a state space and let $\oA$ and $\oB$ be observables. Then all Wigner representation of $K$ corresponding to $\oA$ and $\oB$ are faithful if and only if $\oA$ and $\oB$ are jointly info-complete.
\end{proposition}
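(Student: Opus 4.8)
The plan is to prove the two implications separately and directly.

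The ``if'' direction is immediate. Suppose $\oA$ and $\oB$ are jointly info-complete and let $W$ be an arbitrary Wigner representation of $K$ with respect to $\oA$ and $\oB$. If $W(x) = W(y)$ for some $x, y \in K$, then applying the marginal conditions \eqref{eq:Wigner-marginalA} and \eqref{eq:Wigner-marginalB} gives $\mu_{\oA}(x) = \int_{\Omega_{\oB}} W(x) \dd b = \int_{\Omega_{\oB}} W(y) \dd b = \mu_{\oA}(y)$ and likewise $\mu_{\oB}(x) = \mu_{\oB}(y)$; joint info-completeness then forces $x = y$, so $W$ is faithful.

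For the ``only if'' direction I would argue by contraposition: assuming $\oA$ and $\oB$ are \emph{not} jointly info-complete, I would exhibit a single non-faithful Wigner representation. By assumption there are $x_0, y_0 \in K$ with $x_0 \neq y_0$ but $\mu_{\oA}(x_0) = \mu_{\oA}(y_0)$ and $\mu_{\oB}(x_0) = \mu_{\oB}(y_0)$, so it suffices to produce a Wigner representation that depends on $x$ only through the pair $(\mu_{\oA}(x), \mu_{\oB}(x))$. The naive product ansatz $W(x)_{ab} = \mu_{\oA}(x)_a\,\mu_{\oB}(x)_b$ does not work, since it fails to be affine in $x$; instead I would take the affine ``uniform-background'' map
\[ W_0(x)_{ab} = \frac{\mu_{\oA}(x)_a}{\abs{\Omega_{\oB}}} + \frac{\mu_{\oB}(x)_b}{\abs{\Omega_{\oA}}} - \frac{1}{\abs{\Omega_{\oA}}\,\abs{\Omega_{\oB}}}, \]
which is defined for any two observables regardless of whether they are compatible.

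The remaining verifications are routine. The map $W_0$ is affine because each summand is affine in $x$ (the first two terms being affine and the last one constant). One has $\sum_{a,b} W_0(x)_{ab} = 1$ because $\mu_{\oA}(x)$ and $\mu_{\oB}(x)$ are normalized, hence $W_0(x) \in \SProb(\Omega_{\oA}\times\Omega_{\oB})$. Summing over $b$ (respectively over $a$) recovers $\mu_{\oA}(x)$ (respectively $\mu_{\oB}(x)$), so $W_0$ satisfies \eqref{eq:Wigner-marginalA} and \eqref{eq:Wigner-marginalB} and is therefore a Wigner representation of $K$ with respect to $\oA$ and $\oB$. Finally, $W_0(x)$ is manifestly a function of $\mu_{\oA}(x)$ and $\mu_{\oB}(x)$ alone, so $W_0(x_0) = W_0(y_0)$ although $x_0 \neq y_0$, which shows that $W_0$ is not faithful. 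I do not expect any genuine obstacle here; the only point requiring care is to avoid the tempting but non-affine product construction and to verify affinity together with the normalization and marginal constraints for $W_0$.
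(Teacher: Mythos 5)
Your proof is correct and follows essentially the same strategy as the paper: the forward direction is read off from the marginal conditions \eqref{eq:Wigner-marginalA}--\eqref{eq:Wigner-marginalB}, and the converse is proved by contraposition by exhibiting a Wigner representation that depends on $x$ only through $(\mu_{\oA}(x),\mu_{\oB}(x))$. The only difference is the explicit witness --- the paper concentrates the effects $f_{a,\oA}$, $f_{b,\oB}$ on a fixed row and column, whereas you use the uniform-background formula $W_0(x)_{ab} = \mu_{\oA}(x)_a/\abs{\Omega_{\oB}} + \mu_{\oB}(x)_b/\abs{\Omega_{\oA}} - 1/(\abs{\Omega_{\oA}}\abs{\Omega_{\oB}})$ --- and both verifications go through for the same reason.
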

\begin{proof}
Let $x,y \in K$, then $W(x) = W(y)$ implies $\mu_{\oA}(x) = \mu_{\oA}(y)$ and $\mu_{\oB}(x) = \mu_{\oB}(y)$. It follows that if $\mu_{\oA}(x) = \mu_{\oA}(y)$ and $\mu_{\oB}(x) = \mu_{\oB}(y)$ implies $x=y$, then also $W(x) = W(y)$ implies $x=y$. Now assume that there are $x, y \in K$ such that $x \neq y$, but $\mu_{\oA}(x) = \mu_{\oA}(y)$ and $\mu_{\oB}(x) = \mu_{\oB}(y)$; we will now construct a Wigner representation $W$ such that $W(x) = W(y)$. Let $\mu_{\oA}(x) = ( f_{a, \oA}(x) )_{a \in \Omega_{\oA}}$ and $\mu_{\oB}(x) = ( f_{b, \oB}(x) )_{b \in \Omega_{\oB}}$ where $f_{a, \oA}, f_{b, \oB} \in E(K)$ for all $a \in \Omega_{\oA}$ and $b \in \Omega_{\oB}$ and $\sum_{a \in \Omega_{\oA}} f_{a, \oA} = \sum_{b \in \Omega_{\oB}} f_{b, \oB} = 1_K$. We then define $W: K \to \SProb(\Omega_{\oA}\times\Omega_{\oB})$ for $z \in K$ as $W(z) = (q_{ab}(z))_{a \in \Omega_{\oA}, b \in \Omega_{\oB}}$ where $q_{ab}$ are defined as follows: pick fixed elements $\alpha \in \Omega_{\oA}$ and $\beta \in \Omega_{\oB}$, then
\begin{equation}
q_{ab} =
\begin{cases}
f_{a, \oA} & a \neq \alpha, b = \beta \\
f_{b, \oB} & a = \alpha, b \neq \beta \\
1_K - \sum_{a \in \Omega_{\oA} \setminus \{\alpha\}} f_{a, \oA} - \sum_{b \in \Omega_{\oB} \setminus \{\beta\}} f_{b, \oB} & a = \alpha, b = \beta \\
0 & a \neq \alpha, b \neq \beta
\end{cases}
\end{equation}
It is straightforward to verify that $W$ is a Wigner representation of $K$ with respect to $\oA$ and $\oB$. Now let $x,y \in K$, $x \neq y$, be the states such that $\mu_{\oA}(x) = \mu_{\oA}(y)$ and $\mu_{\oB}(x) = \mu_{\oB}(y)$. It follows that then we must have $f_{a, \oA}(x) = f_{a, \oA}(y)$ and $f_{b, \oB}(x) = f_{b, \oB}(y)$ for all $a \in \Omega_{\oA}$ and $b \in \Omega_{\oB}$. But then $q_{ab}(x) = q_{ab}(y)$ and so $W(x) = W(y)$ and so $W$ is not faithful.
\end{proof}
When can we choose $W$ to be faithful? In some cases this ought to be impossible, but in other cases this has to be only matter of choosing the right Wigner representation. The following result investigates when we can choose $W$ to be faithful.
\begin{theorem}
Let $K$ be a state space, let $\oA$ and $\oB$ be observables. Let $x \in K$ and let the underlying measurements $\mu_{\oA}$ and $\mu_{\oB}$ be given as $\mu_{\oA}(x) = ( f_{a, \oA}(x) )_{a \in \Omega_{\oA}}$, $\mu_{\oB}(x) = ( f_{b, \oB}(x) )_{b \in \Omega_{\oB}}$, where $f_{a, \oA}, f_{b, \oB} \in E(K)$ for all $a \in \Omega_{\oA}$ and $b \in \Omega_{\oB}$ and $\sum_{a \in \Omega_{\oA}} f_{a, \oA} = \sum_{b \in \Omega_{\oB}} f_{b, \oB} = 1_K$. Then the Wigner representation $W$ can be chosen to be faithful if and only if
\begin{equation} \label{eq:Wigner-faitfulIneq}
(\abs{\Omega_{\oA}} - 1)(\abs{\Omega_{\oB}} - 1) \geq \dim(K) + 1 - \dim(\linspan(\{f_{a, \oA}\}_{a \in \Omega_{\oA}} \cup \{f_{b, \oB}\}_{b \in \Omega_{\oB}}))
\end{equation}
where $\abs{\Omega_{\oA}}$, $\abs{\Omega_{\oB}}$ is the cardinality of $\Omega_{\oA}$, $\Omega_{\oB}$ respectively.
\end{theorem}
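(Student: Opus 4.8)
The plan is first to describe explicitly the affine space of \emph{all} Wigner representations of $K$ with respect to $\oA$ and $\oB$, and then to translate faithfulness into the statement that a suitable family of affine functions separates the points of $K$. Write $n = \abs{\Omega_{\oA}}$, $m = \abs{\Omega_{\oB}}$, and let $Z \subseteq \RR^{n \times m}$ be the linear subspace of matrices all of whose row sums and column sums vanish; since the map sending a matrix to its vector of row and column sums has image of dimension $n+m-1$, we get $\dim(Z) = nm-(n+m-1) = (n-1)(m-1)$. I would fix one concrete Wigner representation, for instance $W_{\mathrm{m}}(x)_{ab} = \frac1m f_{a,\oA}(x) + \frac1n f_{b,\oB}(x) - \frac1{nm}$, and check directly that its entries are affine, that its row and column sums reproduce $\mu_{\oA}$ and $\mu_{\oB}$, and that $W_{\mathrm{m}}(x)$ lies in the orthogonal complement of $Z$ for every $x$. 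Since the difference of any two Wigner representations is an affine map $K \to \RR^{n\times m}$ with all marginals zero, every Wigner representation has the form $W = W_{\mathrm{m}} + R$ with $R \colon K \to Z$ an arbitrary affine map; fixing a basis $u_1,\dots,u_N$ of $Z$ with $N = (n-1)(m-1)$, this is the same as writing $R(x) = \sum_{i=1}^N g_i(x)\, u_i$ for an arbitrary $N$-tuple $g_1,\dots,g_N$ of affine functions on $K$.

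Next I would show that, for $W = W_{\mathrm{m}} + R$ as above, $W(x) = W(y)$ holds if and only if $f_{a,\oA}(x) = f_{a,\oA}(y)$ and $f_{b,\oB}(x) = f_{b,\oB}(y)$ for all $a,b$, and $g_i(x) = g_i(y)$ for all $i$. This is immediate from the orthogonal decomposition $\RR^{n\times m} = Z^{\perp} \oplus Z$: the $Z$-component of $W(x)$ equals $R(x)$, hence is governed by the $g_i$, while the $Z^{\perp}$-component equals $W_{\mathrm{m}}(x)$, which is determined by $(\mu_{\oA}(x),\mu_{\oB}(x))$ and conversely determines this pair through its row and column sums. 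Consequently $W$ is faithful precisely when the family $\mathcal{G} = \{f_{a,\oA}\}_{a\in\Omega_{\oA}} \cup \{f_{b,\oB}\}_{b\in\Omega_{\oB}} \cup \{g_1,\dots,g_N\}$ separates the points of $K$, and the question reduces to whether the $g_i$ can be chosen so that it does.

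The final step is a dimension count resting on the elementary fact that a family $S$ of affine functions on $K$ which contains the constant function $1_K$ separates the points of $K$ if and only if $\linspan(S)$ is the whole $(\dim(K)+1)$-dimensional space of affine functions on $K$: one direction is clear, and for the other one observes that if $\linspan(S)$ is proper then the linear parts of the functions in $S$ fail to span the dual of the direction space of $K$, so some nonzero direction $v$ is annihilated by all of them, and then $S$ cannot separate $x$ from $x+tv$ for a point $x$ in the relative interior of $K$ and small $t$. In our situation $1_K = \sum_{a} f_{a,\oA}$ already lies in $\linspan(\mathcal{G})$, so the fact applies. Writing $F = \linspan(\{f_{a,\oA}\}_a \cup \{f_{b,\oB}\}_b)$, a faithful Wigner representation exists iff the $g_i$ can be chosen so that $\linspan(F \cup \{g_1,\dots,g_N\})$ has dimension $\dim(K)+1$; since the $g_i$ are free, the maximal dimension attainable is $\min(\dim(F) + N,\ \dim(K)+1)$, so this is possible exactly when $\dim(F) + N \geq \dim(K)+1$, i.e.\ $(\abs{\Omega_{\oA}}-1)(\abs{\Omega_{\oB}}-1) \geq \dim(K) + 1 - \dim(F)$, which is the asserted inequality.

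The routine parts — verifying the properties of $W_{\mathrm{m}}$, the bookkeeping with the $Z^{\perp}\oplus Z$ splitting, and extending a linearly independent set to a basis — are mechanical. I expect the only step requiring genuine care to be the separation lemma: making precise that a family of affine functions containing $1_K$ separates the points of $K$ if and only if it linearly spans all affine functions on $K$, which is where the finite dimensionality of $K$ and the nonemptiness of its relative interior actually enter.
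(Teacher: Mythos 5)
Your proof is correct and follows essentially the same route as the paper: both reduce faithfulness to the requirement that the affine functions carried by $W$ span the full $(\dim(K)+1)$-dimensional space of affine functions on $K$, and both rest on the observation that Wigner representations with the prescribed marginals have exactly $(\abs{\Omega_{\oA}}-1)(\abs{\Omega_{\oB}}-1)$ freely choosable affine-function parameters, so the maximal attainable span has dimension $\dim(\linspan(\{f_{a,\oA}\}_{a}\cup\{f_{b,\oB}\}_{b}))+(\abs{\Omega_{\oA}}-1)(\abs{\Omega_{\oB}}-1)$. The differences are presentational: you parameterize the family via a reference representation $W_{\mathrm{m}}$ plus an affine map into the zero-marginal subspace $Z$ and you prove the separation lemma (points of $K$ are separated iff the functions span all affine functions) which the paper only cites, whereas the paper fixes a corner $(\alpha,\beta)$ and solves for the remaining entries explicitly.
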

\begin{proof}
Let $W$ be a Wigner representation of $K$ with respect to $\oA$ and $\oB$, then for $x \in K$ we have $W(x) = ( q_{ab}(x) )_{a \in \Omega_{\oA}, b \in \Omega_{\oB}}$ where $q_{ab}: K \to \RR$ are affine functions. It is clear that $W$ is faithful if and only if $q_{ab}$ span the vector space of affine functions on $K$, i.e., if
\begin{equation} \label{eq:Wigner-faithfulIneq-dimqab}
\dim(\linspan(\{q_{ab}\}_{a \in \Omega_{\oA}, b \in \Omega_{\oB}})) = \dim(K) + 1,
\end{equation}
see \cite{SingerStulpe-phaseSpace}. It follows from \eqref{eq:Wigner-marginalA} and \eqref{eq:Wigner-marginalB} that
\begin{align}
\sum_{b \in \Omega_{\oB}} q_{ab} &= f_{a, \oA}, \label{eq:Wigner-faithfulIneq-sumfa} \\
\sum_{a \in \Omega_{\oA}} q_{ab} &= f_{b, \oB}, \label{eq:Wigner-faithfulIneq-sumfb}
\end{align}
so $f_{a, \oA}, f_{b, \oB} \in \linspan(\{q_{ab}\}_{a \in \Omega_{\oA}, b \in \Omega_{\oB}})$ and we have $\linspan((\{f_{i, \oA}\}_{i=1}^n \cup \{f_{j, \oB}\}_{j=1}^m)) \subset \linspan(\{q_{ab}\}_{a \in \Omega_{\oA}, b \in \Omega_{\oB}})$. We will show that we can freely choose exactly $(\abs{\Omega_{\oA}} - 1)(\abs{\Omega_{\oB}} - 1)$ of $q_{ab}$ to maximize $\dim(\linspan(\{q_{ab}\}_{a \in \Omega_{\oA}, b \in \Omega_{\oB}}))$. Assume that is true, then we have $(\abs{\Omega_{\oA}} - 1)(\abs{\Omega_{\oB}} - 1) + \dim(\linspan(\{f_{a, \oA}\}_{a \in \Omega_{\oA}} \cup \{f_{b, \oB}\}_{b \in \Omega_{\oB}})) \geq \dim(\linspan(\{q_{ab}\}_{a \in \Omega_{\oA}, b \in \Omega_{\oB}}))$ and it follows that \eqref{eq:Wigner-faitfulIneq} is necessary and sufficient to satisfy \eqref{eq:Wigner-faithfulIneq-dimqab}.

To show that we can freely choose $(\abs{\Omega_{\oA}} - 1)(\abs{\Omega_{\oB}} - 1)$ of the functions $q_{ab}$, consider the following: pick fixed elements $\alpha \in \Omega_{\oA}$ and $\beta \in \Omega_{\oB}$, let $q'_{ab}: K \to \RR$, $a \in \Omega_{\oA}\setminus\{\alpha\}$, $b \in \Omega_{\oB}\setminus\{\beta\}$ be any affine functions and let
\begin{equation}
q_{ab} =
\begin{cases}
f_{a, \oA} - \sum_{b \in \Omega_{\oB} \setminus \{\beta\}} q'_{ab}  & a \neq \alpha, b = \beta \\
f_{b, \oB} - \sum_{a \in \Omega_{\oA} \setminus \{\alpha\}} q'_{ab} & a = \alpha, b \neq \beta \\
1_K - \sum_{a \in \Omega_{\oA} \setminus \{\alpha\}} f_{a, \oA} - \sum_{b \in \Omega_{\oB} \setminus \{\beta\}} f_{b, \oB} + \sum_{a \in \Omega_{\oA} \setminus \{\alpha\}, b \in \Omega_{\oB} \setminus \{\beta\}} q'_{ab} & a = \alpha, b = \beta \\
q'_{ab} & a \neq \alpha, b \neq \beta
\end{cases}
\end{equation}
It is straightforward to check that $q_{ab}$ form a Wigner representation of $K$ with respect to $\oA$ and $\oB$. Moreover it follows that we can freely choose $(\abs{\Omega_{\oA}} - 1)(\abs{\Omega_{\oB}} - 1)$ functions in the Wigner representation, these are exactly the $q'_{ab}$. It also follows that one function in every row and column is fixed by \eqref{eq:Wigner-faithfulIneq-sumfa} and \eqref{eq:Wigner-faithfulIneq-sumfb}, so one can freely choose at most $(\abs{\Omega_{\oA}} - 1)(\abs{\Omega_{\oB}} - 1)$ functions.
\end{proof}
Note that one can make $\abs{\Omega_{\oA}}$ and $\abs{\Omega_{\oB}}$ arbitrary large by simply adding outcomes that never occur. One can in this way always satisfy the inequality \eqref{eq:Wigner-faitfulIneq} and so for any observables $\oA$ and $\oB$ one can construct some faithful Wigner representation $W$, with the caveat that one might need to artificially enlarge $\Omega_{\oA}$ or $\Omega_{\oB}$. Moreover one can show that all faithful Wigner representations are isomorphic to each other.
\begin{proposition}
Let $W_1$ and $W_2$ be faithful Wigner representations of $K$ with respect to pairs of observables $\oA_1, \oB_1$ and $\oA_2, \oB_2$. Then $W_1$ and $W_2$ are isomorphic, i.e., there is a bijective map $\Lambda: \SProb(\Omega_{\oA_1} \times \Omega_{\oB_1}) \to \SProb(\Omega_{\oA_2} \times \Omega_{\oB_2})$ such that $\Lambda(W_1(x)) = W_2(x)$ for all $x \in K$.
\end{proposition}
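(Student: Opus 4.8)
The plan is to build the desired map in two stages: first produce a canonical \emph{affine} bijection between the two images $W_1(K)$ and $W_2(K)$ (this is where faithfulness does all the work), and then extend it rather crudely to a set-theoretic bijection of the two ambient signed-probability spaces, which is possible for cardinality reasons.

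For the first stage, I would start from the observation that a faithful Wigner representation is precisely an injective affine map $K\to\SProb(\Omega_{\oA}\times\Omega_{\oB})$. Hence $W_1$ restricts to an affine bijection $K\to W_1(K)$, and its inverse $W_1^{-1}\colon W_1(K)\to K$ is again affine: if $p=W_1(x)$ and $q=W_1(y)$ lie in $W_1(K)$ and $\lambda\in[0,1]$, then $\lambda p+(1-\lambda)q=W_1(\lambda x+(1-\lambda)y)$, so $W_1^{-1}$ preserves convex combinations. Now set $\Lambda_0:=W_2\circ W_1^{-1}\colon W_1(K)\to W_2(K)$. As a composition of affine maps $\Lambda_0$ is affine; it is onto $W_2(K)$ because $W_2$ is, and it is injective because $W_2$ is faithful; so $\Lambda_0$ is an affine bijection, and by construction $\Lambda_0(W_1(x))=W_2(x)$ for every $x\in K$. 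Both hypotheses enter here: faithfulness of $W_1$ makes $W_1^{-1}$ well defined, faithfulness of $W_2$ makes $\Lambda_0$ injective.

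For the second stage, I would extend $\Lambda_0$ to a bijection $\Lambda\colon\SProb(\Omega_{\oA_1}\times\Omega_{\oB_1})\to\SProb(\Omega_{\oA_2}\times\Omega_{\oB_2})$. Each $\SProb(\Omega_{\oA_i}\times\Omega_{\oB_i})$ is an affine hyperplane in $\RR^{\abs{\Omega_{\oA_i}}\abs{\Omega_{\oB_i}}}$; assuming it has positive dimension it is an unbounded real affine space of cardinality $2^{\aleph_0}$, while $W_i(K)$, being the continuous affine image of the compact set $K$, is a compact --- hence proper --- convex subset of it. Therefore the complement $S_i:=\SProb(\Omega_{\oA_i}\times\Omega_{\oB_i})\setminus W_i(K)$ is a nonempty open subset of a Euclidean space, so it too has cardinality $2^{\aleph_0}$. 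Choose any bijection $\Lambda_1\colon S_1\to S_2$ and define $\Lambda$ to be $\Lambda_0$ on $W_1(K)$ and $\Lambda_1$ on $S_1$; since the source splits as the disjoint union $W_1(K)\sqcup S_1$ and the target as $W_2(K)\sqcup S_2$, the map $\Lambda$ is a bijection with $\Lambda(W_1(x))=W_2(x)$ for all $x\in K$. The only degenerate case, where some $\Omega_{\oA_i}\times\Omega_{\oB_i}$ is a singleton, is handled directly: then faithfulness of $W_i$ forces $\dim(K)=0$ and the statement is immediate.

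The step I expect to be the real obstacle --- or at least the one requiring care --- is this last extension, because nothing in the hypotheses forces $\abs{\Omega_{\oA_1}}\abs{\Omega_{\oB_1}}=\abs{\Omega_{\oA_2}}\abs{\Omega_{\oB_2}}$ (only that both products are at least $\dim(K)+1$), so the two ambient spaces can genuinely have different dimensions and $\Lambda$ cannot be chosen affine, or even continuous, globally; it is affine exactly on the ``physical'' part $W_1(K)\to W_2(K)$. If a cleaner statement is wanted, the natural strengthening is that $W_2\circ W_1^{-1}$ extends to an affine bijection between the affine spans of $W_1(K)$ and $W_2(K)$, which both have dimension $\dim(K)$ precisely because $W_1$ and $W_2$ are faithful.
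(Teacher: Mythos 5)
Your first stage is exactly the paper's proof: the paper likewise notes that faithfulness makes $W_1^{-1}\colon W_1(K)\to K$ well defined and then simply sets $\Lambda = W_2\circ W_1^{-1}$, so the core argument coincides. Where you go further is in taking the statement's domain literally and extending $\Lambda$ to the full ambient spaces by a set-theoretic bijection of the complements; the paper does not do this (its $\Lambda$ is really only defined on $W_1(K)$), so your second stage, and your observation that no affine or even continuous global extension can exist when $\abs{\Omega_{\oA_1}}\abs{\Omega_{\oB_1}} \neq \abs{\Omega_{\oA_2}}\abs{\Omega_{\oB_2}}$, is a legitimate sharpening rather than a deviation, and your suggested cleaner formulation (an affine bijection between the affine spans, both of dimension $\dim(K)$) is the natural repair. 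One small caveat: your dismissal of the degenerate case is too quick. If, say, $\Omega_{\oA_1}\times\Omega_{\oB_1}$ is a singleton while $\Omega_{\oA_2}\times\Omega_{\oB_2}$ is not, then $K$ is indeed a single point and $W_2\circ W_1^{-1}\colon W_1(K)\to W_2(K)$ is trivially bijective, but no bijection between the two ambient signed-probability spaces exists at all; this is a defect of the proposition as literally stated (which the paper's own proof also silently ignores), not of your argument, whose essential content is precisely the affine bijection $W_2\circ W_1^{-1}$ intertwining the two representations.
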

\begin{proof}
If $W_1$ and $W_2$ are faithful, then the maps $W_1^{-1}: W_1(K) \to K$ and $W_2^{-1}: W_2(K) \to K$ are well-defined and we have $W_1^{-1} \circ W_1 = W_2^{-1} \circ W_2 = \id_K$. It then suffices to take $\Lambda = W_2 \circ W_1^{-1}$.
\end{proof}

\section{Symmetries} \label{sec:symmetry}
We can now discuss the connection between symmetries and Wigner representations of state spaces. In general, let $\Lambda: \SProb(\Omega_{\oA} \times \Omega_{\oB}) \to \SProb(\Omega_{\oA} \times \Omega_{\oB})$ be an affine map between the signed measures, then we say that $\Lambda$ is a symmetry of the Wigner representation $W$ of $K$ with respect to observables $\oA$ and $\oB$ if we have $\Lambda: W(K) \to W(K)$ i.e., if for every $x \in K$ we have $\Lambda(W(x)) \in W(K)$. It is obvious that if $\Lambda_1: \SProb(\Omega_{\oA} \times \Omega_{\oB}) \to \SProb(\Omega_{\oA} \times \Omega_{\oB})$ and $\Lambda_2: \SProb(\Omega_{\oA} \times \Omega_{\oB}) \to \SProb(\Omega_{\oA} \times \Omega_{\oB})$ are symmetries of $W$, then also their composition $\Lambda_2 \circ \Lambda_1$ is a symmetry of $W$. We will identify two distinct classes of symmetries of a Wigner representation, these are lifted symmetries and transported symmetries. We will start with lifted symmetries, that are native to the phase space, but to define them properly, we have to introduce the concept of lifted maps.

Let $X = \{1, \ldots, n \}$ be a finite set and let $\varphi: X \to X$ be a map. We will define a lifted map $\varphi^{\uparrow}: \SProb(X) \to \SProb(X)$ as follows: let $\nu \in \SProb(X)$ be given as $\nu = (q_1, \ldots, q_n)$, then we define $\varphi^\uparrow \nu \in \SProb(X)$ as
\begin{equation}
\varphi^\uparrow \nu = \left( \sum_{\varphi(j)=1} q_j, \sum_{\varphi(j)=2} q_j, \ldots, \sum_{\varphi(j)=n} q_j \right) = \left( \sum_{\varphi(j)=i} q_j \right)_{i \in X}
\end{equation}
To see that $\varphi^\uparrow \nu$ is normalized, simply observe that for every $j \in X$ there is exactly one $i \in X$ such that $\varphi(j) = i$. So we have $\sum_{i \in X} \sum_{\varphi(j)=i} q_j = \sum_{j \in X} q_j = 1$ and thus $\varphi^\uparrow \nu$ is properly normalized. Moreover we can prove that $\varphi^\uparrow: \SProb(X) \to \SProb(X)$ is an affine map, the proof is straightforward: let $q,r \in \SProb(X)$ and $\alpha \in \RR$, then we have
\begin{equation}
\begin{split}
\varphi^\uparrow (\alpha q + (1-\alpha) r) &= \left( \sum_{\varphi(j)=i} \alpha q_j + (1-\alpha) r_j \right)_{i \in X} \\
&= \alpha \left( \sum_{\varphi(j)=i} q_j \right)_{i \in X} + (1-\alpha) \left( \sum_{\varphi(j)=i} r_j \right)_{i \in X} \\
&= \alpha \varphi^\uparrow q + (1-\alpha) \varphi^\uparrow r
\end{split}
\end{equation}
We also have that if $\nu \in \Prob(X)$ then $\varphi^\uparrow \nu \in \Prob(X)$. Let $\nu \in \Prob(X)$, then $\nu = ( p_i )_{i \in X}$ where $p_i \geq 0$. But then also $\sum_{\varphi(j)=i} p_j \geq 0$ for all $i \in X$ and so $\varphi^\uparrow \nu \in \Prob(X)$. One can also observe that not every affine map $\Lambda: \SProb(X) \to \SProb(X)$ is lifted, for example by constructing a map such that $\Lambda \nu \notin \Prob(X)$ for some $\nu \in \Prob(X)$.

\begin{definition}
Let $K$ be a state space, let $\oA$ and $\oB$ be observables and let $W$ be a Wigner representation of $K$ with respect to $\oA$ and $\oB$. Let $\varphi: \Omega_{\oA}\times\Omega_{\oB} \to \Omega_{\oA}\times\Omega_{\oB}$ be a map. Then we say that the lifted map $\varphi^{\uparrow}: \SProb(\Omega_{\oA}\times\Omega_{\oB}) \to \SProb(\Omega_{\oA}\times\Omega_{\oB})$ is a lifted symmetry of $W$ if $\varphi^\uparrow: W(K) \to W(K)$.
\end{definition}
Lifted symmetries are symmetries that arise from the phase space $\Omega_{\oA} \times \Omega_{\oB}$. As we will see later in Section \ref{sec:boxworld}, not every map $\varphi: \Omega_{\oA}\times\Omega_{\oB} \to \Omega_{\oA}\times\Omega_{\oB}$ gives rise to a lifted symmetry, this is because there can be $x \in K$ such that $\varphi^\uparrow(W(x)) \notin W(K)$. Now we proceed to investigate the other type of symmetries, ones that originate from the state space $K$.
\begin{definition}
Let $K$ be a state space and let $\oA$ and $\oB$ be observables. Let $W$ be a Wigner representation of $K$ with respect to $\oA$ and $\oB$ and let $\Psi: \SProb(\Omega_{\oA} \times \Omega_{\oB}) \to \SProb(\Omega_{\oA} \times \Omega_{\oB})$ be a symmetry of $W$. We say that $\Psi$ is a transported symmetry if there is a channel $\Phi: K \to K$ such that $\Psi \circ W = W \circ \Phi$, where $\circ$ denotes the composition of maps. That is, $\Psi$ is a transported symmetry if the following diagram commutes:
\begin{equation} \label{eq:symmetry-transported-diagram}
\begin{tikzcd}
K \arrow[d, "\Phi"] \arrow[r, "W"] & \SProb(\Omega_{\oA} \times \Omega_{\oB}) \arrow[d, "\Psi"] \\
K \arrow[r, "W"] & \SProb(\Omega_{\oA} \times \Omega_{\oB})                
\end{tikzcd}
\end{equation}
\end{definition}

\begin{proposition} \label{prop:symmetry-fauthfulAllTransported}
Let $W$ be faithful Wigner representation of $K$ with respect to $\oA$ and $\oB$. Then every symmetry of $W$ is transported symmetry.
\end{proposition}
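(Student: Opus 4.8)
The plan is to exploit faithfulness to invert $W$ and then conjugate any given symmetry through this inverse. Since $W$ is faithful it is injective, so it restricts to a bijection from $K$ onto its image $W(K)$, and the inverse $W^{-1}: W(K) \to K$ is well-defined. The first step is to check that $W^{-1}$ is affine: if $u = W(x)$ and $v = W(y)$ for $x,y \in K$, then for $\lambda \in [0,1]$ affineness of $W$ gives $\lambda u + (1-\lambda)v = W(\lambda x + (1-\lambda)y)$, hence $W^{-1}(\lambda u + (1-\lambda)v) = \lambda x + (1-\lambda)y = \lambda W^{-1}(u) + (1-\lambda)W^{-1}(v)$. This is the only place where faithfulness is used essentially, and it is exactly the observation already invoked in the proof that faithful Wigner representations are isomorphic.

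Next, given a symmetry $\Psi: \SProb(\Omega_{\oA} \times \Omega_{\oB}) \to \SProb(\Omega_{\oA} \times \Omega_{\oB})$ of $W$, I would take as candidate channel $\Phi := W^{-1} \circ \Psi \circ W$. This composition is defined on all of $K$: for every $x \in K$ the defining property of a symmetry gives $\Psi(W(x)) \in W(K)$, so $W^{-1}$ may be applied to $\Psi(W(x))$. Then $\Phi$ is a composition of affine maps, hence affine, and it maps $K$ into $K$ by construction, so it is a channel $\Phi: K \to K$ in the sense used in the paper.

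Finally I would verify that diagram \eqref{eq:symmetry-transported-diagram} commutes, that is $\Psi \circ W = W \circ \Phi$. Indeed $W \circ \Phi = W \circ W^{-1} \circ \Psi \circ W = \Psi \circ W$, where the last equality uses that $\Psi \circ W$ takes values in $W(K)$, on which $W \circ W^{-1}$ acts as the identity. Hence $\Psi$ is a transported symmetry witnessed by $\Phi$. There is no substantial obstacle here; the only points requiring a little care are that $W^{-1}$ need only be defined on $W(K)$ rather than on all signed measures, and that a channel is merely an affine self-map of $K$, so one does not need to worry about whether $\Phi$ is surjective or bijective.
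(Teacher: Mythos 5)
Your proof is correct and follows essentially the same route as the paper's: invert $W$ on its image, set $\Phi = W^{-1} \circ \Psi \circ W$, and check that the diagram commutes via $W \circ W^{-1} = \mathrm{id}$ on $W(K)$. The only difference is that you spell out the affineness of $W^{-1}$ and the well-definedness of the composition, details the paper leaves implicit.
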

\begin{proof}
If $W$ is faithful, then there is a well-defined map $W^{-1}: W(K) \to K$ such that $W^{-1} \circ W = \id_K$. Let $\Psi: \SProb(\Omega_{\oA} \times \Omega_{\oB}) \to \SProb(\Omega_{\oA} \times \Omega_{\oB})$ be a symmetry of $W$, then $\Phi = W^{-1} \circ \Psi \circ W: K \to K$ is a channel. We have $W \circ \Phi = W \circ W^{-1} \circ \Psi \circ W = \Psi \circ W$ and so \eqref{eq:symmetry-transported-diagram} commutes.
\end{proof}

If $W$ is not faithful, then analogical results do not hold, as one can find channels $\Phi: K \to K$ with no corresponding transported symmetry $\Psi: \SProb(\Omega_{\oA} \times \Omega_{\oB}) \to \SProb(\Omega_{\oA} \times \Omega_{\oB})$, as well as symmetries $\Lambda: \SProb(\Omega_{\oA} \times \Omega_{\oB}) \to \SProb(\Omega_{\oA} \times \Omega_{\oB})$ with no corresponding channel $\Phi_\Lambda: K \to K$. This will be demonstrated in the following example.
\begin{example}[Trit to bit representation]
Consider the following example: let $S_3$ be a simplex, that is $S_3 = \conv(\{s_0, s_1, s_2\})$ where $s_0, s_1, s_2$ are affinely independent points. Let $f \in E(S_3)$ be given as $f(s_0) = 1$ and $f(s_1) = f(s_2) = 0$ and consider the observable $\oA$, $\Omega_{\oA} = \{0, 1\}$ and $\mu_{\oA}(s) = (f(s), 1-f(s))$ for $s \in S_3$. Then according to Proposition \ref{prop:Wigner-unique} there is a unique Wigner representation of $S_3$ with respect to $\oA$, given by the map $W: S_3 \to \Prob(\Omega_{\oA})$, $W(s) = \mu_{\oA}(s)$ for all $s \in S_3$. Consider the channel $\Phi_R: S_3 \to S_3$ given as $\Phi(s_0) = s_1$, $\Phi(s_1) = s_2$, $\Phi(s_2) = s_0$, and assume that there is a corresponding transported symmetry $\Psi: \Prob(\Omega_{\oA}) \to \Prob(\Omega_{\oA})$ such that \eqref{eq:symmetry-transported-diagram} commutes. Then we must have $W(s_0) = W(\Phi(s_2)) = \Psi(W(s_2)) = \Psi(W(s_1)) = W(\Phi(s_1)) = W(s_2)$ which is a contradiction.
\end{example}
The important message here is that we can use the phase space $\Omega_{\oA} \times \Omega_{\oB}$ and the Wigner representation $W$ to work with a transformation $\Lambda: \SProb(\Omega_{\oA} \times \Omega_{\oB}) \to \SProb(\Omega_{\oA} \times \Omega_{\oB})$ that is a symmetry of $W$, but not a transported symmetry, i.e., such that there is no channel $\Phi: K \to K$ for which \eqref{eq:symmetry-transported-diagram} would hold. This means that Wigner representations give us a tool to work with symmetries that have no equivalent in terms of the state space $K$. This may seem counter-intuitive at first, but consider the following: in classical thermodynamics, instead of using the microscopic description and describing exact position and momentum of every particle enclosed in a box, we describe the theory using macroscopic description with only handful of thermodynamic quantities, for example we can take volume, pressure and temperature. We will intuitively compare the macroscopic description to Wigner representation of the microscopic system. Then there are transformations which make sense in the macroscopic description, but not in the microscopic description. An easy example is a transformation that increases the volume or pressure; it is obvious what this transformation does in the macroscopic description, but there is not a unique corresponding microscopic transformation.

\section{Groups of transformations} \label{sec:groups}
More interesting than one transformation is a group of transformations. So let $G$ be a finite group, one can now either consider the case when $G$ has an action on $K$, or on $\Omega_{\oA} \times \Omega_{\oB}$, or on $\SProb(\Omega_{\oA} \times \Omega_{\oB})$. The first case is captured by the following definition.

\begin{definition}
Let $K$ be a state space and let $G$ be a group. We say that $G$ has action $\Phi$ on $K$ if for every $g \in G$ there is a channel $\Phi_g: K \to K$ such that for $g_1, g_2 \in K$ we have $\Phi_{g_1 g_2} = \Phi_{g_1} \circ \Phi_{g_2}$.
\end{definition}

In the second case we can lift the respective maps to $\SProb(\Omega_{\oA} \times \Omega_{\oB})$ because we want to consider the cases when the action of $G$ is derived from transforming the values of $\oA$ and $\oB$. $G$ has a right action on $\Omega_{\oA} \times \Omega_{\oB}$ if for every $g \in G$ there is a map $\varphi_g: \Omega_{\oA} \times \Omega_{\oB} \to \Omega_{\oA} \times \Omega_{\oB}$ such that $\varphi_g$ is an automorphism of $\Omega_{\oA} \times \Omega_{\oB}$, i.e., $\varphi_g$ and is both injective and surjective and for $g_1, g_2 \in G$ we have $\varphi_{g_2 g_1} = \varphi_{g_1} \circ \varphi_{g_2}$. Note that here we assume that $G$ has a right action, because when constructing the lifted maps $\varphi_g^\uparrow$ the action is going to get inverted into a left action.

\begin{definition}
Let $K$ be a state space with observables $\oA$ and $\oB$ and let $W$ be a Wigner representation of $K$ with respect to $\oA$ and $\oB$. Let $G$ be a group with action on $\Omega_{\oA} \times \Omega_{\oB}$. We say that $W$ is $G$-symmetric if for every $g \in G$ the map $\varphi_g^\uparrow$ is a symmetry of $W$.
\end{definition}

\begin{theorem} \label{thm:groups-actionOnK}
Let $K$ be a state space with observables $\oA$ and $\oB$ and let $W$ be a faithful Wigner representation of $K$ with respect to $\oA$ and $\oB$. Let $G$ be a group with action on $\Omega_{\oA} \times \Omega_{\oB}$ such that $W$ is $G$-symmetric, then $G$ has an action on $K$.
\end{theorem}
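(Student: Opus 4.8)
The plan is to combine Proposition~\ref{prop:symmetry-fauthfulAllTransported} with the fact that a faithful $W$ is invertible on its image. Since $W$ is $G$-symmetric, each lifted map $\varphi_g^\uparrow$ is a symmetry of $W$, hence maps $W(K)$ into $W(K)$. Because $W$ is faithful, Proposition~\ref{prop:symmetry-fauthfulAllTransported} tells us that each $\varphi_g^\uparrow$ is a transported symmetry, i.e.\ there is a channel $\Phi_g : K \to K$ with $\varphi_g^\uparrow \circ W = W \circ \Phi_g$. Concretely, $\Phi_g = W^{-1} \circ \varphi_g^\uparrow \circ W$, where $W^{-1} : W(K) \to K$ is the well-defined affine inverse of $W$ on its image. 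So the only thing left is to check that $g \mapsto \Phi_g$ is a group action, i.e.\ that $\Phi_{g_1 g_2} = \Phi_{g_1} \circ \Phi_{g_2}$, and here one must be careful about the order-reversal built into the lifting construction.

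First I would record the key functorial identity for lifted maps: for maps $\psi, \chi : \Omega_{\oA}\times\Omega_{\oB} \to \Omega_{\oA}\times\Omega_{\oB}$ one has $(\chi \circ \psi)^\uparrow = \psi^\uparrow \circ \chi^\uparrow$. This is immediate from the definition of $\varphi^\uparrow$ by pushing forward measures: summing $q_j$ over the fiber $(\chi\circ\psi)^{-1}(i)$ can be done in two stages, first grouping by the value of $\psi$ and then by the value of $\chi$. (This is precisely why the text insists on a \emph{right} action on $\Omega_{\oA}\times\Omega_{\oB}$: the lifting turns it into a left action on $\SProb(\Omega_{\oA}\times\Omega_{\oB})$.) Using the right-action law $\varphi_{g_2 g_1} = \varphi_{g_1}\circ\varphi_{g_2}$ together with this identity gives $\varphi_{g_2 g_1}^\uparrow = (\varphi_{g_1}\circ\varphi_{g_2})^\uparrow = \varphi_{g_2}^\uparrow \circ \varphi_{g_1}^\uparrow$, and swapping the roles of $g_1, g_2$ yields $\varphi_{g_1 g_2}^\uparrow = \varphi_{g_1}^\uparrow \circ \varphi_{g_2}^\uparrow$, which is exactly the left-action composition law we want on the lifted side.

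Next I would transport this composition law through $W$. From $\Phi_g = W^{-1}\circ\varphi_g^\uparrow\circ W$ and $W \circ W^{-1} = \id_{W(K)}$ we compute
\begin{equation}
\Phi_{g_1}\circ\Phi_{g_2}
= W^{-1}\circ\varphi_{g_1}^\uparrow\circ W \circ W^{-1}\circ\varphi_{g_2}^\uparrow\circ W
= W^{-1}\circ\varphi_{g_1}^\uparrow\circ\varphi_{g_2}^\uparrow\circ W
= W^{-1}\circ\varphi_{g_1 g_2}^\uparrow\circ W
= \Phi_{g_1 g_2},
\end{equation}
where the middle equality uses that $\varphi_{g_2}^\uparrow$ maps $W(K)$ into $W(K)$ so the composition $W\circ W^{-1}$ acts as the identity on the relevant domain. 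Each $\Phi_g$ is affine as a composition of affine maps and sends $K$ into $K$ since $W^{-1}$ has codomain $K$; hence $\Phi_g$ is a channel, and $g \mapsto \Phi_g$ is an action of $G$ on $K$.

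The main obstacle I anticipate is purely bookkeeping: making sure the identities $W\circ W^{-1} = \id_{W(K)}$ are only ever applied on $W(K)$, which is guaranteed because each $\varphi_g^\uparrow$ is a symmetry (so its image under the relevant composition lands in $W(K)$ before $W^{-1}$ is applied), and keeping the left/right-action orientation consistent through the contravariant lifting functor. There is no analytic difficulty — everything is finite-dimensional and affine — so once the order-reversal is handled correctly the proof is a short diagram chase.
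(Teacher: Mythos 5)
Your construction is the same as the paper's: define $\Phi_g = W^{-1}\circ\varphi_g^\uparrow\circ W$ using faithfulness of $W$ and verify the composition law by inserting $W\circ W^{-1}=\id$ on $W(K)$, which is legitimate precisely because each $\varphi_g^\uparrow$ maps $W(K)$ into $W(K)$. The one genuine error is the direction of your ``key functorial identity''. With the paper's definition $(\varphi^\uparrow\nu)_i=\sum_{\varphi(j)=i}q_j$, the lift is a pushforward of measures and is therefore \emph{covariant}: $(\chi\circ\psi)^\uparrow=\chi^\uparrow\circ\psi^\uparrow$, not $\psi^\uparrow\circ\chi^\uparrow$ as you state. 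Your own justification in fact proves covariance: summing over the fiber $(\chi\circ\psi)^{-1}(i)=\psi^{-1}\bigl(\chi^{-1}(i)\bigr)$ amounts to first summing $\nu$ over the $\psi$-fibers (i.e.\ applying $\psi^\uparrow$) and then grouping the result by $\chi$ (applying $\chi^\uparrow$ last), which is $\chi^\uparrow\circ\psi^\uparrow$. Contravariance would hold for the pullback of functions, not for the pushforward used here.

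As a consequence, from the right-action law $\varphi_{g_2 g_1}=\varphi_{g_1}\circ\varphi_{g_2}$ you obtain $\varphi_{g_2 g_1}^\uparrow=\varphi_{g_1}^\uparrow\circ\varphi_{g_2}^\uparrow$, so $g\mapsto\varphi_g^\uparrow$ is still an anti-homomorphism, and your displayed chain should conclude $\Phi_{g_1}\circ\Phi_{g_2}=\Phi_{g_2 g_1}$ rather than $\Phi_{g_1 g_2}$; these differ unless $G$ is abelian. This does not endanger the theorem: the family $\Phi_g$ of channels with $\Phi_{g_1}\circ\Phi_{g_2}=\Phi_{g_2 g_1}$ is still an action of $G$ on $K$ (after the standard relabeling $g\mapsto g^{-1}$ it becomes a homomorphism), and the composition law in this order is exactly what the paper's own proof derives, namely $\Phi_{g_2}\circ\Phi_{g_1}=\Phi_{g_1 g_2}$. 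So the fix is purely a matter of orientation: state the covariant identity, accept that the lifted family composes in the reversed order relative to the group product, and note that this still yields an action on $K$. The remaining ingredients of your argument --- affinity of $\Phi_g$, $\Phi_g(K)\subseteq K$, and the appeal to the transported-symmetry proposition (which is correct though redundant once the explicit formula for $\Phi_g$ is written down) --- are fine and match the paper.
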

\begin{proof}
The construction is as follows: since $W$ is faithful, $W^{-1}: W(K) \to K$ is well-defined and $W^{-1} \circ W = \id_K$. Let $g \in G$ and define $\Phi_g: K \to K$ as $\Phi_g = W^{-1} \circ \varphi_g^\uparrow \circ W$. It is obvious that $\Phi_g$ is affine, hence a channel, and for $g_1, g_2 \in G$ we have $\Phi_{g_2} \circ \Phi_{g_1} = W^{-1} \circ \varphi_{g_2}^\uparrow \circ W \circ W^{-1} \circ \varphi_{g_1}^\uparrow \circ W = W^{-1} \circ \varphi_{g_1 g_2}^\uparrow \circ W = \Phi_{g_1 g_2}$ and so $g \mapsto \Phi_g$ is an action of $G$ on $K$.
\end{proof}
This shows an interesting property of Wigner representations: faithful, $G$-symmetric Wigner representation induces an action of $G$ on $K$. On the other hand, if we want to have a way of transforming the values of $\oA$ and $\oB$ using the group $G$ and no suitable action of $G$ on $K$ exists, then we can still try to find a non-faithful $G$-symmetric Wigner representation $W$.

\section{Uniqueness of the Wigner representation} \label{sec:unique}
It is known that for the Wigner representation corresponding to position and momentum given by \eqref{eq:Wigner-PQwigner} one can find properties that uniquely specify this Wigner representations among all other possible Wigner representations \cite{Wigner-uniqness, OConnellWigner-uniqness, Gross-discreteWigner}. \update{Specifically for the Wigner functions in finite-dimensional systems, it has been shown that covariance under symmetry transformations given by Clifford unitaries is a defining property of Wigner functions \cite{Gross-discreteWigner}. To a large extent, this precise specification of Wigner functions for finite-dimensional position and momentum obtained in \cite{Gross-discreteWigner, Wootters-discreteWigner} builds on assuming an algebraic (finite) field structure for the discrete phase space. For concreteness, in the simplest setting of a single particle, the phase space is given by the finite field $\mathbb{Z}_d \times \mathbb{Z}_d$ where the factors represent position and momentum, respectively. In recent works, this point of view has led to a more comprehensive insight in the mechanisms that are specific to finite quantum mechanics \cite{Marchiolli2019, Ruzzi2006}.} 

\update{Our operational approach allows us to discover the role of covariance of Wigner representations in more general setting without any topological or geometrical assumptions on the phase space.
We will show that the uniqueness of the Wigner representation can be deduced by a set of assumptions on the observables.} 
To do so, we will have to introduce several additional notions, mainly complementary observables. We will also require that for the given observables $\oA$ and $\oB$ the maps $\mu_{\oA}$ and $\mu_{\oB}$ are surjective; this property was recently investigated in \cite{BuscemiKobayashiMinagawa-sharpness} as a form of sharpness of measurements.

 We will although start by introducing translation group $T_{\oA}$.
\begin{definition}
Let $\oA$ be an observable on a state space $K$. We denote by $T_{\oA}$ the translation group on $\Omega_{\oA}$, which is a group with action on $\Omega_{\oA}$ such that for every $a_1, a_2 \in \Omega_{\oA}$ there is some $g \in T_{\oA}$ such that $a_2 = g a_1$.
\end{definition}
The physical interpretation of the group $T_{\oA}$ is clear: it is the group that permutes the outcomes of the observable $\oA$. This is to be seen as adaptation of the group of position translations and momentum translations that play a crucial role in the proofs presented in \cite{Wigner-uniqness, OConnellWigner-uniqness}. Moreover we will use $T_{\oA} \times T_{\oB}$ to denote the Cartesian product of the respective translation groups.

Second we need to define complementary observables. Our definition is in line with \cite{FarkasKaniewskiNayak-MUM}; the intuition is that we call $\oA$ and $\oB$ complementary if for some state $x \in K$ some outcome of observable $\oA$ occurs with certainty, then $\mu_{\oB}(x)$ is completely mixed, i.e., then $\oB$ returns a random outcome. This is again reminiscent of position and momentum in quantum theory as the formal eigenstates of position have infinite uncertainty in momentum. We remind the reader that we use $\tau_{\oA} \in \Prob(\Omega_{\oA})$ to denote the constant probability distribution that is equal mixture of all of the extreme points of $\Prob(\Omega_{\oA})$.
\begin{definition}
Let $K$ be a state space and let $\oA$, $\oB$ be two observables on $K$. We say $\oA$ and $\oB$ are complementary if for every state $x \in K$ we have that whenever $\mu_{\oA}(x) \in \ext(\Prob(\Omega_{\oA}))$ then $\mu_{\oB}(x) = \tau_{\oB}$ and whenever $\mu_{\oA}(x) \in \ext(\Prob(\Omega_{\oA}))$ then $\mu_{\oB}(x) = \tau_{\oB}$.
\end{definition}

We now get the following result:
\begin{theorem} \label{thm:uniqueness}
Let $K$ be a state space, let $\oA$ and $\oB$ be jointly info-complete and complementary observables and assume that $\mu_{\oA}$ and $\mu_{\oB}$ are surjective. Then there is at most one Wigner representation $W$ of $\oA$ and $\oB$ such that for every $(g_1, g_2) \in T_{\oA} \times T_{\oB}$ the map $(g_1, g_2)^{\uparrow}: \SProb(\Omega_{\oA} \times \Omega_{\oB}) \to \SProb(\Omega_{\oA} \times \Omega_{\oB})$ is a lifted symmetry of $W$.
\end{theorem}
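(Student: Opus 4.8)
The plan is to use the lifted-symmetry requirement to force the value of any admissible Wigner representation on a distinguished family of states, and then to check that these states carry enough affine information to determine it everywhere. Call a Wigner representation $W$ of $\oA$ and $\oB$ \emph{admissible} if $(g_1,g_2)^{\uparrow}$ is a lifted symmetry of $W$ for every $(g_1,g_2) \in T_{\oA} \times T_{\oB}$; we must show there is at most one admissible $W$. For $\alpha \in \Omega_{\oA}$ write $\delta_\alpha \in \Prob(\Omega_{\oA})$ for the point mass at $\alpha$, an extreme point of $\Prob(\Omega_{\oA})$, and similarly $\delta_\beta \in \Prob(\Omega_{\oB})$ for $\beta \in \Omega_{\oB}$. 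Since $\oA$ and $\oB$ are jointly info-complete, the affine map $\iota := \mu_{\oA} \times \mu_{\oB} \colon K \to \Prob(\Omega_{\oA}) \times \Prob(\Omega_{\oB})$ is injective, so any Wigner representation $W$ factors uniquely as $W = \widetilde{W} \circ \iota$ with $\widetilde{W}$ affine on $\iota(K)$; it therefore suffices to pin down $\widetilde{W}$.

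First I would produce the distinguished states. For each $\alpha \in \Omega_{\oA}$, surjectivity of $\mu_{\oA}$ gives $x_\alpha \in K$ with $\mu_{\oA}(x_\alpha) = \delta_\alpha$; complementarity forces $\mu_{\oB}(x_\alpha) = \tau_{\oB}$, and joint info-completeness makes $x_\alpha$ the only state with these two marginals. Fix an admissible $W$ and $g_2 \in T_{\oB}$, and apply $(e,g_2)^{\uparrow}$ to $W(x_\alpha)$, where $e$ is the identity of $T_{\oA}$. The map $(e,g_2)$ permutes only the $\Omega_{\oB}$-coordinate, so $(e,g_2)^{\uparrow} W(x_\alpha)$ still has row sums $\delta_\alpha$ and has column sums $g_2^{\uparrow}\tau_{\oB} = \tau_{\oB}$; since $(e,g_2)^{\uparrow}$ is a lifted symmetry of $W$, this matrix equals $W(y)$ for some $y \in K$, and $\mu_{\oA}(y) = \delta_\alpha$, $\mu_{\oB}(y) = \tau_{\oB}$ force $y = x_\alpha$. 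Hence $W(x_\alpha)$ is fixed by $(e,g_2)^{\uparrow}$ for all $g_2 \in T_{\oB}$, and since the $T_{\oB}$-action on $\Omega_{\oB}$ is transitive, every row of $W(x_\alpha)$ is constant; combined with the row sums $\delta_\alpha$ this gives $W(x_\alpha)_{a,b} = \delta_\alpha(a)/\abs{\Omega_{\oB}}$, i.e.\ $W(x_\alpha) = \delta_\alpha \otimes \tau_{\oB}$. By the mirror-image argument — using the other half of the complementarity condition, surjectivity of $\mu_{\oB}$, transitivity of $T_{\oA}$, and the lifted symmetries $(g_1,e)^{\uparrow}$ — there are unique states $y_\beta \in K$ with $\mu_{\oB}(y_\beta) = \delta_\beta$ and $W(y_\beta) = \tau_{\oA} \otimes \delta_\beta$.

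Finally I would close the argument. If $W_1$ and $W_2$ are both admissible, then $W_1(x_\alpha) = W_2(x_\alpha) = \delta_\alpha \otimes \tau_{\oB}$ and $W_1(y_\beta) = W_2(y_\beta) = \tau_{\oA} \otimes \delta_\beta$ for all $\alpha, \beta$, so $\widetilde{W}_1$ and $\widetilde{W}_2$ agree on the set $S := \{(\delta_\alpha, \tau_{\oB})\}_{\alpha \in \Omega_{\oA}} \cup \{(\tau_{\oA}, \delta_\beta)\}_{\beta \in \Omega_{\oB}} \subseteq \iota(K)$. Since $(\tau_{\oA}, \tau_{\oB})$ is the average of the $(\delta_\alpha, \tau_{\oB})$, the identity $(\delta_\alpha, \tau_{\oB}) + (\tau_{\oA}, \delta_\beta) - (\tau_{\oA}, \tau_{\oB}) = (\delta_\alpha, \delta_\beta)$ exhibits each $(\delta_\alpha, \delta_\beta)$ as an affine combination of points of $S$; as the points $(\delta_\alpha, \delta_\beta)$ affinely span $\Prob(\Omega_{\oA}) \times \Prob(\Omega_{\oB}) \supseteq \iota(K)$, the affine hull of $S$ contains $\iota(K)$. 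Two affine maps agreeing on $S$ therefore agree on $\iota(K)$, so $\widetilde{W}_1 = \widetilde{W}_2$ and hence $W_1 = W_2$.

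The step I expect to be the main obstacle is the identification $W(x_\alpha) = \delta_\alpha \otimes \tau_{\oB}$: it is the only place where surjectivity, complementarity, and joint info-completeness all enter essentially — respectively to create $x_\alpha$, to control its $\oB$-marginal, and to conclude that the marginal-preserving image $(e,g_2)^{\uparrow} W(x_\alpha)$ can only be $W(x_\alpha)$ itself — and one must take care to verify that the lifted symmetry genuinely produces a value of $W$ with matching marginals before invoking info-completeness. Everything after that — propagating the statement to the mirror states $y_\beta$, and observing that the corner points $(\delta_\alpha, \tau_{\oB})$ and $(\tau_{\oA}, \delta_\beta)$, though not spanning on their own, generate the full vertex set $\{(\delta_\alpha, \delta_\beta)\}$ of the product of simplices through affine combinations — is routine manipulation of marginals and affine hulls.
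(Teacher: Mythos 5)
Your proof is correct, and although it shares its computational core with the paper's argument --- transitivity of a translation group forces one index of an invariant $W$-value to be constant, after which the marginal constraints fix the entries --- the route is genuinely different. The paper first invokes Proposition~\ref{prop:Wigner-allFaithful} and Theorem~\ref{thm:groups-actionOnK} to obtain an induced action of $T_{\oA}\times T_{\oB}$ on $K$, shows that two admissible representations agree on the full fixed-point sets $\fix(T_{\oA})$ and $\fix(T_{\oB})$ (the states with one uniform marginal), and then proves $K\subset\aff(\fix(T_{\oA})\cup\fix(T_{\oB}))$ by decomposing an arbitrary state as $\alpha x_1+(1-\alpha)x_2$ with a sufficiently large negative $\alpha$ --- a step that uses surjectivity and complementarity a second time. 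You instead pin $W$ down only on the finite family $x_\alpha,y_\beta$ of states with a point-mass marginal for one observable (hence, by complementarity, a uniform marginal for the other), obtain the explicit values $W(x_\alpha)=\delta_\alpha\otimes\tau_{\oB}$ and $W(y_\beta)=\tau_{\oA}\otimes\delta_\beta$, and replace the paper's affine decomposition inside $K$ by a spanning argument in the marginal space: $(\delta_\alpha,\tau_{\oB})$ and $(\tau_{\oA},\delta_\beta)$ affinely generate every vertex $(\delta_\alpha,\delta_\beta)$ of $\Prob(\Omega_{\oA})\times\Prob(\Omega_{\oB})$, and this pulls back to $K$ because joint info-completeness makes $\iota=\mu_{\oA}\times\mu_{\oB}$ injective, so every Wigner representation factors through it. Your version buys two things: it never needs the channels $\Phi_{(g_1,g_2)}$ to be constructed on all of $K$ (only the pointwise consequence of the lifted symmetry at $x_\alpha$ and $y_\beta$, identified via info-completeness), and it exhibits the unique candidate explicitly on an affinely spanning set, which makes the existence/non-existence discussion following the theorem concrete. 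The paper's version, in exchange, establishes the stronger intermediate fact that admissible representations agree on \emph{all} states with a uniform marginal. Both arguments are sound; yours is, if anything, the more economical one.
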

\begin{proof}
Since $\oA$ and $\oB$ are jointly info-complete it follows from Proposition~\ref{prop:Wigner-allFaithful} that any Wigner representation $W$ must be faithful. Then since $(g_1, g_2)^{\uparrow}: \SProb(\Omega_{\oA} \times \Omega_{\oB}) \to \SProb(\Omega_{\oA} \times \Omega_{\oB})$ is a lifted symmetry of $W$, as a result of Theorem~\ref{thm:groups-actionOnK} we get that the group $T_{\oA} \times T_{\oB}$ has an action $\Phi$ on $K$, moreover this action must satisfy $\mu_{\oA}(\Phi_{(g_1, g_2)}(x)) = (f_{a, \oA} (\Phi_{(g_1, g_2)}(x)))_{a \in \Omega_{\oA}} = (f_{g_1^{-1}(a), \oA} (x))_{a \in \Omega_{\oA}}$ and $\mu_{\oB}(\Phi_{(g_1, g_2)}(x)) = (f_{b, \oB} (\Phi_{(g_1, g_2)}(x)))_{b \in \Omega_{\oB}} = (f_{g_2^{-1}(b), \oB} (x))_{b \in \Omega_{\oA}}$, where $f_{a,\oA}, f_{b, \oB} \in E(K)$ are the respective functions corresponding to the maps $\mu_{\oA}$ and $\mu_{\oB}$.

Let $e_{\oA} \in T_{\oA}$ and $e_{\oB} \in T_{\oB}$ be the identity elements of the respective groups. We will denote $\fix(T_{\oA})$ and $\fix(T_{\oB})$ the sets of fixed points of all maps of the form $\Phi_{(g_1, e_{\oB})}$ and $\Phi_{(e_{\oA}, g_2)}$ respectively, i.e., $\fix(T_{\oA}) = \{ x \in K: \Phi_{(g_1, e_{\oB})}(x) = x, \forall g_1 \in T_{\oA} \}$. The proof works in two steps: we will first prove that $K \subset \aff(\fix(T_{\oA}) \cup \fix(T_{\oB}))$ and then we will use this result to show that then there is at most one covariant Wigner representation of $\oA$ and $\oB$.

In order to prove that $K \subset \aff(\fix(T_{\oA}) \cup \fix(T_{\oB}))$ we just need to find $x_1 \in \fix(T_{\oA})$ and $x_2 \in \fix(T_{\oB})$ such that $x = \alpha x_1 + (1-\alpha) x_2$ for some $\alpha \in \RR$. Since $\oA$, $\oB$ are surjective and complementary it follows that we can always find $x_1, x_2 \in K$ such that $\mu_{\oA}(x_1) = \tau_{\oA}$ and $\mu_{\oB}(x_1)$ is an arbitrary probability distribution and such that $\mu_{\oA}(x_2)$ is an arbitrary probability distribution and $\mu_{\oB}(x_2) = \tau_{\oB}$; this is straightforward to see as for example in case of $x_1$ we know that since $\oA$ is surjective there are states that get mapped to $\ext(\Prob(\Omega_{\oB}))$ and we can construct $x_1$ as the respective convex combinations of such states, $\mu_{\oA}(x_1) = \tau_{\oA}$ follows from complementarity of $\oA$ and $\oB$. Moreover due to joint info-completness of $\oA$ and $\oB$ we get that $x_1$ is uniquely specified by $\mu_{\oB}(x_1)$ and $x_2$ is uniquely specified by $\mu_{\oA}(x_2)$. Additionally it is straightforward to check that we have $x_1 \in \fix(T_{\oA})$ and $x_2 \in \fix(T_{\oB})$. So let $x \in K$ and assume that $\mu_{\oA}(x) \neq \tau_{\oA}$ and $\mu_{\oB}(x) \neq \tau_{\oB}$ since in these cases the proof is trivial. We want to choose $x_1 \in \fix(T_{\oA})$ and $x_2 \in \fix(T_{\oB})$ such that for some $\alpha \in \RR \setminus \{0, 1\}$ we have $\mu_{\oA}(x_2) = \frac{\mu_{\oA}(x) - \alpha \tau_{\oA}}{1-\alpha}$ and $\mu_{\oB}(x_1) = \frac{\mu_{\oB}(x) - (1-\alpha) \tau_{\oB}}{\alpha}$. This may seem easy at first, but we also have to enforce that $\mu_{\oA}(x_2) \in \Prob(\Omega_{\oA})$ and $\mu_{\oB}(x_1) \in \Prob(\Omega_{\oB})$. Let us, without the loss of generality, pick $\alpha < 0$, then we have $\frac{\mu_{\oA}(x) - \alpha \tau_{\oA}}{1-\alpha} = \frac{\mu_{\oA}(x) + \abs{\alpha} \tau_{\oA}}{1+\abs{\alpha}}$ and so $\mu_{\oA}(x_2) \in \Prob(\Omega_{\oA})$ follows. We also have $\frac{\mu_{\oB}(x) - (1-\alpha) \tau_{\oB}}{\alpha} = \frac{(1+\abs{\alpha}) \tau_{\oB} - \mu_{\oB}(x)}{\abs{\alpha}}$ which can be picked such that $\mu_{\oB}(x_1) \in \Prob(\Omega_{\oB})$ by choosing $\abs{\alpha}$ large enough. We thus have $x_1, x_2 \in K$ with the wanted properties. We now observe that we have $\mu_{\oA}(x) = \alpha \mu_{\oA}(x_1) + (1-\alpha) \mu_{\oA}(x_2)$ and $\mu_{\oB}(x) = \alpha \mu_{\oB}(x_1) + (1-\alpha) \mu_{\oB}(x_2)$, from joint info-completeness of $\oA$ and $\oB$ it follows that $x = \alpha x_1 + (1-\alpha) x_2$ as intended. We have thus proved that $K \subset \aff(\fix(T_{\oA}) \cup \fix(T_{\oB}))$.

Let us assume that $W_1, W_2: K \to \SProb(\Omega_{\oA} \times \Omega_{\oB})$ are two Wigner representations of $K$ corresponding to $\oA$ and $\oB$ for which for every $(g_1, g_2) \in T_{\oA} \times T_{\oB}$ the map $(g_1, g_2)^{\uparrow}: \SProb(\Omega_{\oA} \times \Omega_{\oB}) \to \SProb(\Omega_{\oA} \times \Omega_{\oB})$ is a lifted symmetry. Let $x_1 \in \fix(T_{\oA})$, we then have that $(g, e_{\oB})^{\uparrow}(W_1(x_1)) = W_1(x_1)$ and $(g, e_{\oB})^{\uparrow}(W_2(x_1)) = W_2(x_1)$ for any $g \in T_{\oA}$. We will now proceed by showing that $W_1(x_1) = W_2(x_1)$, so let us denote $(X_{ab})_{a \in \Omega_{\oA}, b \in \Omega_{\oB}} = W_1(x_1) - W_2(x_1)$, here $X_{ab}$ is the element of matrix that we get when we represent $W_1(x_1)$ and $W_2(x_1)$ as matrices and subtract them from each other. One can then easily prove that since $\int_{\Omega_{\oA}} W_1(x) \dd a = \int_{\Omega_{\oA}} W_2(x) \dd a$ and $\int_{\Omega_{\oB}} W_1(x) \dd b = \int_{\Omega_{\oB}} W_2(x) \dd b$ we get $\sum_{a \in \Omega_{\oA}} X_{ab} = \sum_{b \in \Omega_{\oB}} X_{ab} = 0$. Moreover from $(g, e_{\oB})^{\uparrow}(W_1(x)) = W_1(x)$ and $(g, e_{\oB})^{\uparrow}(W_2(x)) = W_2(x)$ we get $X_{ab} = X_{(g a) b}$ for all $g \in T_{\oA}$ which implies that $X_{a_1 b} = X_{a_2 b}$ for all $a_1, a_2 \in \Omega_{\oA}$ and $b \in \Omega_{\oB}$. So let $\alpha \in \Omega_{\oA}$ be any fixed element, then we get $0 = \sum_{a \in \Omega_{\oA}} X_{ab} = \abs{\Omega_{\oA}} X_{\alpha b}$ and thus $X_{\alpha b} = 0$ for all $b \in \Omega_{\oB}$. We thus have $W_1(x_1) = W_2(x_1)$ for all $x_1 \in \fix(T_{\oA})$.

One can, using analogical proof, show that for any $x_2 \in \fix(T_{\oB})$ we have $W_1(x_2) = W_2(x_2)$. This implies that we must have $W_1(x) = W_2(x)$ for any $x \in \aff(\fix(T_{\oA}) \cup \fix(T_{\oB}))$, but since, as we already proved, $K \subset \aff(\fix(T_{\oA}) \cup \fix(T_{\oB}))$, we get that $W_1(x) = W_2(x)$ for all $x \in K$.
\end{proof}

\tao{ \textbf{Remark:} This Wigner representation for which every permutation of the observable outcomes corresponds to a lifted symmetry is called a covariant Wigner representation. Theorem \ref{thm:uniqueness} states that a state space can admit at most one covariant Wigner representation for a pair of jointly info-complete, complementary and surjective observables. 
The statement is analog to the uniqueness for position and 
momentum observables \cite{OConnellWigner-uniqness} and the proof here essentially relies on the convex structure of the underlying state space and no state construction on Hilbert spaces has to be assumed. \update{Also the logic of our proof allows for a simple generalisation to Wigner representations of more than two observables for which the notions of surjectivity and joint info-completeness can be generalised in a straightforward manner.}

Considering the matrix representation $W_{ab}(x)$ of a Wigner representation $W(x)$ with respect to observables $\oA$ and $\oB$, covariance describes a memorable feature of those matrix elements. Let $g_1 \in T_{A}$ and $g_{2} \in T_{B}$ be any permutations of the elements in $\Omega_{A}$ and $\Omega_{B}$ and let $\Phi_{(g_{1},g_{2})}: K \rightarrow K$ be the corresponding channel describing the state transformation. Then, $W$ is covariant if for all $x\in K, a\in \Omega_{A}$ and $b \in \Omega_{B}$ it holds true that 
\begin{equation} \label{eq:covariance_matrix_elements}
W(\Phi_{(g_1,g_2)}(x))_{a b} = W(x)_{g_{1}^{-1}(a) g_{2}^{-1}(b)}.     
\end{equation}
In words, the covariant Wigner representation $W$ transforms in a suitable way under a permutation of outcomes. While Thm. \ref{thm:uniqueness} provides a uniqueness result for covariant Wigner representations, one might ask if an existence is always guaranteed. The answer to this question is negative. To see that, consider the situation in which the observables are kept fixed while parts of the state space are removed in a way such that convexity is preserved. This necessarily makes some of the channels $\Phi_{(g_{1},g_{2})}$ ill-defined so that not every permutation of observable outcomes can correspond to a lifted symmetry. The phenomenon of non-existence of covariant representations will also be illustrated within the next section when the covariant Wigner representation of a real quantum bit is discussed. 
}

\section{Example: Quantum theory} \label{sec:QT}
We will first review the Wigner representation of finite-dimensional quantum theory. Our presentation is based on \cite{Wootters-discreteWigner}. Let $\Ha$ be a complex Hilbert space such that $\dH = d$. Further, let $\bound_h(\Ha)$ denote the set of hermitian operators on $\Ha$. $A \in \bound_h(\Ha)$ is positive semidefinite if $\bra{\psi}A\ket{\psi} \geq 0$ for all $\ket{\psi} \in \Ha$. Let $\Tr(A)$ denote the trace of $A$. We will use $\sigma_x$, $\sigma_y$ and $\sigma_z$ to denote the respective Pauli matrices.

The state space in quantum theory is the set of density matrices $\dens(\Ha) = \{ \rho \in \bound_h(\Ha): \rho \geq 0, \Tr(\rho) = 1 \}$ and the corresponding effect algebra is $\effect(\Ha) = \{ E \in \bound_h(\Ha): 0 \leq E \leq \I \}$, so that every measurement is given by a POVM, that is by a tuple of operators $M_1, \ldots, M_n$ such that $M_i \in \effect(\Ha)$ for all $i \in \{1, \ldots, n\}$ and $\sum_{i=1}^n M_i = \I$. Every POVM $M = \{M_i\}$ induces an observable $q_M$ in the general sense via the map $\mu_M:\dens(\Ha) \rightarrow \Prob(n): \rho \mapsto \{\Tr(M_{i} \rho)\}_{i=1}^{n}$ and the outcome set $\Omega_{M} = \{1,\dots,n\}$. A simple example for a POVM is given by the set $\{\ket{a_i}\bra{a_i}\}$ when $\ket{a_i}$ is an orthonormal basis of $\Ha$.

\subsection{Binary position and momentum}
We start by considering qubits for which $d=2$.
Let $\oA$ and $\oB$ be observables given as follows: let $\Omega_{\oA} = \Omega_{\oB} = \Omega = \{ 0, 1\}$ and let $\{\frac{1}{2}(\I + \sigma_z), \frac{1}{2}(\I - \sigma_z)\}$ be the POVM corresponding to $\mu_{\oA}$ and $\{\frac{1}{2}(\I + \sigma_x), \frac{1}{2}(\I - \sigma_x)\}$ be the POVM corresponding to $\mu_{\oB}$. In other words, let $\rho \in \dens(\Ha)$, then we have $\mu_{\oA}(\rho) = ( \frac{1}{2}(1 + \Tr(\rho \sigma_z)), \frac{1}{2}(1 - \Tr(\rho \sigma_z)) )$, $\mu_{\oB}(\rho) = ( \frac{1}{2}(1 + \Tr(\rho \sigma_x)), \frac{1}{2}(\Tr(1 - \Tr(\rho \sigma_x)) )$; these observables are complementary and can hence be seen as a version of binary position and momentum observables. On the other hand, they are not jointly info-complete. 
We can now define one of the many possible Wigner representation of $\dens(\Ha)$ with respect to $\oA$, $\oB$ using the operators $W_{ij} \in B_h(\Ha)$ defined for $i,j \in \Omega$ as follows: $W_{00} = \frac{1}{4} ( \sigma_x + \sigma_y + \sigma_z + \I )$, $W_{01} = \frac{1}{4} ( - \sigma_x - \sigma_y + \sigma_z + \I )$, $W_{10} = \frac{1}{4} ( \sigma_x - \sigma_y - \sigma_z + \I )$, $W_{11} = \frac{1}{4} ( - \sigma_x + \sigma_y - \sigma_z + \I )$. Then
\begin{equation} \label{eq:quantum_wigner}
W(\rho) =
\begin{pmatrix}
\Tr(\rho W_{00}) & \Tr(\rho W_{01}) \\
\Tr(\rho W_{10}) & \Tr(\rho W_{11})
\end{pmatrix},
\end{equation}
where we used a matrix to represent the product structure of $\Omega \times \Omega$. We have $\int_{\Omega_{\oB}} W(\rho) \dd b = ( \Tr(\rho (W_{00} + W_{01})), \Tr(\rho (W_{10} + W_{11})) )$, $\int_{\Omega_{\oA}} W(\rho) \dd a = ( \Tr(\rho (W_{00} + W_{10})), \Tr(\rho (W_{01} + W_{11})) )$, and one can easily verify that $\oA$ and $\oB$ are marginals of $W$, i.e., that \eqref{eq:Wigner-marginalA} and \eqref{eq:Wigner-marginalB} are satisfied.

The Wigner representation $W$ is not positive: one can either verify that the measurements $\mu_{\oA}$ and $\mu_{\oB}$ are not compatible, this is in fact straightforward to do as the corresponding POVMs are projection-valued measures and for projection-valued measures compatibility is equivalent with commutativity \cite{HeinosaariReitznerStano-compatibility}. But to provide a direct proof, let $\rho_n = \frac{1}{2} ( \I + \dfrac{1}{\sqrt3} (\sigma_x + \sigma_y + \sigma_z) )$ then $\Tr(\rho_n W_{00}) = \frac{1 - \sqrt{3}}{2} \approx -0,366 < 0$. The Wigner representation $W$ is faithful; observe that we have
\begin{align}
W_{00} + W_{10} - ( W_{01} + W_{11} ) &= \sigma_x, \label{eq:QT-Wsigmax} \\
W_{00} + W_{11} - ( W_{01} + W_{10} ) &= \sigma_y, \label{eq:QT-Wsigmay} \\
W_{00} + W_{01} - ( W_{10} + W_{11} ) &= \sigma_z. \label{eq:QT-Wsigmaz}
\end{align}
Let $\rho_1, \rho_2 \in \dens(\Ha)$, then $W(\rho_1) = W(\rho_2)$ implies $\Tr(\rho_1 W_{ij}) = \Tr(\rho_2 W_{ij})$ for all $i,j \in \Omega$. Using \eqref{eq:QT-Wsigmax} - \eqref{eq:QT-Wsigmaz} we get $\Tr(\rho_1 \sigma_x) = \Tr(\rho_2 \sigma_x)$, $\Tr(\rho_1 \sigma_y) = \Tr(\rho_2 \sigma_y)$, and $\Tr(\rho_1 \sigma_z) = \Tr(\rho_2 \sigma_z)$. It follows that $\rho_1 = \rho_2$ and so $W$ is faithful.

Now we examine the symmetry properties of this Wigner representation. Let $T_4$ be the group of permutations of the phase space $\Omega \times \Omega = \{ 00, 01, 10, 11 \}$. One can easily see that this group is generated by the elements $\varphi_{01}$, $\varphi_{10}$, and $\varphi_{11}$, where $\varphi_{ij}$ is defined as the elements that swap $00$ with $ij$, for $i,j \in \Omega$, e.g., $\varphi_{01}(00) = 01$, 
$\varphi_{01}(01) = 00$, $\varphi_{01}(10) = 10$, $\varphi_{01}(11) = 11$. To show that the lifted map $\varphi_{01}^\uparrow$ is a symmetry of $W$, let $\rho \in \dens(\Ha)$ be given as $\rho = \frac{1}{2}(\I + r_x \sigma_x + r_y \sigma_y + r_z \sigma_z)$ for some $r_x, r_y, r_z \in \RR$ such that $r_x^2 + r_y^2 + r_z^2 \leq 1$. Let $\rho'$ be given as $\rho' = \frac{1}{2}(\I - r_y \sigma_x - r_x \sigma_y + r_z \sigma_z)$, we then have $\varphi_{01}^\uparrow (W(\rho)) = W(\rho')$ and so $\varphi_{01}^\uparrow$ is a symmetry of $W$. Moreover, by construction, $\varphi_{01}^\uparrow$ is a lifted symmetry. To show that also $\varphi_{10}^\uparrow$ is a symmetry of $W$, take $\rho' = \frac{1}{2}(\I + r_x \sigma_x - r_z \sigma_y - r_y \sigma_z)$ and to show that $\varphi_{11}^\uparrow$ is a symmetry of $W$, take $\rho' = \frac{1}{2}(\I - r_z \sigma_x + r_y \sigma_y - r_x \sigma_z)$. It then follows that $W$ is $T_4$-symmetric, since $\varphi_{01}$, $\varphi_{10}$ and $\varphi_{11}$ are generators of $T_4$. \update{Importantly, the symmetry of $W$ is no contradiction to the result that there exist no Clifford-covariant Wigner function in finite-dimensional quantum systems \cite{Zhu2016, Raussendorf2023}. The set of allowed symmetry transformations on states is less restrictive than in the usual description for quantum systems by allowing all affine linear transformations. Concretely, the transformation $\rho \mapsto \rho'$ is a reflection in the Bloch ball and hence not unitary.}

Let us further compare our results with the introduction of the Wigner representation $W$ in \cite{Wootters-discreteWigner}. It was required in \cite[Property (A3')]{Wootters-discreteWigner} that not only the marginals $\int_{\Omega_{\oA}} W(\rho) \dd a$ and $\int_{\Omega_{\oB}} W(\rho) \dd b$ are positive, but also that the sum along the diagonals is positive, that is that also $\Tr(\rho(W_{00}+W_{11}) \geq 0$ and $\Tr(\rho(W_{01}+W_{10}) \geq 0$. One can see that these conditions follow from that $W$ is $T_4$-symmetric.

One might now ask what happens if one restricts the Wigner representation $W$
to only act on the $XZ$-plane, i.e. on the state space $D(\Ha)_{XZ}=\{\rho \in D(\Ha) : \Tr(\sigma_y \rho)=0\}$ of $2\times2-$ real symmetric matrices, as the two observables $\oA$ and $\oB$ are jointly info-complete in that case. By Theorem~\ref{thm:uniqueness}, we know that there can be at most one Wigner representation on $D(\Ha)_{XZ}$ of $\oA$ and $\oB$ such that $g^{\uparrow}$ is a lifted symmetry of it for all $g\in T_4$. It turns out that $W$ defined in \eqref{eq:quantum_wigner} is the unique covariant Wigner representation. To see this, first notice that on $D(\Ha)_{XZ}$, $W$ can be written as  
\begin{equation} \label{eq:reduced_quantum_wigner}
W(\rho) = \frac{1}{4}
\begin{pmatrix}
\Tr(\rho (\sigma_{x} + \sigma_{z} + \I)) & \Tr(\rho (- \sigma_{x} + \sigma_{z} + \I)) \\
\Tr(\rho (\sigma_{x} - \sigma_{z} + \I)) & \Tr(\rho (- \sigma_{x} - \sigma_{z} + \I))
\end{pmatrix}.
\end{equation}
To verify the covariance of $W$, it can be seen that the channels $\Phi_{01}$ and $\Phi_{10}$ on $D(\Ha)_{XZ}$ that correspond to the permutations $\phi_{01}$ and $\phi_{10}$ are given by reflections across the $X$-axis and $Z$-axis, respectively. A reflection across the $X$-axis corresponds to the exchange of the values of $\Tr(\rho \sigma_{x})$ and $-\Tr(\rho \sigma_{x})$ and a reflection across the $z$-axis corresponds to the exchange of $\Tr(\rho \sigma_{z})$ and $-\Tr(\rho \sigma_{z})$. With this insight and the definition of covariance in terms of matrix elements \eqref{eq:covariance_matrix_elements}, one can readily verify that the Wigner representation \eqref{eq:reduced_quantum_wigner} is covariant.

In the case of quantum theory one can invert the logic and look for Wigner representations that are $G$-symmetric with respect to a given group $G$. This situation is analyzed in \cite{Gross-discreteWigner}, where it is shown that covariance with respect to Clifford group essentially fixes the Wigner representation $W$.

By deformation of the underlying state space one can construct an example where no covariant representation can exist. Consider the artificial state space $K = D(\Ha)_{XZ}\setminus \{\rho: \Tr(\sigma_{z} \rho) \geq 0, \Tr(\sigma_{z} \rho) + \abs{\Tr(\sigma_{x} \rho)} \geq 1\}$ 
and the same observables $\oA$ and $\oB$ as above. Clearly, the assumptions of Theorem~\ref{thm:uniqueness} of joint info-completeness, complementarity and surjectivity are still satisfied. However, the channel $\Phi_{01}$ that corresponds to a reflection around the $x$-axis is not well-defined on the whole state space $K$ anymore. To see this, consider states $\rho \in K$ for which $-\Tr(\rho \sigma_{z}) + \abs{\Tr(\sigma_{x} \rho)} \geq 1$ which would be mapped to points outside of $K$. Therefore, by the non-existence of channels that permute the outcome probabilities of states, there can not be a covariant Wigner representation of $K$ with respect to $\oA$ and $\oB$.

\update{
\subsection{Position and momentum of qudits}

The example of binary position and momentum of a qubit can be generalized to quantum mechanics on larger discrete phase spaces of $d \times d$ lattice sites. To this end, we take $\mathcal{H}$ to be $d$-dimensional spanned by the orthonormal basis $\{\ket{0},\dots,\ket{d-1}\}$. This basis is naturally associaticed to a position observable $q = \sum_{j=0}^{d-1} = j \ket{j}\bra{j}$. The corresponding momentum observable is $p = \sum_{j=0}^{d-1} j \ket{\Tilde{j}}\bra{\Tilde{j}}$ where $\{\ket{\Tilde{j}}\}_{j=0}^{d-1}$ is the Fourier basis $\ket{\Tilde{j}} = \sum_{l=0}^{d-1} e^{\frac{2\pi i\cdot j \cdot l}{d}} \ket{l}$. Taking a qudit state $\rho$ and invoking our view on observables as probability assignments $\mu_{q}$ and $\mu_{p}$ we have $\mu_{q}(\rho) = (\bra{j}\rho \ket{j})_{j}$ and $\mu_{p}(\rho) = (\bra{\Tilde{j}}\rho \ket{\Tilde{j}})_{j}$. To apply Theorem~\ref{thm:uniqueness}, the state space has be one such that the observables $p$ and $q$ are informationally complete and this condition is clearly not fulfilled when choosing the set $K$ of all qudit density matrices. To this end, we define the state space $\dens(\Ha)/_{\sim}$ of equivalence classes corresponding to the relation $\rho \sim \rho^{\prime} \Leftrightarrow (\mu_{q}(\rho) = \mu_{q}(\rho^{\prime}) \textnormal{ and } \mu_{p}(\rho) = \mu_{p}(\rho^{\prime}))$. On this state space, the observables $q$ and $p$ are clearly jointly info-complete, complementary and surjective. We can conclude that covariance of the Wigner representation of $q$ and $p$ implies uniqueness in all finite dimensions $d$. Notice again that the symmetry transformations on the states do not have to be unitary and the existence of covariant Wigner representations for even $d$ does not contradict the results in Ref.~\cite{Zhu2016, Raussendorf2023}.

\subsection{Wigner representations of mutually unbiased measurements}
So far, the Wigner representations under consideration corresponded to two observables. Our operational formalism allows for a simple generalisation to Wigner representation of a collection of more than two observables. Such a construction is uselful for studying sets of mutually unbiased bases and their associated observables. Two orthormal bases $\ket{a_i}$ and $\ket{b_i}$ of a $d$-dimensional Hilbert space are called mutually unbiased if $\vert\bra{a_i}b_j\rangle\vert^2 = \frac{1}{d}$ for all $i,j=1,\dots,d$. It is a fact that observables related to POVMs of mutually unbiased bases $A = \{\ket{a_i}\bra{a_i}\}$ and $B = \{\ket{b_i}\bra{b_i}\}$ are complementary. This is due to the fact that the states have deterministic measurement outcomes for $A$ are exactly $\ket{a_i}\bra{a_i}$ and similarly for $B$. A set of $N$ bases is called mutually unbiased if every pair of bases of the set is mutually unbiased. For a comprehensive review on mutually unbiased bases, see Ref.~\cite{DURT2010}.   

One famous set of mutually unbiased bases is given by the eigenvectors of the three Pauli matrices; the corresponding measurements are $M_x = \{\frac{1}{2}(\I + \sigma_x), \frac{1}{2}(\I - \sigma_x)\}$, $M_y = \{\frac{1}{2}(\I + \sigma_y), \frac{1}{2}(\I - \sigma_y)\}$ and $M_z = \{\frac{1}{2}(\I + \sigma_z), \frac{1}{2}(\I - \sigma_z)\}$. A Wigner representation of the qubit state space corresponding to this set of mutually unbiased measurements is given by $W(\rho) = \{\Tr(W_{ijk} \rho)\}_{ijk}$ with the tensor $\{W_{ijk}\}_{ijk=0}^{1}$ defined as
\begin{equation}
    W_{ijk} = \frac{1}{8} ( (-1)^{i}\sigma_x + (-1)^{j} \sigma_y + (-1)^{k}\sigma_z + \I ).
\end{equation}
One can simply check that $W$ gives rise to the correct measurement probabilities, e.g. $\Tr(M_z^{k} \rho) = \sum_{ij} \Tr(W_{ijk} \rho)$ for $k=0,1$. The Wigner representation is also covariant as for any state $\rho$ and any transformation $(i,j,k) \mapsto (\Bar{i},\Bar{j},\Bar{k}) : = (i \oplus s_x,j \oplus s_y,k \oplus s_z)$ with fixed $s_x, s_y, s_z \in \{0,1\}$ there exists a state $\widetilde{\rho}$ such that $\Tr(W_{ijk} \rho) = \Tr(W_{\Bar{i}\Bar{j}\Bar{k}} \widetilde{\rho})$. The state $\widetilde{\rho}$ is explicitly computed as 
\begin{equation}
    \widetilde{\rho} = \frac{1}{2}(\mathds{1} + (-1)^{s_x}r_x \sigma_x + (-1)^{s_y}r_y \sigma_y + (-1)^{s_z}r_z \sigma_z)
\end{equation}
where $(r_x, r_y, r_z)$ is the Bloch vector of $\rho$. By Theorem \ref{thm:uniqueness}, we can state that $W$ is the unique covariant Wigner representation of the set of mutually unbiased measurements given by the Pauli matrices. Also for sets of mutually unbiased measurements in higher dimensions, it can be stated that the existence of a covariant Wigner representation already implies it's uniqueness.}

\begin{figure}
\includegraphics[width=0.7\linewidth]{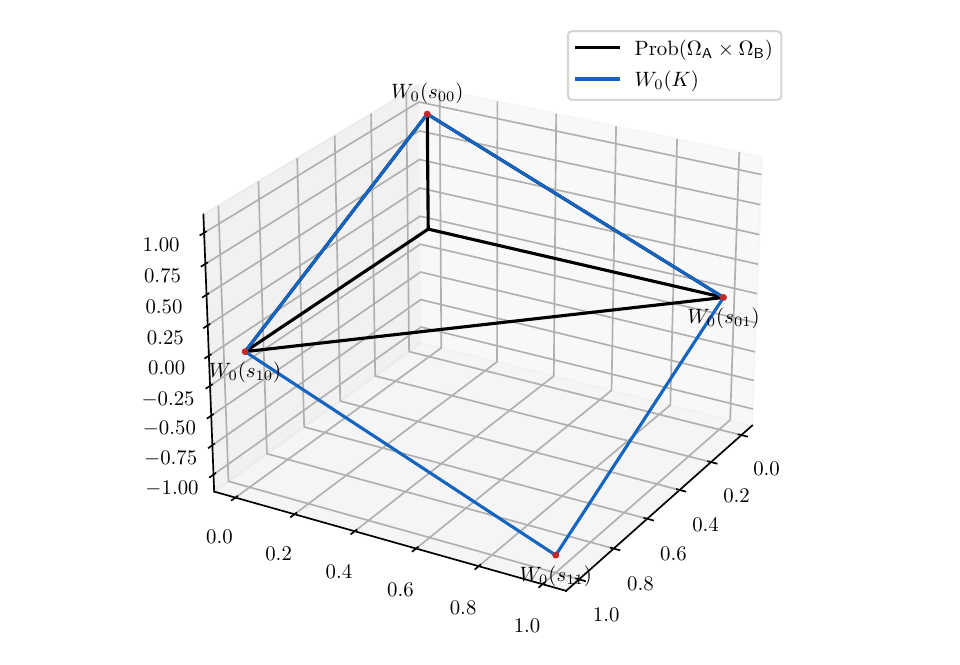}
\caption{The Wigner representation $W_0$ of $S$ with respect to $\oA$ and $\oB$.}
\label{fig:boxworld-W0}
\end{figure}

\section{Example: Boxworld} \label{sec:boxworld}
In this section we will review several Wigner representations of the so-called Boxworld theory. Boxworld was introduced in \cite{Barrett-GPTinformation} and since it is often used as a non-classical toy theory. The state space that we will work with has a state space that is a square $S$ with four different extreme points $s_{00}$, $s_{01}$, $s_{10}$ and $s_{11}$ that satisfy $\frac{1}{2}(s_{00}+s_{11}) = \frac{1}{2}(s_{01}+s_{10})$. Let $f_x, f_y \in E(S)$ be the functions defined as $f_x(s_{00}) = f_x(s_{01}) = 0$, $f_x(s_{10}) = f_x(s_{11}) = 1$, $f_y(s_{00}) = f_y(s_{10}) = 0$, $f_y(s_{01}) = f_y(s_{11}) = 1$. Let $\oA$ and $\oB$ be observables defined as follows: let $\Omega_{\oA} = \Omega_{\oB} = \{0, 1\}$ and for $s \in S$ we have $\mu_{\oA}(s) = ( f_x(s), 1 - f_x(s))$, $\mu_{\oB}(s) = ( f_y(s), 1 - f_y(s))$. It is well known that $\mu_{\oA}$ and $\mu_{\oB}$ are incompatible \cite{BuschHeinosaariSchultzStevens-compatibility,JencovaPlavala-maxInc} and so we can not expect to find a positive Wigner representation of $S$ with respect to $\oA$ and $\oB$. Moreover $\oA$ and $\oB$ are jointly info-complete but not complementary. One can easily deduce that any Wigner representation $W$ of $S$  with respect to $\oA$ and $\oB$ is of the form
\begin{equation} \label{eq:boxworld-Wfxfyq}
W(s) =
\begin{pmatrix}
q(s) & f_x(s) - q(s) \\
f_y(s) - q(s) & 1 + q(s) - (f_x(s) + f_y(s))
\end{pmatrix}
\end{equation}
where $s \in S$ and $q: S \to \RR$ is an affine function. The marginals then are
\begin{align}
\int_{\Omega_{\oB}} W(s) \dd b &= ( q(s) + f_x(s) - q(s), f_y(s) - q(s) + 1 + q(s) - (f_x(s) + f_y(s)) ) = \mu_{\oA}(s), \\
\int_{\Omega_{\oA}} W(s) \dd a &= ( q(s) + f_y(s) - q(s), f_x(s) - q(s) + 1 + q(s) - (f_x(s) + f_y(s)) ) = \mu_{\oB}(s),
\end{align}
as needed. One can see that all of these Wigner representations are faithful as a result of Proposition \ref{prop:Wigner-allFaithful}. In the following, we will use $T_4$ to denote the permutation group of $\Omega_{\oA} \times \Omega_{\oB}$ and we denote
\begin{align}
&\delta_{00} =
\begin{pmatrix}
1 & 0 \\
0 & 0
\end{pmatrix},
&&\delta_{01} =
\begin{pmatrix}
0 & 1 \\
0 & 0
\end{pmatrix}, \\
&\delta_{10} =
\begin{pmatrix}
0 & 0 \\
1 & 0
\end{pmatrix},
&&\delta_{11} =
\begin{pmatrix}
0 & 0 \\
0 & 1
\end{pmatrix}.
\end{align}
Then $\delta_{ij}$, $i,j \in \Omega$, represent the extreme points of $\Prob(\Omega_{\oA}\times\Omega_{\oB})$ and for a given $q$ we can then plot $W(S)$ as a subset of $\aff(\Prob(\Omega_{\oA}\times\Omega_{\oB}))$.

The simplest choice of $q(s) = 0$ for all $s \in S$ yields a Wigner representation $W_0$, given as:
\begin{align}
&W_0(s_{00}) =
\begin{pmatrix}
0 & 0 \\
0 & 1
\end{pmatrix},
&&W_0(s_{01}) =
\begin{pmatrix}
0 & 0 \\
1 & 0
\end{pmatrix}, \\
&W_0(s_{10}) =
\begin{pmatrix}
0 & 1 \\
0 & 0
\end{pmatrix},
&&W_0(s_{11}) =
\begin{pmatrix}
0 & 1 \\
1 & -1
\end{pmatrix}.
\end{align}
One can immediately see that $W_0$ is not positive, i.e., that $W(S) \not\subset \Prob(\Omega_{\oA}\times\Omega_{\oB})$, see Figure \ref{fig:boxworld-W0}. One can also see that the only lifted symmetry of $W_0(S)$ is given by the map $\varphi_{10 \leftrightarrow 01}: \Omega_{\oA}\times\Omega_{\oB} \to \Omega_{\oA}\times\Omega_{\oB}$ that exchanges $01$ and $10$, i.e., $\varphi_{10 \leftrightarrow 01}(00) = 00$, $\varphi_{10 \leftrightarrow 01}(01) = 10$, $\varphi_{10 \leftrightarrow 01}(10) = 01$, $\varphi_{10 \leftrightarrow 01}(11) = 11$, and it follows that $W_0$ is not $T_4$-symmetric.

\begin{figure}
\includegraphics[width=0.7\linewidth]{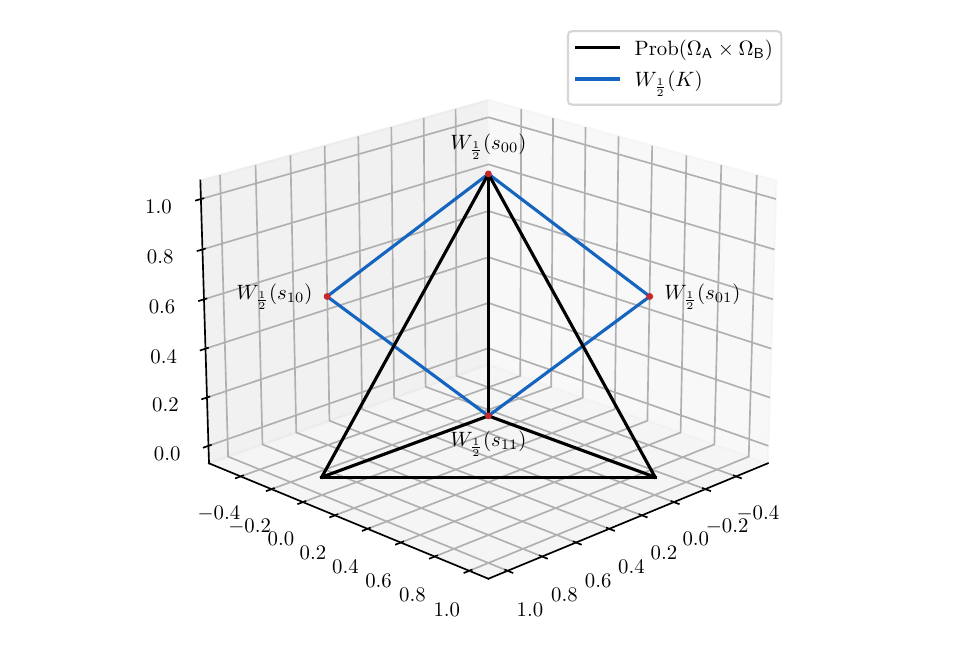}
\caption{The Wigner representation $W_{\frac{1}{2}}$ of $S$ with respect to $\oA$ and $\oB$.}
\label{fig:boxworld-W12}
\end{figure}

Other possible choice is $q = \frac{1}{2}(f_x + f_y)$ which yields the Wigner representation $W_{\frac{1}{2}}$ given as:\begin{align}
&W_{\frac{1}{2}}(s_{00}) =
\begin{pmatrix}
0 & 0 \\
0 & 1
\end{pmatrix},
&&W_{\frac{1}{2}}(s_{01}) = \dfrac{1}{2}
\begin{pmatrix}
1 & -1 \\
1 & 1
\end{pmatrix}, \\
&W_{\frac{1}{2}}(s_{10}) = \dfrac{1}{2}
\begin{pmatrix}
1 & 1 \\
-1 & 1
\end{pmatrix},
&&W_{\frac{1}{2}}(s_{11}) =
\begin{pmatrix}
1 & 0 \\
0 & 0
\end{pmatrix},
\end{align}
see Figure \ref{fig:boxworld-W12}. Let $\varphi_{10 \leftrightarrow 01}, \varphi_{00 \leftrightarrow 11}: \Omega_{\oA}\times\Omega_{\oB} \to \Omega_{\oA}\times\Omega_{\oB}$ be the maps that exchange $10$, $01$, and $00$, $11$, respectively. Then it is easy to verify that $\varphi_{10 \leftrightarrow 01}^\uparrow$ and $\varphi_{00 \leftrightarrow 11}^\uparrow$ are symmetries of $W_{\frac{1}{2}}$, as we have:
\begin{align}
&\varphi_{10 \leftrightarrow 01}^\uparrow(W_{\frac{1}{2}}(s_{00})) = W_{\frac{1}{2}}(s_{00}),
&&\varphi_{00 \leftrightarrow 11}^\uparrow(W_{\frac{1}{2}}(s_{00})) = W_{\frac{1}{2}}(s_{11}), \\
&\varphi_{10 \leftrightarrow 01}^\uparrow(W_{\frac{1}{2}}(s_{01})) = W_{\frac{1}{2}}(s_{10}),
&&\varphi_{00 \leftrightarrow 11}^\uparrow(W_{\frac{1}{2}}(s_{01})) = W_{\frac{1}{2}}(s_{01}), \\
&\varphi_{10 \leftrightarrow 01}^\uparrow(W_{\frac{1}{2}}(s_{10})) = W_{\frac{1}{2}}(s_{01}),
&&\varphi_{00 \leftrightarrow 11}^\uparrow(W_{\frac{1}{2}}(s_{10})) = W_{\frac{1}{2}}(s_{10}), \\
&\varphi_{10 \leftrightarrow 01}^\uparrow(W_{\frac{1}{2}}(s_{11})) = W_{\frac{1}{2}}(s_{11}),
&&\varphi_{00 \leftrightarrow 11}^\uparrow(W_{\frac{1}{2}}(s_{11})) = W_{\frac{1}{2}}(s_{00}).
\end{align}
But $W$ is not $T_4$-symmetric, for example consider the map $\varphi_{00 \leftrightarrow 01}: \Omega_{\oA}\times\Omega_{\oB} \to \Omega_{\oA}\times\Omega_{\oB}$, then we have
\begin{align}
\varphi_{00 \leftrightarrow 01}^\uparrow(W_{\frac{1}{2}}(s_{00})) &= W_{\frac{1}{2}}(s_{00}), \\
\varphi_{00 \leftrightarrow 01}^\uparrow(W_{\frac{1}{2}}(s_{01})) &= \dfrac{1}{2}
\begin{pmatrix}
-1 & 1 \\
1 & 1
\end{pmatrix} \notin W_{\frac{1}{2}}(S), \\
\varphi_{00 \leftrightarrow 01}^\uparrow(W_{\frac{1}{2}}(s_{10})) &= W_{\frac{1}{2}}(s_{10}), \\
\varphi_{00 \leftrightarrow 01}^\uparrow(W_{\frac{1}{2}}(s_{11})) &=
\begin{pmatrix}
0 & 1 \\
0 & 0
\end{pmatrix} \notin W_{\frac{1}{2}}(S).
\end{align}

\begin{figure}
\includegraphics[width=0.7\linewidth]{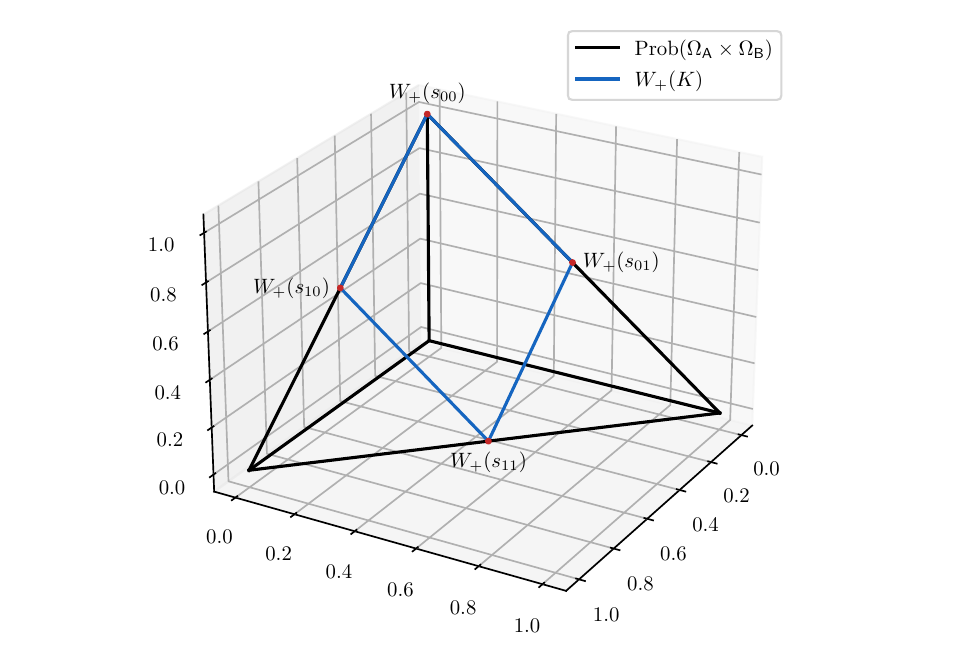}
\caption{The Wigner representation $W_+$ of $S$ with respect to $\oA$ and $\oB$.}
\label{fig:boxworld-Wplus}
\end{figure}

At last, let us present an example of a positive Wigner representation. Let $\oC$ and $\oD$ be observables with $\Omega_{\oC} = \Omega_{\oD} = \Omega = \{0,1\}$ and let $\mu_{\oC}$ and $\mu_{\oD}$ be given for $s \in S$ as $\mu_{\oC}(s) = ( \frac{1}{2} f_x(s), 1 - \frac{1}{2} f_x(s) )$, $\mu_{\oD}(s) = ( \frac{1}{2} f_y(s), 1 - \frac{1}{2} f_y(s) )$. Let $W_+: K \to \SProb(\Omega_{\oC} \times \Omega_{\oD})$ be given for $s \in S$ as
\begin{equation}
W_+(s) = 
\begin{pmatrix}
0 & \frac{1}{2} f_x(s) \\
\frac{1}{2} f_y(s) & 1 - \frac{1}{2} (f_x(s) + f_y(s))
\end{pmatrix},
\end{equation}
then we have
\begin{align}
&W_{+}(s_{00}) =
\begin{pmatrix}
0 & 0 \\
0 & 1
\end{pmatrix},
&&W_{+}(s_{01}) = \dfrac{1}{2}
\begin{pmatrix}
0 & 0 \\
1 & 1
\end{pmatrix}, \\
&W_{+}(s_{10}) = \dfrac{1}{2}
\begin{pmatrix}
0 & 1 \\
0 & 1
\end{pmatrix},
&&W_{+}(s_{11}) = \dfrac{1}{2}
\begin{pmatrix}
0 & 1 \\
1 & 0
\end{pmatrix},
\end{align}
so clearly $W_+(S) \subset \Prob(\Omega_{\oC}\times\Omega_{\oD})$. This is also easy to see from Figure \ref{fig:boxworld-Wplus}.

\section{Conclusions}
We have introduced the Wigner representations of operational theories and we have also investigated symmetries of the Wigner representations. \tao{It has been shown that under suitable assumptions, symmetry properties can uniquely fix the Wigner representation.} While we did everything in the finite-dimensional case, our results are valid for all general probabilistic theories and thus one can expect that some suitable versions of our results will also hold in the physically relevant infinite-dimensional cases.

There are several possible future continuations of our work. The first would be to construct similar treatment of angular momentum and spin. Since these would be proper physical systems, one would have to consider the role of dynamics and potentially require that time-evolution is a transported symmetry. Another avenue of research would be to consider the results highlighting negativity of Wigner functions in quantum theory as necessary for computational advantage \cite{RaussendorfBrowneDelfosseOkayBermejovega-wignerNegative} and try to obtain suitable generalizations of these results. Furthermore, it is an interesting question how the generalized Wigner representations introduced in this work relate to complex-valued quasi probability distributions \cite{Arvidsson24} like the Kirkwood-Dirac distribution.

\begin{acknowledgments}
We are thankful to Matthias Kleinmann for discussing the concepts presented in this paper.
We acknowledge support from the Deutsche Forschungsgemeinschaft (DFG, German Research Foundation, project numbers 447948357 and 440958198), the Sino-German Center for Research Promotion (Project M-0294), the ERC (Consolidator Grant 683107/TempoQ), and the German Ministry of Education and Research (Project QuKuK, BMBF Grant No. 16KIS1618K). M.P. acknowledges support from the Alexander von Humboldt Foundation.
\end{acknowledgments}

\bibliography{citations}

\begin{thebibliography}{75}%
\makeatletter
\providecommand \@ifxundefined [1]{%
 \@ifx{#1\undefined}
}%
\providecommand \@ifnum [1]{%
 \ifnum #1\expandafter \@firstoftwo
 \else \expandafter \@secondoftwo
 \fi
}%
\providecommand \@ifx [1]{%
 \ifx #1\expandafter \@firstoftwo
 \else \expandafter \@secondoftwo
 \fi
}%
\providecommand \natexlab [1]{#1}%
\providecommand \enquote  [1]{``#1''}%
\providecommand \bibnamefont  [1]{#1}%
\providecommand \bibfnamefont [1]{#1}%
\providecommand \citenamefont [1]{#1}%
\providecommand \href@noop [0]{\@secondoftwo}%
\providecommand \href [0]{\begingroup \@sanitize@url \@href}%
\providecommand \@href[1]{\@@startlink{#1}\@@href}%
\providecommand \@@href[1]{\endgroup#1\@@endlink}%
\providecommand \@sanitize@url [0]{\catcode `\\12\catcode `\$12\catcode
  `\&12\catcode `\#12\catcode `\^12\catcode `\_12\catcode `\%12\relax}%
\providecommand \@@startlink[1]{}%
\providecommand \@@endlink[0]{}%
\providecommand \url  [0]{\begingroup\@sanitize@url \@url }%
\providecommand \@url [1]{\endgroup\@href {#1}{\urlprefix }}%
\providecommand \urlprefix  [0]{URL }%
\providecommand \Eprint [0]{\href }%
\providecommand \doibase [0]{https://doi.org/}%
\providecommand \selectlanguage [0]{\@gobble}%
\providecommand \bibinfo  [0]{\@secondoftwo}%
\providecommand \bibfield  [0]{\@secondoftwo}%
\providecommand \translation [1]{[#1]}%
\providecommand \BibitemOpen [0]{}%
\providecommand \bibitemStop [0]{}%
\providecommand \bibitemNoStop [0]{.\EOS\space}%
\providecommand \EOS [0]{\spacefactor3000\relax}%
\providecommand \BibitemShut  [1]{\csname bibitem#1\endcsname}%
\let\auto@bib@innerbib\@empty
\bibitem [{\citenamefont {Wigner}(1932)}]{Wigner-WignerFunctions}%
  \BibitemOpen
  \bibfield  {author} {\bibinfo {author} {\bibfnamefont {E.}~\bibnamefont
  {Wigner}},\ }\bibfield  {title} {\bibinfo {title} {{On the Quantum Correction
  For Thermodynamic Equilibrium}},\ }\href
  {https://doi.org/10.1103/PhysRev.40.749} {\bibfield  {journal} {\bibinfo
  {journal} {Physical Review}\ }\textbf {\bibinfo {volume} {40}},\ \bibinfo
  {pages} {749} (\bibinfo {year} {1932})}\BibitemShut {NoStop}%
\bibitem [{\citenamefont {Curtright}\ \emph {et~al.}(2013)\citenamefont
  {Curtright}, \citenamefont {Fairlie},\ and\ \citenamefont
  {Zachos}}]{CurtrightFarlieZachos-wignerFunctions}%
  \BibitemOpen
  \bibfield  {author} {\bibinfo {author} {\bibfnamefont {T.}~\bibnamefont
  {Curtright}}, \bibinfo {author} {\bibfnamefont {D.}~\bibnamefont {Fairlie}},\
  and\ \bibinfo {author} {\bibfnamefont {C.}~\bibnamefont {Zachos}},\
  }\href@noop {} {\emph {\bibinfo {title} {A Concise Treatise On Quantum
  Mechanics In Phase Space}}}\ (\bibinfo  {publisher} {World Scientific
  Publishing Company},\ \bibinfo {year} {2013})\BibitemShut {NoStop}%
\bibitem [{\citenamefont {Raussendorf}\ \emph {et~al.}(2017)\citenamefont
  {Raussendorf}, \citenamefont {Browne}, \citenamefont {Delfosse},
  \citenamefont {Okay},\ and\ \citenamefont
  {Bermejo-Vega}}]{RaussendorfBrowneDelfosseOkayBermejovega-wignerNegative}%
  \BibitemOpen
  \bibfield  {author} {\bibinfo {author} {\bibfnamefont {R.}~\bibnamefont
  {Raussendorf}}, \bibinfo {author} {\bibfnamefont {D.~E.}\ \bibnamefont
  {Browne}}, \bibinfo {author} {\bibfnamefont {N.}~\bibnamefont {Delfosse}},
  \bibinfo {author} {\bibfnamefont {C.}~\bibnamefont {Okay}},\ and\ \bibinfo
  {author} {\bibfnamefont {J.}~\bibnamefont {Bermejo-Vega}},\ }\bibfield
  {title} {\bibinfo {title} {Contextuality and wigner-function negativity in
  qubit quantum computation},\ }\href
  {https://doi.org/10.1103/PhysRevA.95.052334} {\bibfield  {journal} {\bibinfo
  {journal} {Physical Review A}\ }\textbf {\bibinfo {volume} {95}},\ \bibinfo
  {pages} {052334} (\bibinfo {year} {2017})},\ \Eprint
  {https://arxiv.org/abs/1511.08506} {arXiv:1511.08506} \BibitemShut {NoStop}%
\bibitem [{\citenamefont {Rundle}\ and\ \citenamefont
  {Everitt}(2021)}]{Rundle2021}%
  \BibitemOpen
  \bibfield  {author} {\bibinfo {author} {\bibfnamefont {R.~P.}\ \bibnamefont
  {Rundle}}\ and\ \bibinfo {author} {\bibfnamefont {M.~J.}\ \bibnamefont
  {Everitt}},\ }\bibfield  {title} {\bibinfo {title} {Overview of the phase
  space formulation of quantum mechanics with application to quantum
  technologies},\ }\bibfield  {journal} {\bibinfo  {journal} {Advanced Quantum
  Technologies}\ }\textbf {\bibinfo {volume} {4}},\ \href
  {https://doi.org/10.1002/qute.202100016} {10.1002/qute.202100016} (\bibinfo
  {year} {2021})\BibitemShut {NoStop}%
\bibitem [{\citenamefont {Ferrie}(2011)}]{Ferrie2011}%
  \BibitemOpen
  \bibfield  {author} {\bibinfo {author} {\bibfnamefont {C.}~\bibnamefont
  {Ferrie}},\ }\bibfield  {title} {\bibinfo {title} {Quasi-probability
  representations of quantum theory with applications to quantum information
  science},\ }\href {https://doi.org/10.1088/0034-4885/74/11/116001} {\bibfield
   {journal} {\bibinfo  {journal} {Reports on Progress in Physics}\ }\textbf
  {\bibinfo {volume} {74}},\ \bibinfo {pages} {116001} (\bibinfo {year}
  {2011})}\BibitemShut {NoStop}%
\bibitem [{\citenamefont {Pl{\'{a}}vala}\ and\ \citenamefont
  {Kleinmann}(2022)}]{PlavalaKleinmann-oscillator}%
  \BibitemOpen
  \bibfield  {author} {\bibinfo {author} {\bibfnamefont {M.}~\bibnamefont
  {Pl{\'{a}}vala}}\ and\ \bibinfo {author} {\bibfnamefont {M.}~\bibnamefont
  {Kleinmann}},\ }\bibfield  {title} {\bibinfo {title} {{Operational Theories
  in Phase Space: Toy Model for the Harmonic Oscillator}},\ }\href
  {https://doi.org/10.1103/PhysRevLett.128.040405} {\bibfield  {journal}
  {\bibinfo  {journal} {Physical Review Letters}\ }\textbf {\bibinfo {volume}
  {128}},\ \bibinfo {pages} {040405} (\bibinfo {year} {2022})},\ \Eprint
  {https://arxiv.org/abs/2101.08323} {arXiv:2101.08323} \BibitemShut {NoStop}%
\bibitem [{\citenamefont {Pl{\'{a}}vala}\ and\ \citenamefont
  {Kleinmann}(2023)}]{PlavalaKleinmann-hydrogen}%
  \BibitemOpen
  \bibfield  {author} {\bibinfo {author} {\bibfnamefont {M.}~\bibnamefont
  {Pl{\'{a}}vala}}\ and\ \bibinfo {author} {\bibfnamefont {M.}~\bibnamefont
  {Kleinmann}},\ }\bibfield  {title} {\bibinfo {title} {{Generalized dynamical
  theories in phase space and the hydrogen atom}},\ }\Eprint
  {https://arxiv.org/abs/2212.12267} {arXiv:2212.12267}  (\bibinfo {year}
  {2023})\BibitemShut {NoStop}%
\bibitem [{\citenamefont {Wootters}(1987)}]{Wootters-discreteWigner}%
  \BibitemOpen
  \bibfield  {author} {\bibinfo {author} {\bibfnamefont {W.~K.}\ \bibnamefont
  {Wootters}},\ }\bibfield  {title} {\bibinfo {title} {{A Wigner-function
  formulation of finite-state quantum mechanics}},\ }\href
  {https://doi.org/10.1016/0003-4916(87)90176-X} {\bibfield  {journal}
  {\bibinfo  {journal} {Annals of Physics}\ }\textbf {\bibinfo {volume}
  {176}},\ \bibinfo {pages} {1} (\bibinfo {year} {1987})}\BibitemShut {NoStop}%
\bibitem [{\citenamefont {Gibbons}\ \emph {et~al.}(2004)\citenamefont
  {Gibbons}, \citenamefont {Hoffman},\ and\ \citenamefont
  {Wootters}}]{GibbonsHoffmanWootters-discreteWigner}%
  \BibitemOpen
  \bibfield  {author} {\bibinfo {author} {\bibfnamefont {K.~S.}\ \bibnamefont
  {Gibbons}}, \bibinfo {author} {\bibfnamefont {M.~J.}\ \bibnamefont
  {Hoffman}},\ and\ \bibinfo {author} {\bibfnamefont {W.~K.}\ \bibnamefont
  {Wootters}},\ }\bibfield  {title} {\bibinfo {title} {{Discrete phase space
  based on finite fields}},\ }\href
  {https://doi.org/10.1103/PhysRevA.70.062101} {\bibfield  {journal} {\bibinfo
  {journal} {Physical Review A}\ }\textbf {\bibinfo {volume} {70}},\ \bibinfo
  {pages} {062101} (\bibinfo {year} {2004})},\ \Eprint
  {https://arxiv.org/abs/0401155} {arXiv:0401155} \BibitemShut {NoStop}%
\bibitem [{\citenamefont {Gross}(2006)}]{Gross-discreteWigner}%
  \BibitemOpen
  \bibfield  {author} {\bibinfo {author} {\bibfnamefont {D.}~\bibnamefont
  {Gross}},\ }\bibfield  {title} {\bibinfo {title} {{Hudson's theorem for
  finite-dimensional quantum systems}},\ }\href
  {https://doi.org/10.1063/1.2393152} {\bibfield  {journal} {\bibinfo
  {journal} {Journal of Mathematical Physics}\ }\textbf {\bibinfo {volume}
  {47}},\ \bibinfo {pages} {122107} (\bibinfo {year} {2006})},\ \Eprint
  {https://arxiv.org/abs/0602001} {arXiv:0602001} \BibitemShut {NoStop}%
\bibitem [{\citenamefont {DeBrota}\ and\ \citenamefont
  {Stacey}(2020)}]{DeBrotaStacey-discreteWigner}%
  \BibitemOpen
  \bibfield  {author} {\bibinfo {author} {\bibfnamefont {J.~B.}\ \bibnamefont
  {DeBrota}}\ and\ \bibinfo {author} {\bibfnamefont {B.~C.}\ \bibnamefont
  {Stacey}},\ }\bibfield  {title} {\bibinfo {title} {{Discrete Wigner functions
  from informationally complete quantum measurements}},\ }\href
  {https://doi.org/10.1103/PhysRevA.102.032221} {\bibfield  {journal} {\bibinfo
   {journal} {Physical Review A}\ }\textbf {\bibinfo {volume} {102}},\ \bibinfo
  {pages} {032221} (\bibinfo {year} {2020})},\ \Eprint
  {https://arxiv.org/abs/1912.07554} {arXiv:1912.07554} \BibitemShut {NoStop}%
\bibitem [{\citenamefont {Schwonnek}\ and\ \citenamefont
  {Werner}(2020)}]{SchwonnekWerner-discreteWigner}%
  \BibitemOpen
  \bibfield  {author} {\bibinfo {author} {\bibfnamefont {R.}~\bibnamefont
  {Schwonnek}}\ and\ \bibinfo {author} {\bibfnamefont {R.~F.}\ \bibnamefont
  {Werner}},\ }\bibfield  {title} {\bibinfo {title} {{The Wigner distribution
  of n arbitrary observables}},\ }\href {https://doi.org/10.1063/1.5140632}
  {\bibfield  {journal} {\bibinfo  {journal} {Journal of Mathematical Physics}\
  }\textbf {\bibinfo {volume} {61}},\ \bibinfo {pages} {082103} (\bibinfo
  {year} {2020})},\ \Eprint {https://arxiv.org/abs/1802.08342}
  {arXiv:1802.08342} \BibitemShut {NoStop}%
\bibitem [{\citenamefont {Muñoz}\ and\ \citenamefont
  {Klimov}(2017)}]{Muoz2017}%
  \BibitemOpen
  \bibfield  {author} {\bibinfo {author} {\bibfnamefont {C.}~\bibnamefont
  {Muñoz}}\ and\ \bibinfo {author} {\bibfnamefont {A.~B.}\ \bibnamefont
  {Klimov}},\ }\bibfield  {title} {\bibinfo {title} {Discrete phase-space
  mappings, tomographic condition and permutation invariance},\ }\href
  {https://doi.org/10.1088/1751-8121/aa5fb5} {\bibfield  {journal} {\bibinfo
  {journal} {Journal of Physics A: Mathematical and Theoretical}\ }\textbf
  {\bibinfo {volume} {50}},\ \bibinfo {pages} {145301} (\bibinfo {year}
  {2017})}\BibitemShut {NoStop}%
\bibitem [{\citenamefont {Davis}\ \emph {et~al.}(2021)\citenamefont {Davis},
  \citenamefont {Kumari}, \citenamefont {Mann},\ and\ \citenamefont
  {Ghose}}]{Davis2021}%
  \BibitemOpen
  \bibfield  {author} {\bibinfo {author} {\bibfnamefont {J.}~\bibnamefont
  {Davis}}, \bibinfo {author} {\bibfnamefont {M.}~\bibnamefont {Kumari}},
  \bibinfo {author} {\bibfnamefont {R.~B.}\ \bibnamefont {Mann}},\ and\
  \bibinfo {author} {\bibfnamefont {S.}~\bibnamefont {Ghose}},\ }\bibfield
  {title} {\bibinfo {title} {Wigner negativity in spin-j systems},\ }\bibfield
  {journal} {\bibinfo  {journal} {Physical Review Research}\ }\textbf {\bibinfo
  {volume} {3}},\ \href {https://doi.org/10.1103/physrevresearch.3.033134}
  {10.1103/physrevresearch.3.033134} (\bibinfo {year} {2021})\BibitemShut
  {NoStop}%
\bibitem [{\citenamefont {Delfosse}\ \emph {et~al.}(2017)\citenamefont
  {Delfosse}, \citenamefont {Okay}, \citenamefont {Bermejo-Vega}, \citenamefont
  {Browne},\ and\ \citenamefont {Raussendorf}}]{Delfosse2017}%
  \BibitemOpen
  \bibfield  {author} {\bibinfo {author} {\bibfnamefont {N.}~\bibnamefont
  {Delfosse}}, \bibinfo {author} {\bibfnamefont {C.}~\bibnamefont {Okay}},
  \bibinfo {author} {\bibfnamefont {J.}~\bibnamefont {Bermejo-Vega}}, \bibinfo
  {author} {\bibfnamefont {D.~E.}\ \bibnamefont {Browne}},\ and\ \bibinfo
  {author} {\bibfnamefont {R.}~\bibnamefont {Raussendorf}},\ }\bibfield
  {title} {\bibinfo {title} {Equivalence between contextuality and negativity
  of the wigner function for qudits},\ }\href
  {https://doi.org/10.1088/1367-2630/aa8fe3} {\bibfield  {journal} {\bibinfo
  {journal} {New Journal of Physics}\ }\textbf {\bibinfo {volume} {19}},\
  \bibinfo {pages} {123024} (\bibinfo {year} {2017})}\BibitemShut {NoStop}%
\bibitem [{\citenamefont {Raussendorf}\ \emph {et~al.}(2020)\citenamefont
  {Raussendorf}, \citenamefont {Bermejo-Vega}, \citenamefont {Tyhurst},
  \citenamefont {Okay},\ and\ \citenamefont {Zurel}}]{Raussendorf2020}%
  \BibitemOpen
  \bibfield  {author} {\bibinfo {author} {\bibfnamefont {R.}~\bibnamefont
  {Raussendorf}}, \bibinfo {author} {\bibfnamefont {J.}~\bibnamefont
  {Bermejo-Vega}}, \bibinfo {author} {\bibfnamefont {E.}~\bibnamefont
  {Tyhurst}}, \bibinfo {author} {\bibfnamefont {C.}~\bibnamefont {Okay}},\ and\
  \bibinfo {author} {\bibfnamefont {M.}~\bibnamefont {Zurel}},\ }\bibfield
  {title} {\bibinfo {title} {Phase-space-simulation method for quantum
  computation with magic states on qubits},\ }\bibfield  {journal} {\bibinfo
  {journal} {Physical Review A}\ }\textbf {\bibinfo {volume} {101}},\ \href
  {https://doi.org/10.1103/physreva.101.012350} {10.1103/physreva.101.012350}
  (\bibinfo {year} {2020})\BibitemShut {NoStop}%
\bibitem [{\citenamefont {Booth}\ \emph {et~al.}(2022)\citenamefont {Booth},
  \citenamefont {Chabaud},\ and\ \citenamefont {Emeriau}}]{Booth2022}%
  \BibitemOpen
  \bibfield  {author} {\bibinfo {author} {\bibfnamefont {R.~I.}\ \bibnamefont
  {Booth}}, \bibinfo {author} {\bibfnamefont {U.}~\bibnamefont {Chabaud}},\
  and\ \bibinfo {author} {\bibfnamefont {P.-E.}\ \bibnamefont {Emeriau}},\
  }\bibfield  {title} {\bibinfo {title} {Contextuality and {W}igner negativity
  are equivalent for continuous-variable quantum measurements},\ }\bibfield
  {journal} {\bibinfo  {journal} {Physical Review Letters}\ }\textbf {\bibinfo
  {volume} {129}},\ \href {https://doi.org/10.1103/physrevlett.129.230401}
  {10.1103/physrevlett.129.230401} (\bibinfo {year} {2022})\BibitemShut
  {NoStop}%
\bibitem [{\citenamefont {Haferkamp}\ and\ \citenamefont
  {Bermejo-Vega}(2021)}]{haferkamp2021equivalence}%
  \BibitemOpen
  \bibfield  {author} {\bibinfo {author} {\bibfnamefont {J.}~\bibnamefont
  {Haferkamp}}\ and\ \bibinfo {author} {\bibfnamefont {J.}~\bibnamefont
  {Bermejo-Vega}},\ }\href@noop {} {\bibinfo {title} {Equivalence of
  contextuality and {W}igner function negativity in continuous-variable quantum
  optics}} (\bibinfo {year} {2021}),\ \Eprint
  {https://arxiv.org/abs/2112.14788} {arXiv:2112.14788} \BibitemShut {NoStop}%
\bibitem [{\citenamefont {Zurel}\ \emph {et~al.}(2020)\citenamefont {Zurel},
  \citenamefont {Okay},\ and\ \citenamefont {Raussendorf}}]{zurel2020hidden}%
  \BibitemOpen
  \bibfield  {author} {\bibinfo {author} {\bibfnamefont {M.}~\bibnamefont
  {Zurel}}, \bibinfo {author} {\bibfnamefont {C.}~\bibnamefont {Okay}},\ and\
  \bibinfo {author} {\bibfnamefont {R.}~\bibnamefont {Raussendorf}},\
  }\bibfield  {title} {\bibinfo {title} {Hidden variable model for universal
  quantum computation with magic states on qubits},\ }\href
  {https://doi.org/10.1103/PhysRevLett.125.260404} {\bibfield  {journal}
  {\bibinfo  {journal} {Physical review letters}\ }\textbf {\bibinfo {volume}
  {125}},\ \bibinfo {pages} {260404} (\bibinfo {year} {2020})}\BibitemShut
  {NoStop}%
\bibitem [{\citenamefont {Okay}\ \emph {et~al.}(2021)\citenamefont {Okay},
  \citenamefont {Zurel},\ and\ \citenamefont {Raussendorf}}]{okay2021extremal}%
  \BibitemOpen
  \bibfield  {author} {\bibinfo {author} {\bibfnamefont {C.}~\bibnamefont
  {Okay}}, \bibinfo {author} {\bibfnamefont {M.}~\bibnamefont {Zurel}},\ and\
  \bibinfo {author} {\bibfnamefont {R.}~\bibnamefont {Raussendorf}},\
  }\bibfield  {title} {\bibinfo {title} {{On the extremal points of the
  $\Lambda$-polytopes and classical simulation of quantum computation with
  magic states}},\ }\href {https://doi.org/10.26421/QIC21.13-14-2} {\bibfield
  {journal} {\bibinfo  {journal} {Quantum Information and Computation}\ ,\
  \bibinfo {pages} {1533}} (\bibinfo {year} {2021})}\BibitemShut {NoStop}%
\bibitem [{\citenamefont {Zurel}\ \emph {et~al.}(2023)\citenamefont {Zurel},
  \citenamefont {Okay},\ and\ \citenamefont
  {Raussendorf}}]{zurel2023simulating}%
  \BibitemOpen
  \bibfield  {author} {\bibinfo {author} {\bibfnamefont {M.}~\bibnamefont
  {Zurel}}, \bibinfo {author} {\bibfnamefont {C.}~\bibnamefont {Okay}},\ and\
  \bibinfo {author} {\bibfnamefont {R.}~\bibnamefont {Raussendorf}},\
  }\bibfield  {title} {\bibinfo {title} {Simulating quantum computation: how
  many ``bits'' for ``it''?},\ }\Eprint {https://arxiv.org/abs/2305.17287}
  {2305.17287}  (\bibinfo {year} {2023})\BibitemShut {NoStop}%
\bibitem [{\citenamefont {Zurel}\ \emph {et~al.}(2024)\citenamefont {Zurel},
  \citenamefont {Okay}, \citenamefont {Raussendorf},\ and\ \citenamefont
  {Heimendahl}}]{zurel2024hidden}%
  \BibitemOpen
  \bibfield  {author} {\bibinfo {author} {\bibfnamefont {M.}~\bibnamefont
  {Zurel}}, \bibinfo {author} {\bibfnamefont {C.}~\bibnamefont {Okay}},
  \bibinfo {author} {\bibfnamefont {R.}~\bibnamefont {Raussendorf}},\ and\
  \bibinfo {author} {\bibfnamefont {A.}~\bibnamefont {Heimendahl}},\ }\bibfield
   {title} {\bibinfo {title} {Hidden variable model for quantum computation
  with magic states on qudits of any dimension},\ }\href
  {https://doi.org/10.22331/q-2024-04-30-1323} {\bibfield  {journal} {\bibinfo
  {journal} {Quantum}\ }\textbf {\bibinfo {volume} {8}},\ \bibinfo {pages}
  {1323} (\bibinfo {year} {2024})}\BibitemShut {NoStop}%
\bibitem [{\citenamefont {Ipek}\ and\ \citenamefont
  {Okay}(2023)}]{ipek2023degenerate}%
  \BibitemOpen
  \bibfield  {author} {\bibinfo {author} {\bibfnamefont {S.}~\bibnamefont
  {Ipek}}\ and\ \bibinfo {author} {\bibfnamefont {C.}~\bibnamefont {Okay}},\
  }\bibfield  {title} {\bibinfo {title} {{The degenerate vertices of the
  $2$-qubit $\Lambda$-polytope and their update rules}},\ }\Eprint
  {https://arxiv.org/abs/2312.10734} {2312.10734}  (\bibinfo {year}
  {2023})\BibitemShut {NoStop}%
\bibitem [{\citenamefont {Janotta}\ and\ \citenamefont
  {Hinrichsen}(2014)}]{JanottaHinrichsen-review}%
  \BibitemOpen
  \bibfield  {author} {\bibinfo {author} {\bibfnamefont {P.}~\bibnamefont
  {Janotta}}\ and\ \bibinfo {author} {\bibfnamefont {H.}~\bibnamefont
  {Hinrichsen}},\ }\bibfield  {title} {\bibinfo {title} {Generalized
  probability theories: what determines the structure of quantum theory?},\
  }\href {https://doi.org/10.1088/1751-8113/47/32/323001} {\bibfield  {journal}
  {\bibinfo  {journal} {Journal of Physics A: Mathematical and Theoretical}\
  }\textbf {\bibinfo {volume} {47}},\ \bibinfo {pages} {323001} (\bibinfo
  {year} {2014})},\ \Eprint {https://arxiv.org/abs/1402.6562} {arXiv:1402.6562}
  \BibitemShut {NoStop}%
\bibitem [{\citenamefont {Lami}(2017)}]{Lami-thesis}%
  \BibitemOpen
  \bibfield  {author} {\bibinfo {author} {\bibfnamefont {L.}~\bibnamefont
  {Lami}},\ }\href@noop {} {\bibinfo {title} {Non-classical correlations in
  quantum mechanics and beyond}} (\bibinfo {year} {2017}),\ \Eprint
  {https://arxiv.org/abs/1803.02902} {arXiv:1803.02902} \BibitemShut {NoStop}%
\bibitem [{\citenamefont {Müller}(2021)}]{Mueller-review}%
  \BibitemOpen
  \bibfield  {author} {\bibinfo {author} {\bibfnamefont {M.}~\bibnamefont
  {Müller}},\ }\bibfield  {title} {\bibinfo {title} {Probabilistic theories
  and reconstructions of quantum theory},\ }\href
  {https://doi.org/10.21468/SciPostPhysLectNotes.28} {\bibfield  {journal}
  {\bibinfo  {journal} {SciPost Physics Lecture Notes}\ ,\ \bibinfo {pages}
  {28}} (\bibinfo {year} {2021})},\ \Eprint {https://arxiv.org/abs/2011.01286}
  {arXiv:2011.01286} \BibitemShut {NoStop}%
\bibitem [{\citenamefont {Plávala}(2021)}]{Plavala-review}%
  \BibitemOpen
  \bibfield  {author} {\bibinfo {author} {\bibfnamefont {M.}~\bibnamefont
  {Plávala}},\ }\bibfield  {title} {\bibinfo {title} {General probabilistic
  theories: An introduction},\ }\Eprint {https://arxiv.org/abs/2103.07469}
  {arXiv:2103.07469}  (\bibinfo {year} {2021})\BibitemShut {NoStop}%
\bibitem [{\citenamefont {Leppäjärvi}(2021)}]{Leppajarvi-thesis}%
  \BibitemOpen
  \bibfield  {author} {\bibinfo {author} {\bibfnamefont {L.}~\bibnamefont
  {Leppäjärvi}},\ }\href@noop {} {\bibinfo {title} {Measurement simulability
  and incompatibility in quantum theory and other operational theories}}
  (\bibinfo {year} {2021}),\ \Eprint {https://arxiv.org/abs/2106.03588}
  {arXiv:2106.03588} \BibitemShut {NoStop}%
\bibitem [{\citenamefont {Hardy}(2001)}]{Hardy-derivationQT}%
  \BibitemOpen
  \bibfield  {author} {\bibinfo {author} {\bibfnamefont {L.}~\bibnamefont
  {Hardy}},\ }\bibfield  {title} {\bibinfo {title} {Quantum theory from five
  reasonable axioms},\ }\Eprint {https://arxiv.org/abs/quant-ph/0101012}
  {arXiv:quant-ph/0101012}  (\bibinfo {year} {2001})\BibitemShut {NoStop}%
\bibitem [{\citenamefont {Masanes}\ and\ \citenamefont
  {M{\"{u}}ller}(2011)}]{MasanesMuller-derivationQT}%
  \BibitemOpen
  \bibfield  {author} {\bibinfo {author} {\bibfnamefont {L.}~\bibnamefont
  {Masanes}}\ and\ \bibinfo {author} {\bibfnamefont {M.~P.}\ \bibnamefont
  {M{\"{u}}ller}},\ }\bibfield  {title} {\bibinfo {title} {{A derivation of
  quantum theory from physical requirements}},\ }\href
  {https://doi.org/10.1088/1367-2630/13/6/063001} {\bibfield  {journal}
  {\bibinfo  {journal} {New Journal of Physics}\ }\textbf {\bibinfo {volume}
  {13}},\ \bibinfo {pages} {063001} (\bibinfo {year} {2011})},\ \Eprint
  {https://arxiv.org/abs/1004.1483} {arXiv:1004.1483} \BibitemShut {NoStop}%
\bibitem [{\citenamefont {Chiribella}\ \emph {et~al.}(2011)\citenamefont
  {Chiribella}, \citenamefont {D'Ariano},\ and\ \citenamefont
  {Perinotti}}]{ChiribellaDArianoPerinotti-derivationQT}%
  \BibitemOpen
  \bibfield  {author} {\bibinfo {author} {\bibfnamefont {G.}~\bibnamefont
  {Chiribella}}, \bibinfo {author} {\bibfnamefont {G.~M.}\ \bibnamefont
  {D'Ariano}},\ and\ \bibinfo {author} {\bibfnamefont {P.}~\bibnamefont
  {Perinotti}},\ }\bibfield  {title} {\bibinfo {title} {{Informational
  derivation of quantum theory}},\ }\href
  {https://doi.org/10.1103/PhysRevA.84.012311} {\bibfield  {journal} {\bibinfo
  {journal} {Physical Review A}\ }\textbf {\bibinfo {volume} {84}},\ \bibinfo
  {pages} {012311} (\bibinfo {year} {2011})},\ \Eprint
  {https://arxiv.org/abs/1011.6451} {arXiv:1011.6451} \BibitemShut {NoStop}%
\bibitem [{\citenamefont {Wilce}(2018)}]{Wilce-derivationQT}%
  \BibitemOpen
  \bibfield  {author} {\bibinfo {author} {\bibfnamefont {A.}~\bibnamefont
  {Wilce}},\ }\bibfield  {title} {\bibinfo {title} {{A Royal Road to Quantum
  Theory (or Thereabouts)}},\ }\href {https://doi.org/10.3390/e20040227}
  {\bibfield  {journal} {\bibinfo  {journal} {Entropy}\ }\textbf {\bibinfo
  {volume} {20}},\ \bibinfo {pages} {227} (\bibinfo {year} {2018})},\ \Eprint
  {https://arxiv.org/abs/1606.09306} {arXiv:1606.09306} \BibitemShut {NoStop}%
\bibitem [{\citenamefont {Buffenoir}(2020)}]{Buffenoir-derivationQT}%
  \BibitemOpen
  \bibfield  {author} {\bibinfo {author} {\bibfnamefont {E.}~\bibnamefont
  {Buffenoir}},\ }\bibfield  {title} {\bibinfo {title} {Reconstructing quantum
  theory from its possibilistic operational formalism},\ }\Eprint
  {https://arxiv.org/abs/2012.05830} {arXiv:2012.05830}  (\bibinfo {year}
  {2020})\BibitemShut {NoStop}%
\bibitem [{\citenamefont {Stratonovich}(1957)}]{stratonovich1957distributions}%
  \BibitemOpen
  \bibfield  {author} {\bibinfo {author} {\bibfnamefont {R.~L.}\ \bibnamefont
  {Stratonovich}},\ }\bibfield  {title} {\bibinfo {title} {On distributions in
  representation space},\ }\href@noop {} {\bibfield  {journal} {\bibinfo
  {journal} {Soviet Physics, JETP}\ }\textbf {\bibinfo {volume} {4}},\ \bibinfo
  {pages} {891} (\bibinfo {year} {1957})}\BibitemShut {NoStop}%
\bibitem [{\citenamefont {Brif}\ and\ \citenamefont {Mann}(1998)}]{Brif1998}%
  \BibitemOpen
  \bibfield  {author} {\bibinfo {author} {\bibfnamefont {C.}~\bibnamefont
  {Brif}}\ and\ \bibinfo {author} {\bibfnamefont {A.}~\bibnamefont {Mann}},\
  }\bibfield  {title} {\bibinfo {title} {A general theory of phase-space
  quasiprobability distributions},\ }\href
  {https://doi.org/10.1088/0305-4470/31/1/002} {\bibfield  {journal} {\bibinfo
  {journal} {Journal of Physics A: Mathematical and General}\ }\textbf
  {\bibinfo {volume} {31}},\ \bibinfo {pages} {L9–L17} (\bibinfo {year}
  {1998})}\BibitemShut {NoStop}%
\bibitem [{\citenamefont {Busch}\ \emph {et~al.}(2013)\citenamefont {Busch},
  \citenamefont {Heinosaari}, \citenamefont {Schultz},\ and\ \citenamefont
  {Stevens}}]{BuschHeinosaariSchultzStevens-compatibility}%
  \BibitemOpen
  \bibfield  {author} {\bibinfo {author} {\bibfnamefont {P.}~\bibnamefont
  {Busch}}, \bibinfo {author} {\bibfnamefont {T.}~\bibnamefont {Heinosaari}},
  \bibinfo {author} {\bibfnamefont {J.}~\bibnamefont {Schultz}},\ and\ \bibinfo
  {author} {\bibfnamefont {N.}~\bibnamefont {Stevens}},\ }\bibfield  {title}
  {\bibinfo {title} {{Comparing the degrees of incompatibility inherent in
  probabilistic physical theories}},\ }\href
  {https://doi.org/10.1209/0295-5075/103/10002} {\bibfield  {journal} {\bibinfo
   {journal} {Europhysics Lett.}\ }\textbf {\bibinfo {volume} {103}},\ \bibinfo
  {pages} {10002} (\bibinfo {year} {2013})},\ \Eprint
  {https://arxiv.org/abs/1210.4142} {arXiv:1210.4142} \BibitemShut {NoStop}%
\bibitem [{\citenamefont {Heinosaari}\ \emph {et~al.}(2015)\citenamefont
  {Heinosaari}, \citenamefont {Miyadera},\ and\ \citenamefont
  {Ziman}}]{HeinosaariMiyaderaZiman-compatibility}%
  \BibitemOpen
  \bibfield  {author} {\bibinfo {author} {\bibfnamefont {T.}~\bibnamefont
  {Heinosaari}}, \bibinfo {author} {\bibfnamefont {T.}~\bibnamefont
  {Miyadera}},\ and\ \bibinfo {author} {\bibfnamefont {M.}~\bibnamefont
  {Ziman}},\ }\bibfield  {title} {\bibinfo {title} {An invitation to quantum
  incompatibility},\ }\href {https://doi.org/10.1088/1751-8113/49/12/123001}
  {\bibfield  {journal} {\bibinfo  {journal} {J. Phys. A Math. Theor.}\
  }\textbf {\bibinfo {volume} {49}},\ \bibinfo {pages} {123001} (\bibinfo
  {year} {2015})},\ \Eprint {https://arxiv.org/abs/1511.07548}
  {arXiv:1511.07548} \BibitemShut {NoStop}%
\bibitem [{\citenamefont {Filippov}\ \emph {et~al.}(2017)\citenamefont
  {Filippov}, \citenamefont {Heinosaari},\ and\ \citenamefont
  {Lepp{\"{a}}j{\"{a}}rvi}}]{FilippovHeinosaariLeppajarvi-compatibility}%
  \BibitemOpen
  \bibfield  {author} {\bibinfo {author} {\bibfnamefont {S.~N.}\ \bibnamefont
  {Filippov}}, \bibinfo {author} {\bibfnamefont {T.}~\bibnamefont
  {Heinosaari}},\ and\ \bibinfo {author} {\bibfnamefont {L.}~\bibnamefont
  {Lepp{\"{a}}j{\"{a}}rvi}},\ }\bibfield  {title} {\bibinfo {title} {{Necessary
  condition for incompatibility of observables in general probabilistic
  theories}},\ }\href
  {https://journals.aps.org/pra/abstract/10.1103/PhysRevA.95.032127} {\bibfield
   {journal} {\bibinfo  {journal} {Phys. Rev. A}\ }\textbf {\bibinfo {volume}
  {95}} (\bibinfo {year} {2017})},\ \Eprint {https://arxiv.org/abs/1609.08416}
  {arXiv:1609.08416} \BibitemShut {NoStop}%
\bibitem [{\citenamefont {Jen{\v{c}}ov{\'{a}}}(2018)}]{Jencova-compatibility}%
  \BibitemOpen
  \bibfield  {author} {\bibinfo {author} {\bibfnamefont {A.}~\bibnamefont
  {Jen{\v{c}}ov{\'{a}}}},\ }\bibfield  {title} {\bibinfo {title} {{Incompatible
  measurements in a class of general probabilistic theories}},\ }\href
  {https://doi.org/10.1103/PhysRevA.98.012133} {\bibfield  {journal} {\bibinfo
  {journal} {Physical Review A}\ }\textbf {\bibinfo {volume} {98}},\ \bibinfo
  {pages} {012133} (\bibinfo {year} {2018})},\ \Eprint
  {https://arxiv.org/abs/1705.08008} {arXiv:1705.08008} \BibitemShut {NoStop}%
\bibitem [{\citenamefont {Kuramochi}(2020)}]{Kuramochi-simplex}%
  \BibitemOpen
  \bibfield  {author} {\bibinfo {author} {\bibfnamefont {Y.}~\bibnamefont
  {Kuramochi}},\ }\bibfield  {title} {\bibinfo {title} {{Compatibility of any
  pair of 2-outcome measurements characterizes the Choquet simplex}},\ }\href
  {https://link.springer.com/article/10.1007/s11117-020-00742-0} {\bibfield
  {journal} {\bibinfo  {journal} {Positivity}\ } (\bibinfo {year} {2020})},\
  \Eprint {https://arxiv.org/abs/1912.00563} {arXiv:1912.00563} \BibitemShut
  {NoStop}%
\bibitem [{\citenamefont {Bluhm}\ \emph {et~al.}(2022)\citenamefont {Bluhm},
  \citenamefont {Jenčová},\ and\ \citenamefont
  {Nechita}}]{BluhmJencovaNechita-compatibility}%
  \BibitemOpen
  \bibfield  {author} {\bibinfo {author} {\bibfnamefont {A.}~\bibnamefont
  {Bluhm}}, \bibinfo {author} {\bibfnamefont {A.}~\bibnamefont {Jenčová}},\
  and\ \bibinfo {author} {\bibfnamefont {I.}~\bibnamefont {Nechita}},\
  }\bibfield  {title} {\bibinfo {title} {Incompatibility in general
  probabilistic theories, generalized spectrahedra, and tensor norms},\ }\href
  {https://doi.org/10.1007/s00220-022-04379-w} {\bibfield  {journal} {\bibinfo
  {journal} {Communications in Mathematical Physics}\ }\textbf {\bibinfo
  {volume} {393}},\ \bibinfo {pages} {1125} (\bibinfo {year} {2022})},\ \Eprint
  {https://arxiv.org/abs/2011.06497} {arXiv:2011.06497} \BibitemShut {NoStop}%
\bibitem [{\citenamefont {Barnum}\ \emph {et~al.}(2007)\citenamefont {Barnum},
  \citenamefont {Barrett}, \citenamefont {Leifer},\ and\ \citenamefont
  {Wilce}}]{BarnumBarrettleiferWilce-noBroadcasting}%
  \BibitemOpen
  \bibfield  {author} {\bibinfo {author} {\bibfnamefont {H.}~\bibnamefont
  {Barnum}}, \bibinfo {author} {\bibfnamefont {J.}~\bibnamefont {Barrett}},
  \bibinfo {author} {\bibfnamefont {M.}~\bibnamefont {Leifer}},\ and\ \bibinfo
  {author} {\bibfnamefont {A.}~\bibnamefont {Wilce}},\ }\bibfield  {title}
  {\bibinfo {title} {Generalized no-broadcasting theorem},\ }\href
  {https://doi.org/10.1103/PhysRevLett.99.240501} {\bibfield  {journal}
  {\bibinfo  {journal} {Physical Review Letters}\ }\textbf {\bibinfo {volume}
  {99}},\ \bibinfo {pages} {240501} (\bibinfo {year} {2007})},\ \Eprint
  {https://arxiv.org/abs/0707.0620} {arXiv:0707.0620} \BibitemShut {NoStop}%
\bibitem [{\citenamefont {Heinosaari}(2016)}]{Heinosaari-compatBroadcast}%
  \BibitemOpen
  \bibfield  {author} {\bibinfo {author} {\bibfnamefont {T.}~\bibnamefont
  {Heinosaari}},\ }\bibfield  {title} {\bibinfo {title} {{Simultaneous
  measurement of two quantum observables: Compatibility, broadcasting, and
  in-between}},\ }\href {https://doi.org/10.1103/PhysRevA.93.042118} {\bibfield
   {journal} {\bibinfo  {journal} {Physical Review A}\ }\textbf {\bibinfo
  {volume} {93}},\ \bibinfo {pages} {042118} (\bibinfo {year} {2016})},\
  \Eprint {https://arxiv.org/abs/1604.00775} {arXiv:1604.00775} \BibitemShut
  {NoStop}%
\bibitem [{\citenamefont {Uola}\ \emph {et~al.}(2020)\citenamefont {Uola},
  \citenamefont {Costa}, \citenamefont {Nguyen},\ and\ \citenamefont
  {G{\"{u}}hne}}]{UolaCostaNguyenGuhne-steering}%
  \BibitemOpen
  \bibfield  {author} {\bibinfo {author} {\bibfnamefont {R.}~\bibnamefont
  {Uola}}, \bibinfo {author} {\bibfnamefont {A.~C.~S.}\ \bibnamefont {Costa}},
  \bibinfo {author} {\bibfnamefont {H.~C.}\ \bibnamefont {Nguyen}},\ and\
  \bibinfo {author} {\bibfnamefont {O.}~\bibnamefont {G{\"{u}}hne}},\
  }\bibfield  {title} {\bibinfo {title} {{Quantum steering}},\ }\href
  {https://doi.org/10.1103/RevModPhys.92.015001} {\bibfield  {journal}
  {\bibinfo  {journal} {Reviews of Modern Physics}\ }\textbf {\bibinfo {volume}
  {92}},\ \bibinfo {pages} {015001} (\bibinfo {year} {2020})},\ \Eprint
  {https://arxiv.org/abs/1903.06663} {arXiv:1903.06663} \BibitemShut {NoStop}%
\bibitem [{\citenamefont {Hoban}\ and\ \citenamefont
  {Sainz}(2018)}]{HobanSainz-channels}%
  \BibitemOpen
  \bibfield  {author} {\bibinfo {author} {\bibfnamefont {M.~J.}\ \bibnamefont
  {Hoban}}\ and\ \bibinfo {author} {\bibfnamefont {A.~B.}\ \bibnamefont
  {Sainz}},\ }\bibfield  {title} {\bibinfo {title} {{A channel-based framework
  for steering, non-locality and beyond}},\ }\href
  {https://doi.org/10.1088/1367-2630/aabea8} {\bibfield  {journal} {\bibinfo
  {journal} {New J. Phys.}\ }\textbf {\bibinfo {volume} {20}},\ \bibinfo
  {pages} {053048} (\bibinfo {year} {2018})},\ \Eprint
  {https://arxiv.org/abs/1708.00750} {arXiv:1708.00750} \BibitemShut {NoStop}%
\bibitem [{\citenamefont {Cavalcanti}\ \emph {et~al.}(2022)\citenamefont
  {Cavalcanti}, \citenamefont {Selby}, \citenamefont {Sikora}, \citenamefont
  {Galley},\ and\ \citenamefont
  {Sainz}}]{CavalcantiSelbySikoraGalleySainz-witworld}%
  \BibitemOpen
  \bibfield  {author} {\bibinfo {author} {\bibfnamefont {P.~J.}\ \bibnamefont
  {Cavalcanti}}, \bibinfo {author} {\bibfnamefont {J.~H.}\ \bibnamefont
  {Selby}}, \bibinfo {author} {\bibfnamefont {J.}~\bibnamefont {Sikora}},
  \bibinfo {author} {\bibfnamefont {T.~D.}\ \bibnamefont {Galley}},\ and\
  \bibinfo {author} {\bibfnamefont {A.~B.}\ \bibnamefont {Sainz}},\ }\bibfield
  {title} {\bibinfo {title} {Post-quantum steering is a stronger-than-quantum
  resource for information processing},\ }\href
  {https://doi.org/10.1038/s41534-022-00574-8} {\bibfield  {journal} {\bibinfo
  {journal} {npj Quantum Information}\ }\textbf {\bibinfo {volume} {8}},\
  \bibinfo {pages} {76} (\bibinfo {year} {2022})},\ \Eprint
  {https://arxiv.org/abs/2102.06581} {arXiv:2102.06581} \BibitemShut {NoStop}%
\bibitem [{\citenamefont {Brunner}\ \emph {et~al.}(2014)\citenamefont
  {Brunner}, \citenamefont {Cavalcanti}, \citenamefont {Pironio}, \citenamefont
  {Scarani},\ and\ \citenamefont
  {Wehner}}]{BrunnerCavalcantiPironioScaraniWehner-BellIneq}%
  \BibitemOpen
  \bibfield  {author} {\bibinfo {author} {\bibfnamefont {N.}~\bibnamefont
  {Brunner}}, \bibinfo {author} {\bibfnamefont {D.}~\bibnamefont {Cavalcanti}},
  \bibinfo {author} {\bibfnamefont {S.}~\bibnamefont {Pironio}}, \bibinfo
  {author} {\bibfnamefont {V.}~\bibnamefont {Scarani}},\ and\ \bibinfo {author}
  {\bibfnamefont {S.}~\bibnamefont {Wehner}},\ }\bibfield  {title} {\bibinfo
  {title} {{Bell nonlocality}},\ }\href
  {https://doi.org/10.1103/RevModPhys.86.419} {\bibfield  {journal} {\bibinfo
  {journal} {Rev. Mod. Phys.}\ }\textbf {\bibinfo {volume} {86}},\ \bibinfo
  {pages} {419} (\bibinfo {year} {2014})},\ \Eprint
  {https://arxiv.org/abs/1303.2849} {arXiv:1303.2849} \BibitemShut {NoStop}%
\bibitem [{\citenamefont {Rosset}\ \emph {et~al.}(2014)\citenamefont {Rosset},
  \citenamefont {Bancal},\ and\ \citenamefont
  {Gisin}}]{RossetBancalGisin-BellIneq}%
  \BibitemOpen
  \bibfield  {author} {\bibinfo {author} {\bibfnamefont {D.}~\bibnamefont
  {Rosset}}, \bibinfo {author} {\bibfnamefont {J.~D.}\ \bibnamefont {Bancal}},\
  and\ \bibinfo {author} {\bibfnamefont {N.}~\bibnamefont {Gisin}},\ }\bibfield
   {title} {\bibinfo {title} {Classifying 50 years of bell inequalities},\
  }\href {https://iopscience.iop.org/article/10.1088/1751-8113/47/42/424022}
  {\bibfield  {journal} {\bibinfo  {journal} {Journal of Physics A:
  Mathematical and Theoretical}\ }\textbf {\bibinfo {volume} {47}} (\bibinfo
  {year} {2014})},\ \Eprint {https://arxiv.org/abs/1404.1306} {arXiv:1404.1306}
  \BibitemShut {NoStop}%
\bibitem [{\citenamefont {Groenewold}(1946)}]{Groenewold-QM}%
  \BibitemOpen
  \bibfield  {author} {\bibinfo {author} {\bibfnamefont {H.}~\bibnamefont
  {Groenewold}},\ }\bibfield  {title} {\bibinfo {title} {{On the principles of
  elementary quantum mechanics}},\ }\href
  {https://doi.org/10.1016/S0031-8914(46)80059-4} {\bibfield  {journal}
  {\bibinfo  {journal} {Physica}\ }\textbf {\bibinfo {volume} {12}},\ \bibinfo
  {pages} {405} (\bibinfo {year} {1946})}\BibitemShut {NoStop}%
\bibitem [{\citenamefont {Moyal}(1949)}]{Moyal-WignerFunctions}%
  \BibitemOpen
  \bibfield  {author} {\bibinfo {author} {\bibfnamefont {J.~E.}\ \bibnamefont
  {Moyal}},\ }\bibfield  {title} {\bibinfo {title} {{Quantum mechanics as a
  statistical theory}},\ }\href {https://doi.org/10.1017/S0305004100000487}
  {\bibfield  {journal} {\bibinfo  {journal} {Mathematical Proceedings of the
  Cambridge Philosophical Society}\ }\textbf {\bibinfo {volume} {45}},\
  \bibinfo {pages} {99} (\bibinfo {year} {1949})}\BibitemShut {NoStop}%
\bibitem [{\citenamefont {M{\"{u}}ckenheim}\ \emph {et~al.}(1986)\citenamefont
  {M{\"{u}}ckenheim}, \citenamefont {Ludwig}, \citenamefont {Dewdney},
  \citenamefont {Holland}, \citenamefont {Kyprianidis}, \citenamefont {Vigier},
  \citenamefont {{Cufaro Petroni}}, \citenamefont {Bartlett},\ and\
  \citenamefont {Jaynes}}]{MuckenheimLudwigEtAl-WignerFunctions}%
  \BibitemOpen
  \bibfield  {author} {\bibinfo {author} {\bibfnamefont {W.}~\bibnamefont
  {M{\"{u}}ckenheim}}, \bibinfo {author} {\bibfnamefont {G.}~\bibnamefont
  {Ludwig}}, \bibinfo {author} {\bibfnamefont {C.}~\bibnamefont {Dewdney}},
  \bibinfo {author} {\bibfnamefont {P.}~\bibnamefont {Holland}}, \bibinfo
  {author} {\bibfnamefont {A.}~\bibnamefont {Kyprianidis}}, \bibinfo {author}
  {\bibfnamefont {J.}~\bibnamefont {Vigier}}, \bibinfo {author} {\bibfnamefont
  {N.}~\bibnamefont {{Cufaro Petroni}}}, \bibinfo {author} {\bibfnamefont
  {M.}~\bibnamefont {Bartlett}},\ and\ \bibinfo {author} {\bibfnamefont
  {E.}~\bibnamefont {Jaynes}},\ }\bibfield  {title} {\bibinfo {title} {{A
  review of extended probabilities}},\ }\href
  {https://doi.org/10.1016/0370-1573(86)90110-9} {\bibfield  {journal}
  {\bibinfo  {journal} {Physics Reports}\ }\textbf {\bibinfo {volume} {133}},\
  \bibinfo {pages} {337} (\bibinfo {year} {1986})}\BibitemShut {NoStop}%
\bibitem [{\citenamefont {Hillery}\ \emph {et~al.}(1984)\citenamefont
  {Hillery}, \citenamefont {O'Connell}, \citenamefont {Scully},\ and\
  \citenamefont {Wigner}}]{HilleryConnellScullyWigner-WignerFunctions}%
  \BibitemOpen
  \bibfield  {author} {\bibinfo {author} {\bibfnamefont {M.}~\bibnamefont
  {Hillery}}, \bibinfo {author} {\bibfnamefont {R.}~\bibnamefont {O'Connell}},
  \bibinfo {author} {\bibfnamefont {M.}~\bibnamefont {Scully}},\ and\ \bibinfo
  {author} {\bibfnamefont {E.}~\bibnamefont {Wigner}},\ }\bibfield  {title}
  {\bibinfo {title} {{Distribution functions in physics: Fundamentals}},\
  }\href {https://doi.org/10.1016/0370-1573(84)90160-1} {\bibfield  {journal}
  {\bibinfo  {journal} {Physics Reports}\ }\textbf {\bibinfo {volume} {106}},\
  \bibinfo {pages} {121} (\bibinfo {year} {1984})}\BibitemShut {NoStop}%
\bibitem [{\citenamefont {Case}(2008)}]{Case-WignerFunctions}%
  \BibitemOpen
  \bibfield  {author} {\bibinfo {author} {\bibfnamefont {W.~B.}\ \bibnamefont
  {Case}},\ }\bibfield  {title} {\bibinfo {title} {{Wigner functions and Weyl
  transforms for pedestrians}},\ }\href {https://doi.org/10.1119/1.2957889}
  {\bibfield  {journal} {\bibinfo  {journal} {American Journal of Physics}\
  }\textbf {\bibinfo {volume} {76}},\ \bibinfo {pages} {937} (\bibinfo {year}
  {2008})}\BibitemShut {NoStop}%
\bibitem [{\citenamefont {Baker}(1958)}]{Baker-WignerFunctions}%
  \BibitemOpen
  \bibfield  {author} {\bibinfo {author} {\bibfnamefont {G.~A.}\ \bibnamefont
  {Baker}},\ }\bibfield  {title} {\bibinfo {title} {{Formulation of quantum
  mechanics based on the quasi-probability distribution induced on phase
  space}},\ }\href {https://doi.org/10.1103/PhysRev.109.2198} {\bibfield
  {journal} {\bibinfo  {journal} {Physical Review}\ }\textbf {\bibinfo {volume}
  {109}},\ \bibinfo {pages} {2198} (\bibinfo {year} {1958})}\BibitemShut
  {NoStop}%
\bibitem [{\citenamefont {Cohen}(1966)}]{Cohen-WignerFunctions}%
  \BibitemOpen
  \bibfield  {author} {\bibinfo {author} {\bibfnamefont {L.}~\bibnamefont
  {Cohen}},\ }\bibfield  {title} {\bibinfo {title} {Generalized phase-space
  distribution functions},\ }\href {https://doi.org/10.1063/1.1931206}
  {\bibfield  {journal} {\bibinfo  {journal} {Journal of Mathematical Physics}\
  }\textbf {\bibinfo {volume} {7}},\ \bibinfo {pages} {781} (\bibinfo {year}
  {1966})}\BibitemShut {NoStop}%
\bibitem [{\citenamefont {Rosen}(1965)}]{Rosen-WignerFunctions}%
  \BibitemOpen
  \bibfield  {author} {\bibinfo {author} {\bibfnamefont {G.}~\bibnamefont
  {Rosen}},\ }\bibfield  {title} {\bibinfo {title} {{Mathematical formalism for
  probabilistic dynamical theories}},\ }\href
  {https://doi.org/10.1016/0016-0032(65)90275-9} {\bibfield  {journal}
  {\bibinfo  {journal} {Journal of the Franklin Institute}\ }\textbf {\bibinfo
  {volume} {279}},\ \bibinfo {pages} {457} (\bibinfo {year}
  {1965})}\BibitemShut {NoStop}%
\bibitem [{\citenamefont {Weinbub}\ and\ \citenamefont
  {Ferry}(2018)}]{WeinbubFerry-WignerFunctions}%
  \BibitemOpen
  \bibfield  {author} {\bibinfo {author} {\bibfnamefont {J.}~\bibnamefont
  {Weinbub}}\ and\ \bibinfo {author} {\bibfnamefont {D.~K.}\ \bibnamefont
  {Ferry}},\ }\bibfield  {title} {\bibinfo {title} {{Recent advances in Wigner
  function approaches}},\ }\href {https://doi.org/10.1063/1.5046663} {\bibfield
   {journal} {\bibinfo  {journal} {Applied Physics Reviews}\ }\textbf {\bibinfo
  {volume} {5}},\ \bibinfo {pages} {041104} (\bibinfo {year}
  {2018})}\BibitemShut {NoStop}%
\bibitem [{\citenamefont {Spekkens}(2014)}]{Spekkens-epistricted}%
  \BibitemOpen
  \bibfield  {author} {\bibinfo {author} {\bibfnamefont {R.~W.}\ \bibnamefont
  {Spekkens}},\ }\bibfield  {title} {\bibinfo {title} {{Quasi-quantization:
  classical statistical theories with an epistemic restriction}},\ }\href
  {https://doi.org/10.1007/978-94-017-7303-4} {\bibfield  {journal} {\bibinfo
  {journal} {Fundamental Theories of Physics}\ }\textbf {\bibinfo {volume}
  {181}},\ \bibinfo {pages} {83} (\bibinfo {year} {2014})},\ \Eprint
  {https://arxiv.org/abs/1409.5041} {arXiv:1409.5041} \BibitemShut {NoStop}%
\bibitem [{\citenamefont {Ghai}\ \emph {et~al.}(2023)\citenamefont {Ghai},
  \citenamefont {Sharma},\ and\ \citenamefont
  {Ghosh}}]{GhaiSharmaGhos-wignerIncompatibility}%
  \BibitemOpen
  \bibfield  {author} {\bibinfo {author} {\bibfnamefont {J.}~\bibnamefont
  {Ghai}}, \bibinfo {author} {\bibfnamefont {G.}~\bibnamefont {Sharma}},\ and\
  \bibinfo {author} {\bibfnamefont {S.}~\bibnamefont {Ghosh}},\ }\bibfield
  {title} {\bibinfo {title} {Negativity of {W}igner distribution function as a
  measure of incompatibility},\ }\Eprint {https://arxiv.org/abs/2306.07917}
  {arXiv:2306.07917}  (\bibinfo {year} {2023})\BibitemShut {NoStop}%
\bibitem [{\citenamefont {Tavakoli}\ and\ \citenamefont
  {Uola}(2020)}]{tavakoli2020measurement}%
  \BibitemOpen
  \bibfield  {author} {\bibinfo {author} {\bibfnamefont {A.}~\bibnamefont
  {Tavakoli}}\ and\ \bibinfo {author} {\bibfnamefont {R.}~\bibnamefont
  {Uola}},\ }\bibfield  {title} {\bibinfo {title} {Measurement incompatibility
  and steering are necessary and sufficient for operational contextuality},\
  }\href {https://doi.org/10.1103/PhysRevResearch.2.013011} {\bibfield
  {journal} {\bibinfo  {journal} {Physical Review Research}\ }\textbf {\bibinfo
  {volume} {2}},\ \bibinfo {pages} {013011} (\bibinfo {year}
  {2020})}\BibitemShut {NoStop}%
\bibitem [{\citenamefont {Pl{\'a}vala}(2022)}]{plavala2022incompatibility}%
  \BibitemOpen
  \bibfield  {author} {\bibinfo {author} {\bibfnamefont {M.}~\bibnamefont
  {Pl{\'a}vala}},\ }\bibfield  {title} {\bibinfo {title} {Incompatibility in
  restricted operational theories: connecting contextuality and steering},\
  }\href {https://doi.org/10.1088/1751-8121/ac5afe} {\bibfield  {journal}
  {\bibinfo  {journal} {Journal of Physics A: Mathematical and Theoretical}\
  }\textbf {\bibinfo {volume} {55}},\ \bibinfo {pages} {174001} (\bibinfo
  {year} {2022})}\BibitemShut {NoStop}%
\bibitem [{\citenamefont {Singer}\ and\ \citenamefont
  {Stulpe}(1992)}]{SingerStulpe-phaseSpace}%
  \BibitemOpen
  \bibfield  {author} {\bibinfo {author} {\bibfnamefont {M.}~\bibnamefont
  {Singer}}\ and\ \bibinfo {author} {\bibfnamefont {W.}~\bibnamefont
  {Stulpe}},\ }\bibfield  {title} {\bibinfo {title} {{Phase‐space
  representations of general statistical physical theories}},\ }\href
  {https://doi.org/10.1063/1.529975} {\bibfield  {journal} {\bibinfo  {journal}
  {Journal of Mathematical Physics}\ }\textbf {\bibinfo {volume} {33}},\
  \bibinfo {pages} {131} (\bibinfo {year} {1992})}\BibitemShut {NoStop}%
\bibitem [{\citenamefont {Wigner}(1971)}]{Wigner-uniqness}%
  \BibitemOpen
  \bibfield  {author} {\bibinfo {author} {\bibfnamefont {E.}~\bibnamefont
  {Wigner}},\ }\bibfield  {title} {\bibinfo {title} {Perspectives in quantum
  theory},\ }\href@noop {} {\bibfield  {journal} {\bibinfo  {journal} {Dover,
  New York}\ }\textbf {\bibinfo {volume} {25}} (\bibinfo {year}
  {1971})}\BibitemShut {NoStop}%
\bibitem [{\citenamefont {O'Connell}\ and\ \citenamefont
  {Wigner}(1981)}]{OConnellWigner-uniqness}%
  \BibitemOpen
  \bibfield  {author} {\bibinfo {author} {\bibfnamefont {R.}~\bibnamefont
  {O'Connell}}\ and\ \bibinfo {author} {\bibfnamefont {E.}~\bibnamefont
  {Wigner}},\ }\bibfield  {title} {\bibinfo {title} {Quantum-mechanical
  distribution functions: Conditions for uniqueness},\ }\href
  {https://doi.org/10.1016/0375-9601(81)90870-7} {\bibfield  {journal}
  {\bibinfo  {journal} {Physics Letters A}\ }\textbf {\bibinfo {volume} {83}},\
  \bibinfo {pages} {145} (\bibinfo {year} {1981})}\BibitemShut {NoStop}%
\bibitem [{\citenamefont {Marchiolli}\ and\ \citenamefont
  {Galetti}(2019)}]{Marchiolli2019}%
  \BibitemOpen
  \bibfield  {author} {\bibinfo {author} {\bibfnamefont {M.~A.}\ \bibnamefont
  {Marchiolli}}\ and\ \bibinfo {author} {\bibfnamefont {D.}~\bibnamefont
  {Galetti}},\ }\bibfield  {title} {\bibinfo {title} {On the discrete {W}igner
  function for $\mathrm{SU(N)}$},\ }\href
  {https://doi.org/10.1088/1751-8121/ab3bab} {\bibfield  {journal} {\bibinfo
  {journal} {Journal of Physics A: Mathematical and Theoretical}\ }\textbf
  {\bibinfo {volume} {52}},\ \bibinfo {pages} {405305} (\bibinfo {year}
  {2019})}\BibitemShut {NoStop}%
\bibitem [{\citenamefont {Ruzzi}\ \emph {et~al.}(2006)\citenamefont {Ruzzi},
  \citenamefont {Marchiolli}, \citenamefont {Silva},\ and\ \citenamefont
  {Galetti}}]{Ruzzi2006}%
  \BibitemOpen
  \bibfield  {author} {\bibinfo {author} {\bibfnamefont {M.}~\bibnamefont
  {Ruzzi}}, \bibinfo {author} {\bibfnamefont {M.~A.}\ \bibnamefont
  {Marchiolli}}, \bibinfo {author} {\bibfnamefont {E.~C.~d.}\ \bibnamefont
  {Silva}},\ and\ \bibinfo {author} {\bibfnamefont {D.}~\bibnamefont
  {Galetti}},\ }\bibfield  {title} {\bibinfo {title} {Quasiprobability
  distribution functions for periodic phase spaces: I. theoretical aspects},\
  }\href {https://doi.org/10.1088/0305-4470/39/31/016} {\bibfield  {journal}
  {\bibinfo  {journal} {Journal of Physics A: Mathematical and General}\
  }\textbf {\bibinfo {volume} {39}},\ \bibinfo {pages} {9881–9890} (\bibinfo
  {year} {2006})}\BibitemShut {NoStop}%
\bibitem [{\citenamefont {Buscemi}\ \emph {et~al.}(2023)\citenamefont
  {Buscemi}, \citenamefont {Kobayashi},\ and\ \citenamefont
  {Minagawa}}]{BuscemiKobayashiMinagawa-sharpness}%
  \BibitemOpen
  \bibfield  {author} {\bibinfo {author} {\bibfnamefont {F.}~\bibnamefont
  {Buscemi}}, \bibinfo {author} {\bibfnamefont {K.}~\bibnamefont {Kobayashi}},\
  and\ \bibinfo {author} {\bibfnamefont {S.}~\bibnamefont {Minagawa}},\
  }\bibfield  {title} {\bibinfo {title} {A complete and operational resource
  theory of measurement sharpness},\ }\Eprint
  {https://arxiv.org/abs/2303.07737} {arXiv:2303.07737}  (\bibinfo {year}
  {2023})\BibitemShut {NoStop}%
\bibitem [{\citenamefont {Farkas}\ \emph {et~al.}(2022)\citenamefont {Farkas},
  \citenamefont {Kaniewski},\ and\ \citenamefont
  {Nayak}}]{FarkasKaniewskiNayak-MUM}%
  \BibitemOpen
  \bibfield  {author} {\bibinfo {author} {\bibfnamefont {M.}~\bibnamefont
  {Farkas}}, \bibinfo {author} {\bibfnamefont {J.}~\bibnamefont {Kaniewski}},\
  and\ \bibinfo {author} {\bibfnamefont {A.}~\bibnamefont {Nayak}},\ }\bibfield
   {title} {\bibinfo {title} {Mutually unbiased measurements, {H}adamard
  matrices, and superdense coding},\ }\Eprint
  {https://arxiv.org/abs/2204.11886} {arXiv:2204.11886}  (\bibinfo {year}
  {2022})\BibitemShut {NoStop}%
\bibitem [{\citenamefont {Heinosaari}\ \emph {et~al.}(2008)\citenamefont
  {Heinosaari}, \citenamefont {Reitzner},\ and\ \citenamefont
  {Stano}}]{HeinosaariReitznerStano-compatibility}%
  \BibitemOpen
  \bibfield  {author} {\bibinfo {author} {\bibfnamefont {T.}~\bibnamefont
  {Heinosaari}}, \bibinfo {author} {\bibfnamefont {D.}~\bibnamefont
  {Reitzner}},\ and\ \bibinfo {author} {\bibfnamefont {P.}~\bibnamefont
  {Stano}},\ }\bibfield  {title} {\bibinfo {title} {{Notes on joint
  measurability of quantum observables}},\ }\href
  {https://doi.org/10.1007/s10701-008-9256-7} {\bibfield  {journal} {\bibinfo
  {journal} {Foundations of Physics}\ }\textbf {\bibinfo {volume} {38}},\
  \bibinfo {pages} {1133} (\bibinfo {year} {2008})},\ \Eprint
  {https://arxiv.org/abs/0811.0783} {arXiv:0811.0783} \BibitemShut {NoStop}%
\bibitem [{\citenamefont {Zhu}(2016)}]{Zhu2016}%
  \BibitemOpen
  \bibfield  {author} {\bibinfo {author} {\bibfnamefont {H.}~\bibnamefont
  {Zhu}},\ }\bibfield  {title} {\bibinfo {title} {Permutation symmetry
  determines the discrete {W}igner function},\ }\bibfield  {journal} {\bibinfo
  {journal} {Physical Review Letters}\ }\textbf {\bibinfo {volume} {116}},\
  \href {https://doi.org/10.1103/physrevlett.116.040501}
  {10.1103/physrevlett.116.040501} (\bibinfo {year} {2016})\BibitemShut
  {NoStop}%
\bibitem [{\citenamefont {Raussendorf}\ \emph {et~al.}(2023)\citenamefont
  {Raussendorf}, \citenamefont {Okay}, \citenamefont {Zurel},\ and\
  \citenamefont {Feldmann}}]{Raussendorf2023}%
  \BibitemOpen
  \bibfield  {author} {\bibinfo {author} {\bibfnamefont {R.}~\bibnamefont
  {Raussendorf}}, \bibinfo {author} {\bibfnamefont {C.}~\bibnamefont {Okay}},
  \bibinfo {author} {\bibfnamefont {M.}~\bibnamefont {Zurel}},\ and\ \bibinfo
  {author} {\bibfnamefont {P.}~\bibnamefont {Feldmann}},\ }\bibfield  {title}
  {\bibinfo {title} {The role of cohomology in quantum computation with magic
  states},\ }\href {https://doi.org/10.22331/q-2023-04-13-979} {\bibfield
  {journal} {\bibinfo  {journal} {Quantum}\ }\textbf {\bibinfo {volume} {7}},\
  \bibinfo {pages} {979} (\bibinfo {year} {2023})}\BibitemShut {NoStop}%
\bibitem [{\citenamefont {Durt}\ \emph {et~al.}(2010)\citenamefont {Durt},
  \citenamefont {Englert}, \citenamefont {Bengtsson},\ and\ \citenamefont
  {Życzkowski}}]{DURT2010}%
  \BibitemOpen
  \bibfield  {author} {\bibinfo {author} {\bibfnamefont {T.}~\bibnamefont
  {Durt}}, \bibinfo {author} {\bibfnamefont {B.-G.}\ \bibnamefont {Englert}},
  \bibinfo {author} {\bibfnamefont {I.}~\bibnamefont {Bengtsson}},\ and\
  \bibinfo {author} {\bibfnamefont {K.}~\bibnamefont {Życzkowski}},\
  }\bibfield  {title} {\bibinfo {title} {On mutually unbiased bases},\ }\href
  {https://doi.org/10.1142/s0219749910006502} {\bibfield  {journal} {\bibinfo
  {journal} {International Journal of Quantum Information}\ }\textbf {\bibinfo
  {volume} {08}},\ \bibinfo {pages} {535–640} (\bibinfo {year}
  {2010})}\BibitemShut {NoStop}%
\bibitem [{\citenamefont {Barrett}(2007)}]{Barrett-GPTinformation}%
  \BibitemOpen
  \bibfield  {author} {\bibinfo {author} {\bibfnamefont {J.}~\bibnamefont
  {Barrett}},\ }\bibfield  {title} {\bibinfo {title} {{Information processing
  in generalized probabilistic theories}},\ }\href
  {https://doi.org/10.1103/PhysRevA.75.032304} {\bibfield  {journal} {\bibinfo
  {journal} {Physical Review A}\ }\textbf {\bibinfo {volume} {75}},\ \bibinfo
  {pages} {032304} (\bibinfo {year} {2007})},\ \Eprint
  {https://arxiv.org/abs/quant-ph/0508211} {arXiv:quant-ph/0508211}
  \BibitemShut {NoStop}%
\bibitem [{\citenamefont {Jen{\v{c}}ov{\'{a}}}\ and\ \citenamefont
  {Pl{\'{a}}vala}(2017)}]{JencovaPlavala-maxInc}%
  \BibitemOpen
  \bibfield  {author} {\bibinfo {author} {\bibfnamefont {A.}~\bibnamefont
  {Jen{\v{c}}ov{\'{a}}}}\ and\ \bibinfo {author} {\bibfnamefont
  {M.}~\bibnamefont {Pl{\'{a}}vala}},\ }\bibfield  {title} {\bibinfo {title}
  {{Conditions on the existence of maximally incompatible two-outcome
  measurements in General Probabilistic Theory}},\ }\href
  {https://doi.org/10.1103/PhysRevA.96.022113} {\bibfield  {journal} {\bibinfo
  {journal} {Phys. Rev. A}\ }\textbf {\bibinfo {volume} {96}},\ \bibinfo
  {pages} {022113} (\bibinfo {year} {2017})},\ \Eprint
  {https://arxiv.org/abs/1703.09447} {arXiv:1703.09447} \BibitemShut {NoStop}%
\bibitem [{\citenamefont {Arvidsson-Shukur}\ \emph {et~al.}(2024)\citenamefont
  {Arvidsson-Shukur}, \citenamefont {au2}, \citenamefont {Bievre},
  \citenamefont {Dressel}, \citenamefont {Jordan}, \citenamefont {Langrenez},
  \citenamefont {Lostaglio}, \citenamefont {Lundeen},\ and\ \citenamefont
  {Halpern}}]{Arvidsson24}%
  \BibitemOpen
  \bibfield  {author} {\bibinfo {author} {\bibfnamefont {D.~R.~M.}\
  \bibnamefont {Arvidsson-Shukur}}, \bibinfo {author} {\bibfnamefont {W.~F.
  B.~J.}\ \bibnamefont {au2}}, \bibinfo {author} {\bibfnamefont {S.~D.}\
  \bibnamefont {Bievre}}, \bibinfo {author} {\bibfnamefont {J.}~\bibnamefont
  {Dressel}}, \bibinfo {author} {\bibfnamefont {A.~N.}\ \bibnamefont {Jordan}},
  \bibinfo {author} {\bibfnamefont {C.}~\bibnamefont {Langrenez}}, \bibinfo
  {author} {\bibfnamefont {M.}~\bibnamefont {Lostaglio}}, \bibinfo {author}
  {\bibfnamefont {J.~S.}\ \bibnamefont {Lundeen}},\ and\ \bibinfo {author}
  {\bibfnamefont {N.~Y.}\ \bibnamefont {Halpern}},\ }\bibfield  {title}
  {\bibinfo {title} {{Properties and Applications of the Kirkwood-Dirac
  Distribution}},\ }\Eprint {https://arxiv.org/abs/2403.18899}
  {arXiv:2403.18899}  (\bibinfo {year} {2024})\BibitemShut {NoStop}%
\end{thebibliography}%

\end{document}